\documentclass[twocolumn, journal]{IEEEtran}
\usepackage{amsmath, amsthm, amssymb, amsfonts, mathtools, bm, dsfont}
\usepackage{graphicx}
\usepackage{array}
\usepackage{psfrag}
\usepackage{calc}
\usepackage{float}
\usepackage{stfloats}
\usepackage{multirow}
\usepackage{subcaption}
\usepackage{enumitem}
\usepackage{microtype}
\usepackage{hyperref}
\usepackage{orcidlink}
\usepackage{pgf,tikz}
\usepackage{pgfplots}
\pgfplotsset{compat=1.17}
\usepackage{pgfplots}
\pgfplotsset{compat=1.18}
\usepackage{tikz}
\usepackage[ruled,vlined]{algorithm2e}
\SetAlFnt{\small}
\SetAlCapFnt{\small}
\SetAlCapNameFnt{\small}
\SetAlgoLined
\DontPrintSemicolon

\hypersetup{
    colorlinks=true,
    linkcolor=blue,
    citecolor=blue,
    urlcolor=blue,
}

\newtheorem{theorem}{Theorem}
\newtheorem{lemma}{Lemma}
\newtheorem{proposition}{Proposition}
\theoremstyle{definition}

\newtheorem{corollary}{Corollary}

\usetikzlibrary{svg.path}
\definecolor{orcidlogocol}{HTML}{A6CE39}
\tikzset{
  orcidlogo/.pic={
    \fill[orcidlogocol] svg{M256,128c0,70.7-57.3,128-128,128C57.3,256,0,198.7,0,128C0,57.3,57.3,0,128,0C198.7,0,256,57.3,256,128z};
    \fill[white] svg{M86.3,186.2H70.9V79.1h15.4v48.4V186.2z}
                 svg{M108.9,79.1h41.6c39.6,0,57,28.3,57,53.6c0,27.5-21.5,53.6-56.8,53.6h-41.8V79.1z M124.3,172.4h24.5c34.9,0,42.9-26.5,42.9-39.7c0-21.5-13.7-39.7-43.7-39.7h-23.7V172.4z}
                 svg{M88.7,56.8c0,5.5-4.5,10.1-10.1,10.1c-5.6,0-10.1-4.6-10.1-10.1c0-5.6,4.5-10.1,10.1-10.1C84.2,46.7,88.7,51.3,88.7,56.8z};
  }
}

\usepackage{titlesec}
\titlespacing*{\section}{0pt}{1ex plus 0.2ex minus 0.2ex}{0.8ex plus 0.2ex minus 0.2ex}
\titlespacing*{\subsection}{0pt}{0.8ex plus 0.2ex minus 0.2ex}{0.6ex plus 0.2ex minus 0.2ex}
\titlespacing*{\subsubsection}{0pt}{0.6ex plus 0.2ex minus 0.2ex}{0.4ex plus 0.2ex minus 0.2ex}

\title{Secure High-Resolution ISAC via Multi-Layer Intelligent Metasurfaces: A Layered Optimization Framework }

\author{%
Amirhossein Taherpour~\orcidlink{0000-0003-4647-102X},%
\thanks{Amirhossein Taherpour is with the Department of Electrical Engineering, Columbia University, New York, NY, USA (e-mail: at3532@columbia.edu)} 
Abbas Taherpour~\orcidlink{0000-0003-0706-5774},~\IEEEmembership{Senior Member,~IEEE},%
\thanks{Abbas Taherpour is with the Department of Electrical Engineering, Imam Khomeini International University, Qazvin, Iran (e-mail: taherpour@ikiu.ac.ir).}
and Tamer Khattab~\orcidlink{0000-0003-2347-9555},~\IEEEmembership{Senior Member,~IEEE}%
\thanks{and Tamer Khattab is with the Department of Electrical Engineering, Qatar University, Doha, Qatar (e-mail: tkhattab@ieee.org).}
}
\date{}

\begin{document}

\maketitle
\thispagestyle{empty}

\begin{abstract}
Integrated sensing and communication (ISAC) has emerged as a pivotal technology for next-generation wireless networks, enabling simultaneous data transmission and environmental sensing. However, existing ISAC systems face fundamental limitations in achieving high-resolution sensing while maintaining robust communication security and spectral efficiency. This paper introduces a transformative approach leveraging stacked intelligent metasurfaces (SIM) to overcome these challenges. We propose a multi-functional SIM-assisted system that jointly optimizes communication secrecy and sensing accuracy through a novel layered optimization framework. Our solution employs a multi-objective optimization formulation that balances secrecy rate maximization with sensing error minimization under practical hardware constraints. The proposed layered block coordinate descent algorithm efficiently coordinates sensing configuration, secure beamforming, communication metasurface optimization, and resource allocation while ensuring robustness to channel uncertainties. Extensive simulations demonstrate significant performance gains over conventional approaches, achieving 32-61\% improvement in sensing accuracy and 15-35\% enhancement in secrecy rates while maintaining computational efficiency. This work establishes a new paradigm for secure and high-precision multi-functional wireless systems.
\end{abstract}

\begin{IEEEkeywords}
Integrated sensing and communication, stacked intelligent metasurfaces, physical layer security, multi-objective optimization, robust beamforming, Cramér-Rao bound, artificial noise, resource allocation.
\end{IEEEkeywords}
\section{Introduction}

The convergence of communication and sensing functionalities represents one of the most significant paradigm shifts in beyond-5G and 6G wireless networks \cite{liu2022survey,luo2025survey}. Integrated sensing and communication (ISAC) has garnered substantial research interest due to its potential to enable diverse applications including autonomous vehicles \cite{yan2025v2x}, smart cities \cite{wei2024iotj}, and industrial IoT \cite{alqahwachi2025dynamic}. The fundamental premise of ISAC lies in the joint utilization of spectral and hardware resources for dual purposes, thereby improving spectral efficiency and reducing infrastructure costs \cite{wen2024iscc,hassan2025review}.

Recent advances in ISAC architectures have demonstrated promising performance, yet several critical challenges remain unaddressed. Conventional ISAC systems \cite{chen2023concurrent,kim2024robust} often face inherent trade-offs between sensing resolution and communication throughput, particularly in multi-user scenarios. The work in \cite{wang2025fractional} highlighted the limitations of time-sharing approaches, while \cite{gomez2023eusipco} demonstrated the vulnerability of ISAC systems to security threats. Physical layer security has emerged as a crucial consideration, with \cite{liu2025secure,zhao2025rsma} showing that eavesdroppers can exploit sensing signals to compromise communication confidentiality.

The integration of reconfigurable intelligent surfaces (RIS) has opened new avenues for enhancing ISAC performance \cite{du2022ris,liu2024joint}. RIS-assisted ISAC systems \cite{haider2023review,magbool2025survey} have demonstrated improved coverage and beamforming gains. However, single-layer RIS architectures face fundamental limitations in wave manipulation capabilities \cite{bjornson2022spm}. Recent breakthroughs in stacked intelligent metasurfaces (SIM) \cite{liu2024sim,elhaj2024sim} have revealed unprecedented opportunities for advanced wave-based signal processing. Unlike conventional RIS, multi-layer SIM can perform volumetric analog computing, enabling sophisticated spatial filtering and beamforming \cite{li2024jointsim}.

Security considerations in ISAC systems have gained increasing attention, with \cite{hoang2025an} proposing artificial-noise-based approaches and \cite{zhang2025low} developing optimization frameworks for secrecy rate maximization. However, these methods often neglect the interplay between sensing accuracy and communication security. The work in \cite{liu2024fluid} identified fundamental performance trade-offs, while \cite{liu2022moop} proposed resource allocation strategies for balancing these competing objectives.

Channel uncertainty poses another significant challenge for practical ISAC deployment. Robust optimization techniques have been explored in \cite{kim2024robust} and \cite{zhang2023icc}, but existing approaches typically assume simplified uncertainty models. The recent work in \cite{gomez2023eusipco} introduced probabilistic constraints for outage performance, while \cite{di2020practical} addressed the impact of finite-resolution phase shifters on system performance.

Multi-functional metasurface architectures represent a promising direction for overcoming current limitations. The pioneering work in \cite{li2025conformal} demonstrated dynamic beamforming capabilities, and \cite{zhang2023icc} achieved simultaneous multi-target sensing and channel estimation in RIS-assisted systems. However, these approaches lack comprehensive optimization frameworks that jointly address communication security, sensing accuracy, and resource efficiency. The survey in \cite{magbool2025survey} identified the need for integrated optimization methodologies as a critical research gap.

Recent developments in optimization theory have provided new tools for addressing ISAC challenges. Block coordinate descent and distributed methods \cite{sakurama2017distributed} have shown promise for complex large-scale problems, while multi-objective and successive convex approximation techniques \cite{liu2022moop,huang2019energy,chen2024waveform} enable efficient handling of intricate constraints. Despite these advances, several fundamental questions remain unanswered. How can we achieve simultaneous high-precision sensing and secure communication in practical scenarios with hardware constraints? What are the fundamental performance limits of SIM-assisted ISAC systems? How can we efficiently optimize multi-layer metasurface configurations while ensuring robustness against channel uncertainties?

This paper addresses these critical questions through the following contributions:
\begin{itemize}
\item We introduce a novel multi-functional SIM-assisted ISAC architecture that enables joint communication and sensing with enhanced security and accuracy.
\item We develop a comprehensive multi-objective optimization framework that simultaneously maximizes secrecy rates and minimizes sensing errors under practical constraints.
\item We propose an efficient layered block coordinate descent algorithm that coordinates sensing configuration, secure beamforming, communication metasurface optimization, and resource allocation.
\item We establish fundamental performance bounds and provide rigorous convergence guarantees for the proposed optimization framework.
\item We demonstrate through extensive simulations that our approach achieves significant performance improvements over state-of-the-art methods while maintaining computational efficiency.
\end{itemize}

The remainder of this paper is organized as follows. Section~\ref{sec:system_model} presents the system model and problem formulation. Section~\ref{sec:solution} details the proposed optimization framework. Section~\ref{sec:analysis} provides theoretical analysis. Section~\ref{sec:simulations} presents simulation results, and Section~\ref{sec:conclude} concludes the paper.

\section{System Model and Problem Formulation}
\label{sec:system_model}

We consider a multi-functional downlink system in beyond-5G networks that jointly performs communication and sensing tasks. The system comprises a base station (BS) equipped with \(M\) transmit antennas and \(R_{\mathrm{RF}}\) radio frequency (RF) chains (\(R_{\mathrm{RF}} \le M\)), a SIM with \(L\) programmable layers where each layer \(l\) contains \(N_l\) metamaterial elements, and \(K\) single-antenna legitimate users in the presence of a set \(\mathcal{E}\) of potential eavesdroppers. The total number of SIM elements is \(N_{\mathrm{tot}} = \sum_{l=1}^L N_l\). The SIM is positioned between the BS and users as a transmissive surface and is assumed to be fully controlled by the BS.

The key advantage of the multi-layer SIM architecture over conventional single-layer RIS lies in its wave-based beamforming capability. While a single-layer RIS can only apply phase shifts, the multi-layer structure enables volumetric analog processing that can implement sophisticated spatial filtering and focusing, effectively performing passive beamforming with higher gain and precision.

\subsection{Protocol and Resource Allocation}
\label{sec:protocol}

The system employs a time-division duplex protocol with three operational phases: Channel Estimation (CE), Sensing (Phase I), and Communication (Phase II), with respective time fractions $\tau_{\mathrm{ce}}$, $\tau_{\mathrm{sense}}$, and $\tau_{\mathrm{comm}}$. A switching period \(t_{\mathrm{switch}}\) is required to reconfigure the SIM between phases, implying \(\tau_{\mathrm{ce}} + \tau_{\mathrm{sense}} + \tau_{\mathrm{comm}} + 2t_{\mathrm{switch}}/T_{\mathrm{slot}} \le 1\), where \(T_{\mathrm{slot}}\) is the total slot duration. For practical SIMs, \(t_{\mathrm{switch}}\) is typically in the sub-ms range, but its impact becomes non-negligible at mmWave frequencies.

The total energy constraint per time slot is 
\begin{equation}
\begin{aligned}
& T_{\mathrm{slot}} \Big( \tau_{\mathrm{sense}} P_{\mathrm{sense}}
  + \tau_{\mathrm{comm}} \big( \sum_{k=1}^K \|\bm{w}_k\|^2 + \mathrm{Tr}(\bm{R}_{\mathrm{AN}}) \big) \Big) \\
&\le \mathcal{E}_{\mathrm{max}},
\end{aligned}
\end{equation}
where $P_{\mathrm{sense}}$ is the sensing power (an optimization variable), and the communication power is $\sum_{k=1}^K \|\bm{w}_k\|^2 + \mathrm{Tr}(\bm{R}_{\mathrm{AN}})$. Additionally, we impose instantaneous power constraints $P_{\mathrm{sense}} \le P_{\mathrm{BS}}^{\max}$ and $\sum_{k=1}^K \|\bm{w}_k\|^2 + \mathrm{Tr}(\bm{R}_{\mathrm{AN}}) \le P_{\mathrm{BS}}^{\max}$ to respect the BS power amplifier limitations.

The SIM configuration is reconfigured between phases, with \(\bm{\Phi}_l^{\mathrm{ce}}, \bm{\Phi}_l^{\mathrm{sense}}, \bm{\Phi}_l^{\mathrm{comm}} \in \mathbb{C}^{N_l \times N_l}\) denoting the diagonal phase-shift matrices for layer \(l\) during channel estimation, sensing, and communication phases, respectively. The BS employs hybrid beamforming with \(\bm{W} = \bm{W}_{\mathrm{RF}}\bm{W}_{\mathrm{BB}}\), where \(\bm{W}_{\mathrm{RF}} \in \mathbb{C}^{M \times R_{\mathrm{RF}}}\) is the analog beamforming matrix and \(\bm{W}_{\mathrm{BB}} \in \mathbb{C}^{R_{\mathrm{RF}} \times K}\) is the digital baseband precoder.

\subsection{Stacked Intelligent Metasurface Model}

The multi-layer SIM architecture enables wave-domain computing through volumetric analog processing. The composite electromagnetic transformation through all \(L\) SIM layers is defined as:
\begin{equation}\label{eq:SIM_cascade}
\begin{aligned}
&\bm{G}_{\mathrm{SIM}}(\{\bm{\Phi}_l\}) = \bm{\Phi}_L \bm{H}^{(L)} \bm{\Phi}_{L-1} \bm{H}^{(L-1)} \cdots\\[4pt]
&\quad \bm{\Phi}_1 \bm{H}^{(1)} \in \mathbb{C}^{N_L \times M},
\end{aligned}
\end{equation}
where \(\bm{H}^{(l)} \in \mathbb{C}^{N_l \times N_{l-1}}\) (with \(N_0 = M\)) represents the effective coupling matrix between layers \(l-1\) and \(l\), capturing both the free-space propagation and mutual coupling between adjacent layers.

The key advantage of the multi-layer structure is its ability to implement sophisticated beamforming patterns. For a given desired beamforming matrix \(\bm{F}_{\mathrm{desired}} \in \mathbb{C}^{N_L \times M}\), the SIM can approximate:
\begin{equation}
    \bm{G}_{\mathrm{SIM}}(\{\bm{\Phi}_l\}) \approx \bm{F}_{\mathrm{desired}},
\end{equation}
with the multi-layer structure providing higher beamforming gain and spatial resolution compared to single-layer RIS due to its increased degrees of freedom and wave processing capability.

To model practical hardware, each metasurface element has a programmable phase shift with finite resolution:
\begin{equation}
    [\bm{\Phi}_l]_{n,n} = e^{j\theta_l^{(n)}}, \quad \theta_l^{(n)} \in \mathcal{Q}_l,
\end{equation}
where \(\mathcal{Q}_l = \{ 2\pi q / 2^{b_l} \mid q=0,\ldots,2^{b_l}-1 \}\) is the discrete phase set with \(b_l\)-bit resolution. Practical hardware limitations, such as insertion loss from the metasurface elements, are assumed to be incorporated into the effective coupling matrices $\bm{H}^{(l)}$.

\subsection{Channel Models and Channel State Information (CSI) Acquisition}

For clarity we summarize the main channel notations used throughout this subsection.  \(\bm{H}^{(1)}\) denotes the BS-to-SIM near-field channel, \(\bm{H}^{(l)}\) the inter-layer SIM coupling matrix for \(l\ge 2\), \(\bm{h}_{\mathrm{SIM}\to k}\) the SIM-to-user \(k\) far-field channel, \(\bm{h}_{\mathrm{SIM}\to e}\) the SIM-to-eavesdropper \(e\) far-field channel, and \(\bm{h}_{\mathrm{eff},k}\) the effective end-to-end channel to user \(k\).

The BS-to-SIM link is modeled with near-field (spherical) wavefronts and, because of the fixed geometry of the deployment, is assumed known perfectly:
\begin{equation}
    \bm{H}^{(1)} = \bm{A}_{\mathrm{SIM}\to\mathrm{BS}} \,\bm{\Gamma}_{\mathrm{BS-SIM}}\, \bm{A}_{\mathrm{BS}\to\mathrm{SIM}}^T \in \mathbb{C}^{N_1 \times M},
\end{equation}
where the near-field steering vectors are written elementwise as \([\bm{a}_{\mathrm{BS}}(\bm{r}_n)]_m = \exp\!\big(-j\frac{2\pi}{\lambda}\|\bm{r}_n - \bm{r}_{\mathrm{BS},m}\|\big)\) and the propagation matrix captures spherical-wave propagation
\begin{equation}
    [\bm{\Gamma}_{\mathrm{BS-SIM}}]_{n,m} = \frac{\lambda}{4\pi r_{n,m}} \exp\!\Big(-j\frac{2\pi}{\lambda}r_{n,m}\Big),
\end{equation}
with \(r_{n,m}=\|\bm{r}_{\mathrm{SIM},n}-\bm{r}_{\mathrm{BS},m}\|\).

During the channel estimation phase of duration \(\tau_{\mathrm{ce}}\) the BS transmits orthogonal pilots while the SIM cycles through a predetermined set of \(Q\) configurations \(\{\bm{\Phi}_l^{\mathrm{ce},q}\}_{q=1}^Q\). The users collect the resulting measurements and estimate the effective end-to-end channels \(\bm{h}_{\mathrm{eff},k}\). Because the training budget is limited (\(Q<N_{\mathrm{tot}}\)), channel estimates are imperfect and are modeled with bounded errors:
\begin{equation}
    \hat{\bm{h}}_{\mathrm{eff},k} = \bm{h}_{\mathrm{eff},k} + \Delta\bm{h}_k,\qquad \|\Delta\bm{h}_k\| \le \epsilon_k^{\max},
\end{equation}
where the uncertainty sets \(\{\Delta\bm{h}_k\}\) are used for robust precoder and SIM configuration design in the communication phase.

The SIM-to-user and SIM-to-eavesdropper links are treated under far-field propagation with path loss and imperfect CSI:
\begin{equation}
    \bm{h}_{\mathrm{SIM}\to k}^H = \sqrt{\frac{1}{\rho_k}}\,\bm{a}_{\mathrm{SIM}}^H(\theta_k,r_k), \,\,
    \bm{h}_{\mathrm{SIM}\to e}^H = \sqrt{\frac{1}{\rho_e}}\,\bm{a}_{\mathrm{SIM}}^H(\theta_e,r_e),
\end{equation}
where \(\rho_k=(d_k/d_0)^{n}\) and \(\rho_e=(d_e/d_0)^{n}\) denote the path losses (reference distance \(d_0\) and path-loss exponent \(n\)), and \(\bm{a}_{\mathrm{SIM}}(\cdot)\) are the corresponding array steering vectors.

For inter-layer coupling (\(l\ge 2\)) we adopt a dominant line-of-sight (LoS) model to retain optimization tractability:
\begin{equation}
    \bm{H}^{(l)} = \bm{H}_{\mathrm{LoS}}^{(l)},\qquad l\ge 2.
\end{equation}
When evaluating performance via simulation we relax this assumption and use a Rician model to capture scattered components:
\begin{equation}
    \bm{H}^{(l)} = \sqrt{\kappa^{(l)}}\,\bm{H}_{\mathrm{LoS}}^{(l)} + \sqrt{1-\kappa^{(l)}}\,\bm{H}_{\mathrm{NLoS}}^{(l)},
\end{equation}
where \(\kappa^{(l)}\) is the Rician \(K\)-factor and \(\bm{H}_{\mathrm{NLoS}}^{(l)}\) contains the non-line-of-sight (NLoS) scattering contributions; this hybrid approach preserves analytical tractability while testing robustness to practical channel perturbations.

Eavesdropper uncertainty is handled probabilistically: eavesdroppers are assumed to have uncertain locations leading to random channel errors, and secrecy is enforced through an outage constraint of the form
\begin{equation}
    \Pr\big( R_k^{\mathrm{sec}} < R_k^{\min} \big) \le \epsilon_{\mathrm{out}}, \qquad \forall k,
\end{equation}
where \(\epsilon_{\mathrm{out}}\) is the maximum allowable secrecy-outage probability and uncertainties across different eavesdroppers are taken to be independent.  Together, these modeling choices (near-field deterministic BS–SIM link, bounded-user CSI errors, Rician inter-layer channels for simulation, and probabilistic eavesdropper errors) provide a balance between realistic performance assessment and optimization tractability.

\subsection{Sensing Phase Formulation}

During Phase I (sensing), the SIM processes echoes from \(T\) point targets to estimate their positions and velocities. The received sensing signal over time window \(t \in [0, T_{\mathrm{sense}}]\) is:
\begin{equation}
\begin{aligned}
&\bm{Y}_{\mathrm{sense}} = \sum_{i=1}^{T} \beta_i \,\bm{a}_{\mathrm{SIM}}(\theta_i, r_i)
    \,\bm{a}_{\mathrm{BS}}^H(\theta_i, r_i)\,\bm{X}_{\mathrm{sense}}e^{j2\pi f_{d,i}t/T_{\mathrm{sense}}} \\
&+ \bm{N}_{\mathrm{sense}},
\end{aligned}
\end{equation}
where \(\bm{N}_{\mathrm{sense}} \sim \mathcal{CN}(\bm{0}, \sigma_n^2 \bm{I})\) represents thermal noise, \(f_{d,i}\) is the Doppler frequency of target \(i\), and we assume orthogonal probing waveforms with \(\bm{X}_{\mathrm{sense}}\bm{X}_{\mathrm{sense}}^H = P_{\mathrm{sense}}\bm{I}\). Here, the sensing duration is $T_{\mathrm{sense}} = \tau_{\mathrm{sense}} T_{\mathrm{slot}}$, where $\tau_{\mathrm{sense}}$ is the sensing time fraction defined in Section \ref{sec:protocol}. The complex reflection coefficient for target \(i\) is:
\begin{equation}
    \beta_i = \sqrt{\sigma_{\mathrm{RCS},i} / \rho_i} \cdot e^{j\phi_i},
\end{equation}
where \(\sigma_{\mathrm{RCS},i}\) is the radar cross-section (RCS) and \(\rho_i\) is the path loss.

The SIM estimates target parameters \(\bm{\xi}_i = [\theta_i, r_i, v_i, \sigma_{\mathrm{RCS},i}]^T\) using maximum likelihood estimation, where \(v_i\) is the radial velocity. The Fisher Information Matrix (FIM) for parameter estimation is:
\begin{equation}
    \bm{J}(\bm{\xi}_i) = \frac{2|\beta_i|^2}{\sigma_n^2} \Re\left\{ \left( \frac{\partial \bm{\mu}(\bm{\xi}_i)}{\partial \bm{\xi}_i} \right)^H \left( \frac{\partial \bm{\mu}(\bm{\xi}_i)}{\partial \bm{\xi}_i} \right) \right\},
\end{equation}
where \(\bm{\mu}(\bm{\xi}_i) = \beta_i \bm{a}_{\mathrm{SIM}}(\theta_i, r_i) \bm{a}_{\mathrm{BS}}^H(\theta_i, r_i) \bm{X}_{\mathrm{sense}} e^{j2\pi f_{d,i}t/T_{\mathrm{sense}}}\). We compute the full FIM but note that, in our operating regime, cross-coupling among \(\theta_i, r_i, v_i\) is small; this motivates the Schur-complement-based lower bound used for optimization.

For optimization tractability, we employ a lower-bound approximation of the FIM using the Schur complement. The total normalized Cramér-Rao Bound (CRB) is:
\begin{equation}
    \mathrm{CRB}_{\mathrm{total}} = \sum_{i=1}^{T} \mathrm{Tr}\left( \bm{J}_{\mathrm{LB}}^{-1}(\bm{\xi}_i) \right),
\end{equation}
where \(\bm{J}_{\mathrm{LB}}(\bm{\xi}_i)\) is the lower-bound FIM obtained via Schur complement.

\subsection{Communication Phase Formulation}

The transmitted signal during Phase II (communication) incorporates artificial noise (AN) for security:
\begin{equation}
    \bm{x} = \bm{W}\bm{s} + \bm{z}_{\mathrm{AN}},
\end{equation}
where the data vector is \(\bm{s} \sim \mathcal{CN}(\bm{0},\bm{I}_K)\) and \(\bm{z}_{\mathrm{AN}} \sim \mathcal{CN}(\bm{0},\bm{R}_{\mathrm{AN}})\) is the AN vector.

The effective end-to-end communication channels are:
\begin{align}
    \bm{h}_{\mathrm{eff},k}^H &= \bm{h}_{\mathrm{SIM}\to k}^H \bm{G}_{\mathrm{SIM}}(\{\bm{\Phi}_l^{\mathrm{comm}}\}), \\
    \bm{h}_{\mathrm{eff},e}^H &= \bm{h}_{\mathrm{SIM}\to e}^H \bm{G}_{\mathrm{SIM}}(\{\bm{\Phi}_l^{\mathrm{comm}}\}).
\end{align}

The received signal at user \(k\) is:
\begin{equation}
    y_k = \bm{h}_{\mathrm{eff},k}^H \bm{x} + n_k,
\end{equation}
where \(n_k \sim \mathcal{CN}(0,\sigma_k^2)\). The signal-to-interference-plus-noise ratio (SINR) at user \(k\) is:
\begin{equation}
    \mathrm{SINR}_k = \frac{|\bm{h}_{\mathrm{eff},k}^H \bm{w}_k|^2}{\sum_{j\ne k} |\bm{h}_{\mathrm{eff},k}^H \bm{w}_j|^2 + \bm{h}_{\mathrm{eff},k}^H \bm{R}_{\mathrm{AN}} \bm{h}_{\mathrm{eff},k} + \sigma_k^2}.
\end{equation}

The SINR at eavesdropper \(e\) for user \(k\)'s message is:
\begin{equation}
    \mathrm{SINR}_e^{(k)} = \frac{|\bm{h}_{\mathrm{eff},e}^H \bm{w}_k|^2}{\sum_{j\ne k} |\bm{h}_{\mathrm{eff},e}^H \bm{w}_j|^2 + \bm{h}_{\mathrm{eff},e}^H \bm{R}_{\mathrm{AN}} \bm{h}_{\mathrm{eff},e} + \sigma_e^2}.
\end{equation}

The worst-case achievable secrecy rate for user \(k\) under probabilistic outage constraint is given by:
\begin{equation}
    R_k^{\mathrm{sec}} = \left[ \log_2(1+\mathrm{SINR}_k) - \max_{e\in\mathcal{E}} \log_2\left(1 + \mathrm{SINR}_e^{(k)}\right) \right]^+,
\end{equation}
with the outage constraint \(\mathbb{P}(R_k^{\mathrm{sec}} < R_k^{\min}) \leq \epsilon_{\mathrm{out}}\). Using the Bernstein-type inequality, this probabilistic constraint can be transformed into a tractable deterministic form.

\subsection{Multi-Objective Optimization Problem}

The goal is to jointly optimize SIM-assisted volumetric beamforming to enhance sensing accuracy while ensuring secure communication under limited energy and time resources. We formulate a multi-objective optimization problem that captures the fundamental trade-off between communication and sensing performance. The weighted sum approach provides the Pareto-optimal frontier.

To ensure scale compatibility between the secrecy rate (bits/s/Hz) and CRB (variance), we normalize both quantities using benchmark values:

\begin{subequations}\label{eq:optimization_problem}
\begin{align}
\lefteqn{%
  \underset{\substack{\bm{W},\, \bm{R}_{\mathrm{AN}},\, \{\bm{\Phi}_l^{\mathrm{comm}}\}, \{\bm{\Phi}_l^{\mathrm{sense}}\}, \{\bm{\Phi}_l^{\mathrm{ce}}\},\\
  \tau_{\mathrm{sense}},\, \tau_{\mathrm{ce}}, \tau_{\mathrm{comm}}, P_{\mathrm{sense}}, P_{\mathrm{comm}}}}{\max}
}\nonumber\\[2pt]
& (1-\alpha)\frac{\sum_{k=1}^K R_k^{\mathrm{sec}}}{R_{\max}} - \alpha\cdot \frac{\mathrm{CRB}_{\mathrm{total}}}{\mathrm{CRB}_{\min}} \nonumber\\
\text{s.t.}\quad
& \tau_{\mathrm{ce}} + \tau_{\mathrm{sense}} + \tau_{\mathrm{comm}} + 2t_{\mathrm{switch}}/T_{\mathrm{slot}} \le 1, \label{eq:time_constraint} \\
& T_{\mathrm{slot}} (\tau_{\mathrm{sense}} P_{\mathrm{sense}} + \tau_{\mathrm{comm}} P_{\mathrm{comm}}) \le \mathcal{E}_{\mathrm{max}}, \label{eq:energy_constraint} \\
& P_{\mathrm{comm}} \le P_{\mathrm{BS}}^{\max}, \label{eq:comm_power_constraint} \\
& P_{\mathrm{comm}} = \sum_{k=1}^K \|\bm{w}_k\|^2 + \mathrm{Tr}(\bm{R}_{\mathrm{AN}}), \label{eq:comm_power_def} \\
& P_{\mathrm{sense}} \le P_{\mathrm{BS}}^{\max}, \label{eq:sense_power_constraint} \\
& \Pr\left( R_k^{\mathrm{sec}} < R_k^{\min} \right) \leq \epsilon_{\mathrm{out}}, \quad \forall k, \label{eq:outage_constraint} \\
& |[\bm{\Phi}_l^{p}]_{n,n}| = 1, \quad \theta_l^{(n)} \in \mathcal{Q}_l, \quad \forall l,n, p \in \{\mathrm{comm}, \mathrm{sense}, \mathrm{ce}\}, \label{eq:phase_constraint} \\
& \|\bm{\Phi}_l^{\mathrm{comm}} - \bm{\Phi}_l^{\mathrm{sense}}\|_F^2 \le \delta_{\Phi}, \quad \forall l, \label{eq:reconfig_constraint} \\
& \tau_{\mathrm{sense}}, \tau_{\mathrm{ce}}, \tau_{\mathrm{comm}} \in [\tau_{\min}, \tau_{\max}], \label{eq:tau_bound_constraint} \\
& \mathrm{SINR}_k \geq \gamma_k^{\min}, \quad \forall k. \label{eq:sinr_constraint}
\end{align}
\end{subequations}

The trade-off parameter \(\alpha \in [0,1]\) controls the communication-sensing balance: \(\alpha = 0\) prioritizes sum secrecy rate, \(\alpha = 1\) prioritizes sensing accuracy, and intermediate values provide balanced operation. Here, \(R_{\max}\) and \(\mathrm{CRB}_{\min}\) are benchmark values obtained from ideal communication-only and sensing-only cases, respectively.

The SIM reconfiguration constraint \(\|\bm{\Phi}_l^{\mathrm{comm}} - \bm{\Phi}_l^{\mathrm{sense}}\|_F^2 \le \delta_{\Phi}\) maintains hardware reconfiguration feasibility while accounting for practical tuning limitations between operational phases.

The optimization variables include the continuous variables \(\bm{W}, \bm{R}_{\mathrm{AN}}, \tau_{\mathrm{sense}}, \tau_{\mathrm{ce}}, \tau_{\mathrm{comm}}, P_{\mathrm{sense}}, P_{\mathrm{comm}}\) and the discrete variables \(\{\bm{\Phi}_l^{\mathrm{comm}}\}, \{\bm{\Phi}_l^{\mathrm{sense}}\}, \{\bm{\Phi}_l^{\mathrm{ce}}\}\). The problem exhibits multiple non-convexities due to the coupled variables, fractional SINR expressions, discrete phase constraints, and the complex multi-objective function. To solve \eqref{eq:optimization_problem} efficiently, we propose a Block Coordinate Descent (BCD) framework that iteratively optimizes the beamformers and the SIM configurations. Subproblems are handled via successive convex approximation (SCA) and semidefinite relaxation (SDR), with probabilistic secrecy constraints transformed using Bernstein-type inequalities. The discrete phase shifts are first optimized in the continuous domain and then projected onto the discrete set \(\mathcal{Q}_l\) via a deterministic rounding procedure followed by a local search. The resulting BCD-based algorithm scales as $\mathcal{O}(L N_{\mathrm{tot}}^3 + K^3)$ per iteration, making it computationally feasible for practical system dimensions.
\section{Proposed Solution Framework}
\label{sec:solution}

To address the challenging multi-objective optimization problem \eqref{eq:optimization_problem} formulated in Section~\ref{sec:system_model}, we propose an enhanced \textit{Layered BCD (L-BCD)} framework. This algorithm comprehensively handles the joint communication-sensing trade-offs, multi-layer SIM configuration, probabilistic security constraints, hybrid beamforming, and integrated resource allocation by decomposing the problem into five coordinated blocks.

\subsection{Overall Architecture and Solution Strategy}

We scalarize the multi-objective problem in \eqref{eq:optimization_problem} with a weighting parameter \(\alpha\in[0,1]\). The scalarized objective is
\begin{equation}
    \max\; U(\alpha)\;=\;(1-\alpha)\frac{\sum_{k=1}^K R_k^{\mathrm{sec}}}{R_{\max}} \;-\; \alpha\frac{\mathrm{CRB}_{\mathrm{total}}}{\mathrm{CRB}_{\min}},
\end{equation}
where \(R_{\max}\) and \(\mathrm{CRB}_{\min}\) are normalization benchmarks and \(\alpha\) controls the communication--sensing trade-off.

An enhanced layered block-coordinate descent (L-BCD) framework decomposes the design into five coordinated blocks that are optimized in an alternating fashion. Block A (Sensing Configuration): optimizes the sensing SIM configurations \(\{\bm{\Phi}_l^{\mathrm{sense}}\}\) and the sensing power \(P_{\mathrm{sense}}\) to reduce the aggregate CRB while meeting resource-efficiency targets. Block B (Secure Beamforming): designs the RF and baseband precoders \(\bm{W}_{\mathrm{RF}},\bm{W}_{\mathrm{BB}}\) together with the artificial-noise covariance \(\bm{R}_{\mathrm{AN}}\) to maximize secrecy rates subject to the probabilistic outage constraints. Block C (Communication SIM): configures the communication SIM patterns \(\{\bm{\Phi}_l^{\mathrm{comm}}\}\) to improve secrecy performance while satisfying energy and reconfiguration constraints. Block D (Resource Allocation): jointly tunes the time allocations \(\tau_{\mathrm{sense}},\tau_{\mathrm{ce}},\tau_{\mathrm{comm}}\) and power budgets \(P_{\mathrm{sense}},P_{\mathrm{comm}}\) (with \(P_{\mathrm{comm}}=\sum_{k=1}^K\|\bm{w}_k\|^2+\mathrm{Tr}(\bm{R}_{\mathrm{AN}})\)) under the total energy and time constraints. Block E (Channel Estimation): optimizes the channel-estimation configurations \(\{\bm{\Phi}_l^{\mathrm{ce}}\}\) to minimize estimation error within the available training resources.

The optimization variables separate naturally into continuous variables (the beamforming matrices \(\bm{W}\), the artificial-noise covariance \(\bm{R}_{\mathrm{AN}}\), the time allocations \(\tau_{\mathrm{sense}},\tau_{\mathrm{ce}},\tau_{\mathrm{comm}}\), and the power budgets \(P_{\mathrm{sense}},P_{\mathrm{comm}}\)) and discrete variables (the quantized SIM phase-shift configurations \(\{\bm{\Phi}_l^{\mathrm{comm}}\}\), \(\{\bm{\Phi}_l^{\mathrm{sense}}\}\), \(\{\bm{\Phi}_l^{\mathrm{ce}}\}\) with \(\theta_l^{(n)}\in\mathcal{Q}_l\)). Practical solvers combine continuous optimization tools (e.g., convex solvers or successive convex approximation) with low-complexity discrete updates (e.g., projected local search, quantized relaxations, or greedy heuristics) tailored to hardware constraints.

Coupling between blocks is resolved via alternating optimization: each Block A--E update is performed while holding the other blocks fixed, and the algorithm cycles through Blocks A, B, C, D, and E until a stationarity criterion is satisfied. Under standard regularity and boundedness assumptions for each block update (closedness of feasible sets and nondecreasing objective), this alternating procedure converges to a Pareto-stationary point of the scalarized objective.

Implementation complexity per L-BCD iteration is dominated by SIM-layer updates and beamforming matrix operations, yielding an overall per-iteration cost on the order of \(\mathcal{O}(L N_{\mathrm{tot}}^3 + K^3)\), where \(L\) is the number of SIM layers, \(N_{\mathrm{tot}}\) the total number of SIM elements, and \(K\) the number of users. In practice, warm-starting, problem-specific approximations, and occasional coarse-to-fine SIM searches reduce runtime and enable near-real-time reconfiguration when required.

\subsection{Tractable Reformulations of Key Constraints}

\noindent\textbf{Complete outage constraint reformulation.}
The outage constraint \(\mathbb{P}(R_k^{\mathrm{sec}}<R_k^{\min})\le\epsilon_{\mathrm{out}}\) is conservatively enforced using a Bernstein-type inequality with a safety margin. This yields the deterministic surrogate
\begin{equation}
    \mathbb{E}[R_k^{\mathrm{sec}}] - \delta\sqrt{\mathrm{Var}[R_k^{\mathrm{sec}}]} \ge R_k^{\min} + \Delta_k,
\end{equation}
where \(\delta=\mathcal{Q}^{-1}(1-\epsilon_{\mathrm{out}})\) is the safety parameter, \(\mathcal{Q}(\cdot)\) is the standard normal complementary cumulative distribution function (CDF), and \(\Delta_k\) captures approximation errors introduced by the inequality. The expectation and variance are taken over the eavesdropper channel uncertainties and can be expressed as
\begin{align}
\mathbb{E}[R_k^{\mathrm{sec}}]
&= \mathbb{E}_{\Delta\bm{h}_e}\big[\log_2(1+\mathrm{SINR}_k)\big] \notag\\
&\quad - \mathbb{E}_{\Delta\bm{h}_e}\big[\max_e \log_2(1+\mathrm{SINR}_e^{(k)})\big], \\
\mathrm{Var}[R_k^{\mathrm{sec}}]
&= \mathrm{Var}_{\Delta\bm{h}_e}\big[\log_2(1+\mathrm{SINR}_k)\big] \notag\\
&\quad + \mathrm{Var}_{\Delta\bm{h}_e}\big[\max_e \log_2(1+\mathrm{SINR}_e^{(k)})\big].
\end{align}
Practical implementation replaces the inner “max over eavesdroppers” term with either a tractable upper bound or with a sample-based approximation to preserve convexity in the continuous decision variables.

\noindent\textbf{Extended CRB constraint convexification.}
For the extended parameter vector \(\bm{\xi}_i=[\theta_i, r_i, v_i, \sigma_{\mathrm{RCS},i}]^T\) we employ a Schur-complement based lower-bound Fisher information matrix (FIM) \(\bm{J}_{\mathrm{LB}}(\bm{\xi}_i)\) as introduced in Section~\ref{sec:system_model}:
\begin{equation}
    \bm{J}_{\mathrm{LB}}(\bm{\xi}_i)=\bm{J}_{11} - \bm{J}_{12}\bm{J}_{22}^{-1}\bm{J}_{21} \preceq \bm{J}(\bm{\xi}_i).
\end{equation}
The aggregate CRB constraint \(\mathrm{CRB}_{\mathrm{total}}\le\Gamma_{\mathrm{max}}\) is then approximated by the convex surrogate
\begin{equation}
    \sum_{i=1}^{T} \mathrm{Tr}\big(\bm{J}_{\mathrm{LB}}^{-1}(\bm{\xi}_i) + \epsilon\bm{I}\big) \le \Gamma_{\mathrm{max}},
\end{equation}
where \(\epsilon>0\) is a small regularization parameter used for numerical stability when inverting the lower-bound FIM blocks.

\noindent\textbf{Wave-based SIM beamforming.}
The multi-layer SIM is exploited to approximate a desired fully-digital beamformer \(\bm{F}_{\mathrm{desired}}\) by minimizing a Frobenius-norm error while penalizing excessive reconfiguration relative to sensing settings. The design problem for the communication SIM patterns is
\begin{equation}
\begin{split}
\min_{\{\bm{\Phi}_l^{\mathrm{comm}}\}} \; & 
\big\| \bm{G}_{\mathrm{SIM}}(\{\bm{\Phi}_l^{\mathrm{comm}}\}) - \bm{F}_{\mathrm{desired}} \big\|_F^2 \\
&\quad + \lambda_{\Phi} \sum_{l=1}^L
\big[ \|\bm{\Phi}_l^{\mathrm{comm}} - \bm{\Phi}_l^{\mathrm{sense}}\|_F^2 - \delta_{\Phi} \big]_+,
\end{split}
\end{equation}
where \(\lambda_{\Phi}\) controls the trade-off between matching the desired beamformer and limiting SIM reconfiguration, and \([\cdot]_+\) denotes the positive-part operator. In practice, \(\bm{G}_{\mathrm{SIM}}(\cdot)\) is linearized around the current SIM configuration to obtain convex surrogates amenable to SCA steps.

\noindent\textbf{Discrete phase optimization with majorization-minimization.}
The quantized phase constraint \(\theta_l^{(n)}\in\mathcal{Q}_l\) is treated with a three-stage procedure. The stages are: (1) continuous relaxation, (2) quantization via a projection (majorization-minimization style step), and (3) local-search refinement. Concretely,
\noindent\begin{enumerate}
    \item \emph{Continuous relaxation:} Solve the design problem with relaxed phases \(\theta_l^{(n)}\in[0,2\pi)\) to obtain continuous-phase solutions \(\theta_{\mathrm{cont}}^{(n)}\).
    \item \emph{Quantization (Majorization-Minimization (MM) projection):} Project the continuous solutions onto the discrete set by
    \begin{equation}
        \theta_l^{(n)} = \arg\min_{\theta\in\mathcal{Q}_l} \big|\theta - \theta_{\mathrm{cont}}^{(n)}\big|^2.
    \end{equation}
    \item \emph{Local search refinement:} Perform greedy refinement on the most sensitive elements by evaluating nearby discrete phase candidates:
    \begin{equation}
        \theta_l^{(n)} \leftarrow \arg\min_{\theta\in\mathcal{N}(\theta_l^{(n)})\cap\mathcal{Q}_l} f(\bm{\Theta}),
    \end{equation}
    where \(\mathcal{N}(\theta)\) denotes a small neighborhood around \(\theta\) in the discrete grid and \(f(\bm{\Theta})\) is the current objective.
\end{enumerate}
This procedure is compatible with warm-starting from previous iterations and generally yields high-quality quantized configurations at moderate complexity.

\noindent\textbf{Fractional SINR linearization.}
Nonconvex SINR expressions are handled by successive convex approximation. For user \(k\) the denominator
\begin{equation}
    D_k = \sum_{j\ne k} |\bm{h}_{\mathrm{eff},k}^H\bm{w}_j|^2 + \bm{h}_{\mathrm{eff},k}^H\bm{R}_{\mathrm{AN}}\bm{h}_{\mathrm{eff},k} + \sigma_k^2
\end{equation}
is linearized around the previous iterate value \(D_k^{(t)}\). The resulting first-order surrogate for the achievable rate is
\begin{equation}
    \log_2(1+\mathrm{SINR}_k) \approx \log_2\!\bigg(1 + \frac{|\bm{h}_{\mathrm{eff},k}^H\bm{w}_k|^2}{D_k^{(t)}}\bigg)
    - \frac{D_k - D_k^{(t)}}{D_k^{(t)} + |\bm{h}_{\mathrm{eff},k}^H\bm{w}_k|^2}.
\end{equation}
Applying this SCA step within each block update yields convex subproblems that are solved iteratively and that guarantee monotonic improvement of the surrogate objective under standard conditions.

\subsection{Block A: Sensing Configuration Optimization}

Given fixed beamformers and communication SIM parameters, the sensing sub-problem is:
\begin{subequations}
\begin{align}
    \underset{\substack{\{\bm{\Phi}_l^{\mathrm{sense}}\}, P_{\mathrm{sense}}}}{\min} \quad & \alpha\cdot\frac{\mathrm{CRB}_{\mathrm{total}}}{\mathrm{CRB}_{\min}} - (1-\alpha)\frac{\tau_{\mathrm{comm}}}{T_{\mathrm{slot}}} \\
    \text{s.t.} \quad & \mathrm{CRB}_{\mathrm{total}} \leq \Gamma_{\mathrm{max}}, \\
    & T_{\mathrm{slot}} (\tau_{\mathrm{sense}} P_{\mathrm{sense}} + \tau_{\mathrm{comm}} P_{\mathrm{comm}}) \le \mathcal{E}_{\mathrm{max}}, \\
    & \tau_{\mathrm{ce}} + \tau_{\mathrm{sense}} + \tau_{\mathrm{comm}} + 2t_{\mathrm{switch}}/T_{\mathrm{slot}} \le 1, \\
    & |[\bm{\Phi}_l^{\mathrm{sense}}]_{n,n}| = 1, \quad \forall l,n, \\
    & P_{\mathrm{sense}} \le P_{\mathrm{BS}}^{\max}.
\end{align}
\label{eq:sensing_subproblem}
\end{subequations}
In this subproblem, we use $\tau_{\mathrm{comm}}$ as a proxy for the communication objective. Since beamforming and other parameters are fixed, increasing the communication time fraction $\tau_{\mathrm{comm}}$ monotonically increases the achievable secrecy rate. This problem is solved using a projected gradient method on the complex circle manifold, where the extended CRB constraint is incorporated using the Schur complement approximation and the power constraint is handled via dual decomposition. The SIM phases are optimized to steer energy toward target angles during the sensing phase.

\subsection{Block B: Secure Beamforming Optimization}
Given fixed SIM parameters $\{\bm{\Phi}_l^{\mathrm{comm}}\}$ from Block C, the beamforming sub-problem becomes:
\begin{subequations}\label{eq:beamforming_subproblem}
\begin{align}
\lefteqn{\underset{\bm{W}_{\mathrm{RF}}, \bm{W}_{\mathrm{BB}}, \bm{R}_{\mathrm{AN}}}{\max}} \nonumber\\[4pt]
&\quad (1-\alpha)\frac{\sum_{k=1}^K \left(\mathbb{E}[R_k^{\mathrm{sec}}] - \delta\sqrt{\mathrm{Var}[R_k^{\mathrm{sec}}]}\right)}{R_{\max}} \\
\text{s.t.}\quad
& T_{\mathrm{slot}} (\tau_{\mathrm{sense}} P_{\mathrm{sense}} + \tau_{\mathrm{comm}} P_{\mathrm{comm}}) \le \mathcal{E}_{\mathrm{max}}, \\
& P_{\mathrm{comm}} = \sum_{k=1}^K \|\bm{w}_k\|^2 + \mathrm{Tr}(\bm{R}_{\mathrm{AN}}) \le P_{\mathrm{BS}}^{\max}, \\
& \mathrm{SINR}_k \geq \gamma_k^{\min}, \quad \forall k, \\
& \bm{W} = \bm{W}_{\mathrm{RF}}\bm{W}_{\mathrm{BB}}, \\
& |[\bm{W}_{\mathrm{RF}}]_{i,j}| = 1, \quad \forall i,j, \\
& \text{Bernstein-based outage constraints}, \quad \forall k.
\end{align}
\end{subequations}

The hybrid beamformer $\bm{W} = \bm{W}_{\mathrm{RF}}\bm{W}_{\mathrm{BB}}$ is optimized via alternating minimization:
\begin{align}
\bm{W}_{\mathrm{RF}}^* &= \arg\min_{\bm{W}_{\mathrm{RF}} \in \mathcal{A}} \|\bm{W}_{\mathrm{opt}} - \bm{W}_{\mathrm{RF}}\bm{W}_{\mathrm{BB}}\|_F^2 \\
\bm{W}_{\mathrm{BB}}^* &= (\bm{W}_{\mathrm{RF}}^H\bm{W}_{\mathrm{RF}})^{-1}\bm{W}_{\mathrm{RF}}^H\bm{W}_{\mathrm{opt}}
\end{align}
where $\mathcal{A} = \{\bm{W}_{\mathrm{RF}} : |[\bm{W}_{\mathrm{RF}}]_{i,j}| = 1\}$ is the analog beamforming constraint set, and $\bm{W}_{\mathrm{opt}}$ is the optimal fully-digital solution from SDR.

To handle the non-convex SINR and secrecy rate terms, we employ SDR. Define $\bm{V}_k = \bm{w}_k \bm{w}_k^H$ and ignore the rank-1 constraint. The problem becomes a convex Semidefinite Program (SDP) which can be solved efficiently. Gaussian randomization is used to generate feasible rank-1 solutions when needed.

\subsection{Block C: Communication SIM Configuration Optimization}

Given fixed beamformers $\bm{W}_{\mathrm{RF}}$, $\bm{W}_{\mathrm{BB}}$, and $\bm{R}_{\mathrm{AN}}$, the SIM configuration sub-problem is:
\begin{subequations}
\begin{align}
\lefteqn{%
  \underset{\{\bm{\Phi}_l^{\mathrm{comm}}\}}{\max}\quad
}\nonumber\\[2pt]
& (1-\alpha)\frac{\sum_{k=1}^K \left(\mathbb{E}[R_k^{\mathrm{sec}}] - \delta\sqrt{\mathrm{Var}[R_k^{\mathrm{sec}}]}\right)}{R_{\max}} \notag\\
&\quad - \lambda_{\Phi} \sum_{l=1}^L \left[\|\bm{\Phi}_l^{\mathrm{comm}} - \bm{\Phi}_l^{\mathrm{sense}}\|_F^2 - \delta_{\Phi}\right]_+ \\
\text{s.t.}\quad
& |[\bm{\Phi}_l^{\mathrm{comm}}]_{n,n}| = 1, \quad \forall l,n.
\end{align}
\label{eq:sim_comm_subproblem}
\end{subequations}
For the multi-layer SIM, we employ layer-wise optimization with reconfiguration awareness:
\begin{equation}
\begin{split}
\bm{\Phi}_l^* &= \arg\min_{\bm{\Phi}_l}\; f(\bm{\Phi}_1,\ldots,\bm{\Phi}_L) \\
&\qquad + \lambda_{\Phi}\left[\|\bm{\Phi}_l - \bm{\Phi}_l^{\mathrm{sense}}\|_F^2 - \delta_{\Phi}\right]_+
\quad\text{for } l=1,\ldots,L,
\end{split}
\end{equation}
with other layers fixed, exploiting the cascaded structure for efficient computation. The unit-modulus constraints form a complex circle manifold. We solve this problem using the \textit{Riemannian conjugate gradient (RCG)} method with the penalty parameter $\lambda_{\Phi}$ adaptively updated to ensure constraint satisfaction. The SIM phases are optimized to steer energy toward legitimate users while creating nulls toward potential eavesdropper directions.

\subsection{Block D: Enhanced Resource Allocation}

\begin{subequations}
\begin{align}
\lefteqn{%
  \underset{\substack{\tau_{\mathrm{sense}}, \tau_{\mathrm{ce}}, \tau_{\mathrm{comm}},\\
  P_{\mathrm{sense}}, P_{\mathrm{comm}}}}{\max}\quad
}\nonumber\\[2pt]
& U(\alpha) = (1-\alpha)\frac{\sum_{k=1}^K R_k^{\mathrm{sec}}}{R_{\max}}
   - \alpha\cdot\frac{\mathrm{CRB}_{\mathrm{total}}}{\mathrm{CRB}_{\min}} \\[4pt]
\text{s.t.}\quad
& \tau_{\mathrm{ce}} + \tau_{\mathrm{sense}} + \tau_{\mathrm{comm}} + \frac{2t_{\mathrm{switch}}}{T_{\mathrm{slot}}} \le 1, \\
& T_{\mathrm{slot}}\bigl(\tau_{\mathrm{sense}} P_{\mathrm{sense}} + \tau_{\mathrm{comm}} P_{\mathrm{comm}}\bigr)
  \le \mathcal{E}_{\mathrm{max}}, \\
& P_{\mathrm{comm}} \le P_{\mathrm{BS}}^{\max}, \quad P_{\mathrm{sense}} \le P_{\mathrm{BS}}^{\max}, \\
& \tau_{\mathrm{sense}}, \tau_{\mathrm{ce}}, \tau_{\mathrm{comm}} \in [\tau_{\min}, \tau_{\max}].
\end{align}
\label{eq:resource_subproblem}
\end{subequations}

This convex resource allocation problem is solved via Lagrange dual decomposition with augmented Lagrangian method to handle the time and energy coupling constraints efficiently. The solution balances time allocation between sensing and communication phases to maximize the overall system utility.

\subsection{Block E: Channel Estimation Configuration}

Given the current resource allocation $\tau_{\mathrm{ce}}$, this block optimizes the SIM configuration to ensure the channel estimation accuracy is met.
\begin{subequations}
\begin{align}
    \underset{\{\bm{\Phi}_l^{\mathrm{ce}}\}}{\min} \quad & \max_k \epsilon_k^{\max}(\{\bm{\Phi}_l^{\mathrm{ce}}\}, \tau_{\mathrm{ce}}) \\
    \text{s.t.} \quad & \max_k \epsilon_k^{\max} \leq \epsilon_{\mathrm{CSI}}^{\max}, \\
    & |[\bm{\Phi}_l^{\mathrm{ce}}]_{n,n}| = 1, \quad \forall l,n, \\
    & \tau_{\mathrm{ce}} \in [\tau_{\min}, \tau_{\max}].
\end{align}
\label{eq:channel_estimation_subproblem}
\end{subequations}
This block optimizes the SIM configuration during channel estimation to minimize training overhead while ensuring channel estimation accuracy. The CSI error bound $\epsilon_k^{\max}$ is related to the SIM configuration through the training sequence design. The SIM phases are configured to create omnidirectional patterns for comprehensive channel sounding.

\subsection{Integrated Enhanced L-BCD Algorithm}

\subsection{Convergence and Complexity Analysis}

\textbf{Convergence.} The proposed L-BCD algorithm converges to a stationary point under standard regularity conditions. Each subproblem corresponding to Block A, Block B, Block C, Block D, and Block E is solved while keeping the remaining variables fixed, which yields a non-decreasing sequence of objective values for the scalarized utility \(U(\alpha)\). The combination of alternating updates, convex surrogates obtained via SCA and SDR when needed, and manifold-based updates for constrained variables guarantees monotonic improvement of the surrogate objective and convergence to at least a local optimum or a Pareto-stationary point of the scalarized problem.

\textbf{Complexity.} The per-iteration computational complexity is dominated by SIM-layer updates and beamforming matrix operations and is on the order of
\[
    \mathcal{O}\big(L N_{\mathrm{tot}}^3 + K^3\big),
\]
where the first term \(\mathcal{O}(L N_{\mathrm{tot}}^3)\) captures the cost of multi-layer SIM optimization performed in Blocks A, C, and E, and the second term \(\mathcal{O}(K^3)\) captures the cost of the secure-beamforming SDP solved in Block B. The resource-allocation step in Block D has comparatively low cost due to its convex structure and contributes negligible additional complexity. This computational profile is practical for typical configurations with \(N_{\mathrm{tot}}\) in the hundreds to low thousands and moderate \(K\). Practical complexity reduction strategies such as warm-starting, low-rank approximations, and layer-wise parallel updates further reduce runtime in realistic deployments.

\subsection{Practical Implementation Considerations}

The enhanced L-BCD framework is designed for practical deployment and incorporates multiple implementation-oriented provisions. Parallelization is straightforward since the optimizations for individual SIM layers within Block A, Block C, and Block E can be executed concurrently across available compute units. Warm-starting from previous solutions is recommended to accelerate convergence under dynamic channels and to enable fast reconfiguration. An adaptive tolerance schedule, where convergence tolerances \(\epsilon_{\mathrm{obj}}\) and \(\epsilon_{\mathrm{param}}\) are tightened gradually, yields a favorable trade-off between runtime and final accuracy. Hardware imperfections are explicitly accounted for by incorporating models of SIM insertion loss, phase quantization errors, and mutual coupling into the optimization constraints and objective. Robustness to model mismatch is achieved by designing under conservative approximations (e.g., LoS-based tractable models) and validating performance under Rician SIM channels in simulation. The block structure supports scalable extensions to richer channel models and additional hardware impairments. Finally, the optimized SIM phase profiles provide physical insight: during the communication phase the SIM phases steer energy toward the users, while during the sensing phase they concentrate energy toward target angles, enabling an adaptive beamforming pattern that trades sensing accuracy for secrecy performance as required.

\begin{algorithm}[!t]
\footnotesize 
\caption{Enhanced Layered Block Coordinate Descent (L-BCD)}
\label{alg:enhanced_lbcd}
\KwIn{Channel statistics; SIM constraint $\delta_{\Phi}$; power budget
      $P_{\mathrm{BS}}^{\max}$; energy budget $\mathcal{E}_{\max}$;
      trade-off parameter $\alpha$; outage requirement $\epsilon_{\mathrm{out}}$.}
\KwOut{Optimal configurations
       $\{\bm{\Phi}_l^{\mathrm{sense}}\}^\ast,\ \{\bm{\Phi}_l^{\mathrm{comm}}\}^\ast,\ \{\bm{\Phi}_l^{\mathrm{ce}}\}^\ast,$
       $\bm{W}_{\mathrm{RF}}^\ast,\ \bm{W}_{\mathrm{BB}}^\ast,\ \bm{R}_{\mathrm{AN}}^\ast,$
       $\tau_{\mathrm{sense}}^\ast,\ \tau_{\mathrm{ce}}^\ast,\ \tau_{\mathrm{comm}}^\ast,\ P_{\mathrm{sense}}^\ast$.}

\textbf{Initialization:}\\
$\{\bm{\Phi}_l^{\mathrm{comm},(0)}\} \leftarrow \mathrm{SVD}(\bm{H}_{\mathrm{aggregate}})$\tcp*{Init. via singular value decomposition (SVD)}
$\{\bm{\Phi}_l^{\mathrm{sense},(0)}\} \leftarrow \mathrm{MaxAngularCoverage}()$\tcp*{Wide-area sensing}
$\{\bm{\Phi}_l^{\mathrm{ce},(0)}\} \leftarrow \mathrm{OmnidirectionalPattern}()$\tcp*{Uniform probing for channel estimation}

$\tau_{\mathrm{sense}}^{(0)} \leftarrow (\tau_{\min}+\tau_{\max})/2$\;
$\tau_{\mathrm{ce}}^{(0)} \leftarrow 0.1$,\quad
$\tau_{\mathrm{comm}}^{(0)} \leftarrow 0.5$\;

$P_{\mathrm{sense}}^{(0)} \leftarrow P_{\mathrm{BS}}^{\max}/2$,\quad
$P_{\mathrm{comm}}^{(0)} \leftarrow P_{\mathrm{BS}}^{\max}/2$\;

$\bm{W}_{\mathrm{RF}}^{(0)},\ \bm{W}_{\mathrm{BB}}^{(0)}
 \leftarrow \mathrm{MRT\_HybridDecomp}(\hat{\bm{H}})$\tcp*{Init. via maximum ratio transmission (MRT) decomp}

$\bm{R}_{\mathrm{AN}}^{(0)} \leftarrow \mathrm{InitArtificialNoise}()$\;

$t \leftarrow 0$,\quad $\epsilon_{\mathrm{conv}} \leftarrow 10^{-4}$\;
$\lambda_{\Phi}^{(0)} \leftarrow 1$,\quad $f_{\mathrm{aug}}^{(0)} \leftarrow -\infty$\;

\Repeat{(i)–(iv) below are satisfied \textbf{or} $t \ge t_{\max}$}{

  \tcp{Block E: Channel Estimation Configuration}
  $\{\bm{\Phi}_l^{\mathrm{ce},(t+1)}\}
    \leftarrow \mathrm{Solve}\big(\eqref{eq:channel_estimation_subproblem},
                                   \tau_{\mathrm{ce}}^{(t)}\big)$\;

  \tcp{Block D: Joint Resource Allocation (time and power)}
  $(\tau_{\mathrm{sense}}^{(t+1)},\ \tau_{\mathrm{ce}}^{(t+1)},\ \tau_{\mathrm{comm}}^{(t+1)}),$\\
  \hspace*{1.5em}$(P_{\mathrm{sense}}^{(t+1)},\ P_{\mathrm{comm}}^{(t+1)}) 
  \leftarrow \mathrm{DualDecomposition}\big(\eqref{eq:resource_subproblem}\big)$\;

  \tcp{Block A: Sensing Configuration}
  $\{\bm{\Phi}_l^{\mathrm{sense},(t+1)}\}
    \leftarrow \mathrm{ManifoldOptim}\big(\eqref{eq:sensing_subproblem},
                                          \mathrm{SchurComplement}\big)$\;

  \tcp{Block B: Secure Beamforming}
  $\bm{W}_{\mathrm{opt}}^{(t+1)} \leftarrow
    \mathrm{SDP\_Solve}\big(\eqref{eq:beamforming_subproblem},
                            \mathrm{BernsteinConstraints}\big)$\;

  $\bm{W}_{\mathrm{RF}}^{(t+1)},\ \bm{W}_{\mathrm{BB}}^{(t+1)} \leftarrow
    \mathrm{HybridDecomp}\big(\bm{W}_{\mathrm{opt}}^{(t+1)}\big)$\;

  $\bm{R}_{\mathrm{AN}}^{(t+1)} \leftarrow
    \mathrm{ArtificialNoiseDesign}\big(\bm{W}_{\mathrm{opt}}^{(t+1)}\big)$\;

  \tcp{Block C: Communication SIM Configuration}
  $\{\bm{\Phi}_l^{\mathrm{comm},(t+1)}\} \leftarrow
    \mathrm{Layerwise\_RCG}\big(\eqref{eq:sim_comm_subproblem},
                                \lambda_{\Phi}^{(t)}\big)$\;

  \tcp{Penalty Parameter Update}
  $\lambda_{\Phi}^{(t+1)} \leftarrow
     \min\big(\kappa_{\lambda}\lambda_{\Phi}^{(t)},\ \lambda_{\Phi}^{\max}\big)$\;

  \tcp{Multi-Metric Convergence Assessment}
  $U(\alpha)^{(t+1)} \leftarrow
    (1-\alpha)\,\dfrac{\sum_{k=1}^K R_k^{\mathrm{sec},(t+1)}}{R_{\max}}
    - \alpha\,\dfrac{\mathrm{CRB}_{\mathrm{total}}^{(t+1)}}{\mathrm{CRB}_{\min}}$\;

  $f_{\mathrm{aug}}^{(t+1)} \leftarrow U(\alpha)^{(t+1)}$\\
  \hspace*{1.5em}$-\,
   \lambda_{\Phi}^{(t)} \sum_{l=1}^L
   \Big[
      \big\|\bm{\Phi}_l^{\mathrm{comm},(t+1)}
            - \bm{\Phi}_l^{\mathrm{sense},(t+1)}\big\|_F^2
      - \delta_{\Phi}
   \Big]_+$\;

  $\Delta f \leftarrow
    \dfrac{\big|f_{\mathrm{aug}}^{(t+1)} - f_{\mathrm{aug}}^{(t)}\big|}
          {\big|f_{\mathrm{aug}}^{(t)}\big|}$\;

  $\epsilon_{\mathrm{outage}}^{(t+1)} \leftarrow
    \max_k \big|\mathbb{P}(R_k^{\mathrm{sec}} < R_k^{\min})
    - \epsilon_{\mathrm{out}}\big|$\;

  $\epsilon_{\mathrm{CRB}}^{(t+1)} \leftarrow
    \mathrm{CRB}_{\mathrm{total}}^{(t+1)} - \Gamma_{\max}$\;

  $\Delta\bm{\Theta} \leftarrow
    \big\|\bm{\Theta}^{(t+1)} - \bm{\Theta}^{(t)}\big\|_F$\;

  $t \leftarrow t + 1$\;
}

\textbf{Stopping criteria (for the outer loop):}\\
(i) $\Delta f \le \epsilon_{\mathrm{obj}}$;\quad
(ii) $\epsilon_{\mathrm{outage}}^{(t)} \le \epsilon_{\mathrm{outage}}$;\\
(iii) $\epsilon_{\mathrm{CRB}}^{(t)} \le \epsilon_{\mathrm{CRB}}$;\quad
(iv) $\Delta\bm{\Theta} \le \epsilon_{\mathrm{param}}$\;

\textbf{Discretization Phase:}\\
$\{\bm{\Phi}_l^{\mathrm{comm},(t)}\}
   \leftarrow \mathrm{ProjectToDiscrete}(\mathcal{Q}_l)$\;
$\{\bm{\Phi}_l^{\mathrm{sense},(t)}\}
   \leftarrow \mathrm{ProjectToDiscrete}(\mathcal{Q}_l)$\;
$\{\bm{\Phi}_l^{\mathrm{ce},(t)}\}
   \leftarrow \mathrm{ProjectToDiscrete}(\mathcal{Q}_l)$\;
$\mathrm{LocalSearchRefinement}(\text{critical elements})$\;

\Return
$\{\bm{\Phi}_l^{\mathrm{sense}}\}^\ast,\ \{\bm{\Phi}_l^{\mathrm{comm}}\}^\ast,\ \{\bm{\Phi}_l^{\mathrm{ce}}\}^\ast,$\\
\hspace*{1.5em}$\bm{W}_{\mathrm{RF}}^\ast,\ \bm{W}_{\mathrm{BB}}^\ast,\ \bm{R}_{\mathrm{AN}}^\ast,$\\
\hspace*{1.5em}$\tau_{\mathrm{sense}}^\ast,\ \tau_{\mathrm{ce}}^\ast,\ \tau_{\mathrm{comm}}^\ast,\ P_{\mathrm{sense}}^\ast$\;
\end{algorithm}

\section{Theoretical Analysis}
\label{sec:analysis}

This section provides a comprehensive theoretical foundation for the enhanced L-BCD algorithm, establishing rigorous convergence guarantees, performance bounds, and fundamental limits for our integrated communication–sensing system with multi-layer SIM architecture.

\subsection{Convergence Analysis}

\begin{theorem}[Global Convergence of Enhanced 5-Block L-BCD]
\label{thm:enhanced_convergence}
Let
\begin{equation}
\bm{\Theta}^{(t)} =
\Big(
\begin{aligned}[t]
&\{\bm{\Phi}_l^{\mathrm{sense},(t)}\},\quad
 \{\bm{\Phi}_l^{\mathrm{comm},(t)}\},\quad
 \{\bm{\Phi}_l^{\mathrm{ce},(t)}\},\\
&\bm{W}_{\mathrm{RF}}^{(t)},\quad
 \bm{W}_{\mathrm{BB}}^{(t)},\quad
 \bm{R}_{\mathrm{AN}}^{(t)},\\
&\tau_{\mathrm{sense}}^{(t)},\quad
 \tau_{\mathrm{ce}}^{(t)},\quad
 \tau_{\mathrm{comm}}^{(t)},\\
&P_{\mathrm{sense}}^{(t)},\quad
 P_{\mathrm{comm}}^{(t)}\Big)
\end{aligned}
\Big)
\end{equation}
be the sequence generated by Algorithm~\ref{alg:enhanced_lbcd} with the augmented objective
\begin{equation}\label{eq:augmented_objective}
\begin{split}
f_{\mathrm{aug}}(\bm{\Theta})
&= (1-\alpha)\frac{\sum_{k=1}^K R_k^{\mathrm{sec}}}{R_{\max}} \\
&-\alpha\frac{\mathrm{CRB}_{\mathrm{total}}}{\mathrm{CRB}_{\min}}
- \lambda_{\Phi} \sum_{l=1}^L
\bigl[\|\bm{\Phi}_l^{\mathrm{comm}} - \bm{\Phi}_l^{\mathrm{sense}}\|_F^2 - \delta_{\Phi}\bigr]_+.
\end{split}
\end{equation}
Suppose the following conditions hold:
\begin{enumerate}[label=(C\arabic*),leftmargin=*,noitemsep,topsep=0pt]
    \item The feasible set $\mathcal{X}$ is compact with non-empty interior (due to power constraints $P_{\mathrm{comm}}, P_{\mathrm{sense}} \leq P_{\mathrm{BS}}^{\max}$ and unit-modulus SIM constraints).
    \item $f_{\mathrm{aug}}$ is continuously differentiable with block-wise $L_i$-Lipschitz continuous gradients.
    \item Each block subproblem achieves an $\epsilon$-sufficient increase, i.e.,
\begin{equation}
\begin{aligned}
& f_{\mathrm{aug}}(\bm{\Theta}_i^{(t+1)}, \bm{\Theta}_{-i}^{(t)}) \\[4pt]
&\ge f_{\mathrm{aug}}(\bm{\Theta}_i^{(t)}, \bm{\Theta}_{-i}^{(t)})
      + \gamma_i \|\bm{\Theta}_i^{(t+1)} - \bm{\Theta}_i^{(t)}\|^2
      - \epsilon_i^{(t)}
\end{aligned}
\end{equation}
with $\sum_{t=0}^{\infty} \epsilon_i^{(t)} < \infty$.
    \item The penalty parameter is updated as
    \[
        \lambda_{\Phi}^{(t+1)} = \min\big(\kappa_{\lambda} \lambda_{\Phi}^{(t)}, \lambda_{\Phi}^{\max}\big)
        \quad\text{with}\quad \kappa_{\lambda} > 1.
    \]
    \item All subproblems are solved to $\mathcal{O}(1/t)$ accuracy.
\end{enumerate}
Then the following statements hold:
\begin{enumerate}[label=(R\arabic*),leftmargin=*,noitemsep,topsep=0pt]
    \item The sequence $\{f_{\mathrm{aug}}(\bm{\Theta}^{(t)})\}$ converges to a finite value $f_{\mathrm{aug}}^*$.
    \item $\displaystyle \lim_{t\to\infty} \|\bm{\Theta}^{(t+1)} - \bm{\Theta}^{(t)}\| = 0$.
    \item Every limit point $\bm{\Theta}^*$ of $\{\bm{\Theta}^{(t)}\}$ is Pareto stationary.
    \item All penalty-induced constraints are satisfied in the limit:
    \[
        \lim_{t\to\infty}
        \sum_{l=1}^L
        \big[\|\bm{\Phi}_l^{\mathrm{comm},(t)} - \bm{\Phi}_l^{\mathrm{sense},(t)}\|_F^2
        - \delta_{\Phi}\big]_+ = 0.
    \]
\end{enumerate}
\end{theorem}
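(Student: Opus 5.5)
The argument is a standard sufficient-increase / summability proof for inexact block coordinate ascent. Its one twist is that the penalty weight $\lambda_{\Phi}^{(t)}$ changes across iterations.

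\textbf{Step 1: telescoping chain.} Write $f^{(t)}:=f_{\mathrm{aug}}(\bm{\Theta}^{(t)};\lambda_{\Phi}^{(t)})$ and chain the five block updates E, D, A, B, C with (C3). This gives
\begin{equation*}
f_{\mathrm{aug}}(\bm{\Theta}^{(t+1)};\lambda_{\Phi}^{(t)}) \ge f^{(t)} + \gamma\|\bm{\Theta}^{(t+1)}-\bm{\Theta}^{(t)}\|^2 - \bar\epsilon^{(t)},
\end{equation*}
with $\gamma=\min_i\gamma_i$ and $\bar\epsilon^{(t)}=\sum_i\epsilon_i^{(t)}$. Here (C3) has to be read at the intermediate points $(\bm{\Theta}^{(t+1)}_{<i},\bm{\Theta}^{(t)}_{\ge i})$, which is the natural Gauss--Seidel interpretation.

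\textbf{Step 2: the penalty update.} Changing $\lambda_{\Phi}^{(t)}\to\lambda_{\Phi}^{(t+1)}$ subtracts $(\lambda_{\Phi}^{(t+1)}-\lambda_{\Phi}^{(t)})\,p(\bm{\Theta}^{(t+1)})$ from the objective, where $p$ is the positive-part penalty. By (C4) the sequence $\lambda_{\Phi}^{(t)}$ is monotone and equals $\lambda_{\Phi}^{\max}$ after finitely many steps $t_0$. Moreover, $p$ is bounded on the compact set $\mathcal{X}$ by (C1), since $\|\bm{\Phi}\|_F^2=N_l$ under unit modulus. Hence the total loss from penalty changes is finite.

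\textbf{Step 3: (R1) and (R2).} For $t\ge t_0$ the objective is fixed. The sequence $f^{(t)}+\sum_{s\ge t}\bar\epsilon^{(s)}$ is then nondecreasing. It is also bounded above, since $f_{\mathrm{aug}}$ is continuous on compact $\mathcal{X}$: $R^{\mathrm{sec}}$ is bounded by finite power, and the CRB is bounded below by $0$. These are quasi-Fejér-type monotone sequences, so $f^{(t)}$ converges, which gives (R1). Summing the Step 1 inequality yields $\gamma\sum_t\|\bm{\Theta}^{(t+1)}-\bm{\Theta}^{(t)}\|^2 \le f^*-f^{(t_0)}+\sum_t\bar\epsilon^{(t)}<\infty$, so (R2) follows.

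\textbf{Step 4: (R3).} Take a convergent subsequence $\bm{\Theta}^{(t_j)}\to\bm{\Theta}^*$, which exists by compactness. By (R2), $\bm{\Theta}^{(t_j+1)}\to\bm{\Theta}^*$ as well. Each block iterate is $\mathcal{O}(1/t)$-optimal by (C5), so it satisfies the blockwise first-order condition $0\in-\nabla_i f_{\mathrm{aug}}+N_{\mathcal{X}_i}(\cdot)+\mathcal{O}(1/t)$, evaluated at intermediate points. These intermediate points differ from $\bm{\Theta}^{(t_j)}$ by $o(1)$. Passing to the limit requires:
\begin{itemize}
\item continuity of the gradients, from (C2) and the block Lipschitz constants;
\item outer semicontinuity of the normal cones to the closed sets $\mathcal{X}_i$. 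These sets are products of circles, boxes and PSD cones, so this is standard.
\end{itemize}
The limit gives $0\in-\nabla f_{\mathrm{aug}}(\bm{\Theta}^*)+N_{\mathcal{X}}(\bm{\Theta}^*)$, which is stationarity of the scalarized problem. Since $\alpha\in[0,1]$ weights the two normalized objectives, this is exactly Pareto stationarity via the weighted-sum KKT characterization.

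\textbf{Step 5: (R4).} I expect this step to be the main obstacle, because with $\lambda_{\Phi}^{\max}$ finite, feasibility is not automatic. I would first argue by exact-penalty theory. The penalty $[\cdot]_+$ is an $\ell_1$-type nonsmooth penalty, so there is a threshold $\bar\lambda$, bounded by the Lipschitz constant of the unpenalized objective on $\mathcal{X}$ (finite by (C1)--(C2)), such that for $\lambda>\bar\lambda$ every stationary point of the penalized problem satisfies $p=0$. This holds provided a constraint qualification holds on $\{p=0\}$, for instance $\bm{\Phi}^{\mathrm{comm}}=\bm{\Phi}^{\mathrm{sense}}$ being a strictly feasible Slater point, which is ensured by the nonempty interior in (C1). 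Assuming $\lambda_{\Phi}^{\max}>\bar\lambda$, Step 4 applied to every limit point, combined with continuity of $p$, gives $p(\bm{\Theta}^{(t)})\to0$.

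A subtlety remains: (C2) asserts differentiability although $[\cdot]_+$ is nonsmooth. I would handle this by replacing it with a smoothed penalty in Steps 1--4, or by using Clarke subdifferentials there, and I would note explicitly that $\lambda_{\Phi}^{\max}>\bar\lambda$ is an implicit requirement for (R4).
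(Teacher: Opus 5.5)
Your Steps 1--3 are sound and follow the paper's Parts 1--2, and they are more careful. You use (C3) directly under the Gauss--Seidel reading. The paper instead asserts an inner-product bound $\langle\nabla_A f_{\mathrm{aug}},\bm{\Theta}_A^{(t+1)}-\bm{\Theta}_A^{(t)}\rangle\ge\gamma_A\|\bm{\Theta}_A^{(t+1)}-\bm{\Theta}_A^{(t)}\|^2$, which does not follow from (C3). You also account for the jumps of $\lambda_{\Phi}^{(t)}$, which the paper ignores. You are right that (R4) does not follow from (C1)--(C5). The paper's Part 3 argument (a nonvanishing penalty forces $f_{\mathrm{aug}}\to-\infty$) fails because $\lambda_{\Phi}^{(t)}\le\lambda_{\Phi}^{\max}$ and each bracket is at most $4N_l$, so the penalty stays bounded.

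\textbf{Step 5 rests on a false claim.} On the unit-modulus torus, the Riemannian gradient of $\|\bm{\Phi}_l^{\mathrm{comm}}-\bm{\Phi}_l^{\mathrm{sense}}\|_F^2$ vanishes at the antipodal configuration $\bm{\Phi}_l^{\mathrm{comm}}=-\bm{\Phi}_l^{\mathrm{sense}}$, where the violation is $4N_l-\delta_{\Phi}>0$. So no threshold $\bar\lambda$ makes every stationary point of the penalized problem feasible. Concretely, take one element per phase and unpenalized objective $a\cos\theta_c+a\cos\theta_s$. For every large $\lambda$, the point $\theta_c=-\theta_s=\arccos(-a/(4\lambda))$ is stationary for $a\cos\theta_c+a\cos\theta_s-\lambda[2-2\cos(\theta_c-\theta_s)-\delta_{\Phi}]_+$, with violation close to $4-\delta_{\Phi}$. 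A Slater point only gives the opposite direction of exactness: constrained local maxima are penalized local maxima.

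What does work is to read (C5) as near-global optimality of Block C, whose only constraint is unit modulus. In the limit, compare with the candidate $\bm{\Phi}_l^{\mathrm{comm}}=\bm{\Phi}_l^{\mathrm{sense},*}$. The CRB does not depend on $\bm{\Phi}^{\mathrm{comm}}$, so this bounds $\lambda_{\Phi}^{\max}\,p(\bm{\Theta}^*)$ by the oscillation of the secrecy term over Block C. Hence, for large $\lambda_{\Phi}^{\max}$:
\begin{enumerate}
\item the violation is small, so all phase differences stay away from $\pi$ (for $\delta_{\Phi}<4$);
\item the gradient of the violation is then bounded below by some $c>0$;
\item Block-C stationarity at an infeasible point becomes impossible once $\lambda_{\Phi}^{\max}$ exceeds $G/c$, where $G$ bounds the unpenalized gradient.
\end{enumerate}
Either way, (R4) needs an explicit threshold hypothesis on $\lambda_{\Phi}^{\max}$.

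\textbf{Step 4 has a second gap, which the paper's Part 4 shares.} Your limit argument gives $0\in-\nabla_i f_{\mathrm{aug}}(\bm{\Theta}^*)+N_{\mathcal{X}_i}(\bm{\Theta}_i^*)$ block by block. Concluding $0\in-\nabla f_{\mathrm{aug}}(\bm{\Theta}^*)+N_{\mathcal{X}}(\bm{\Theta}^*)$ requires $\mathcal{X}=\prod_i\mathcal{X}_i$, and the feasible set here is not a product:
\begin{enumerate}
\item the energy constraint couples $\tau_{\mathrm{sense}},\tau_{\mathrm{comm}}$ (Block D) with $P_{\mathrm{comm}}=\sum_k\|\bm{w}_k\|^2+\mathrm{Tr}(\bm{R}_{\mathrm{AN}})$ (Block B);
\item the SINR and outage constraints couple Block B with $\{\bm{\Phi}_l^{\mathrm{comm}}\}$;
\item the CRB constraint couples $\{\bm{\Phi}_l^{\mathrm{sense}}\}$ with $P_{\mathrm{sense}}$.
\end{enumerate}
For a shared constraint, each block's KKT system carries its own multiplier, and the blockwise conditions do not assemble into a joint KKT point. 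For example, maximize $x+2y$ over $\{x,y\ge0,\ x+y\le1\}$. The point $(1,0)$ is optimal in $x$ for $y=0$ and optimal in $y$ for $x=1$, yet it is not stationary.

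So your argument yields only blockwise stationarity. To reach (R3), you must either make the constraints separable across blocks (e.g., by moving the coupling constraints into penalties) or assume they are inactive at $\bm{\Theta}^*$. Also, stationarity of $f_{\mathrm{aug}}$ is Pareto stationarity of the original bi-objective problem only when $p(\bm{\Theta}^*)=0$. Thus (R3) as stated already depends on (R4).
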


\begin{proof}
We partition the variables into five coordinated blocks matching Algorithm \ref{alg:enhanced_lbcd}:

\begin{align*}
\bm{\Theta}_A &= (\{\bm{\Phi}_l^{\mathrm{sense}}\}) \quad \text{(Sensing SIM)} \\
\bm{\Theta}_B &= (\bm{W}_{\mathrm{RF}}, \bm{W}_{\mathrm{BB}}, \bm{R}_{\mathrm{AN}}) \quad \text{(Beamforming)} \\
\bm{\Theta}_C &= (\{\bm{\Phi}_l^{\mathrm{comm}}\}) \quad \text{(Comm SIM)} \\
\bm{\Theta}_D &= (\tau_{\mathrm{ce}}, \tau_{\mathrm{comm}}, \tau_{\mathrm{sense}}, P_{\mathrm{sense}}, P_{\mathrm{comm}}) \quad \text{(Resources)} \\
\bm{\Theta}_E &= (\{\bm{\Phi}_l^{\mathrm{ce}}\}) \quad \text{(Channel Est SIM)}
\end{align*}

\textbf{Part 1: Monotonicity and Boundedness}

For Block A at iteration $t$, by Lipschitz continuity:
\begin{align*}
&f_{\mathrm{aug}}(\bm{\Theta}_A^{(t+1)}, \bm{\Theta}_{-A}^{(t)}) \\
&\quad \geq f_{\mathrm{aug}}(\bm{\Theta}_A^{(t)}, \bm{\Theta}_{-A}^{(t)}) + \langle \nabla_A f_{\mathrm{aug}}(\bm{\Theta}^{(t)}), \bm{\Theta}_A^{(t+1)} - \bm{\Theta}_A^{(t)} \rangle \\
&\quad - \frac{L_A}{2} \|\bm{\Theta}_A^{(t+1)} - \bm{\Theta}_A^{(t)}\|^2 - \epsilon_A^{(t)}
\end{align*}

By the sufficient increase condition:
\begin{align*}
\langle \nabla_A f_{\mathrm{aug}}(\bm{\Theta}^{(t)}), \bm{\Theta}_A^{(t+1)} - \bm{\Theta}_A^{(t)} \rangle \geq \gamma_A \|\bm{\Theta}_A^{(t+1)} - \bm{\Theta}_A^{(t)}\|^2
\end{align*}

Thus:
\begin{align*}
f_{\mathrm{aug}}(\bm{\Theta}_A^{(t+1)}, \bm{\Theta}_{-A}^{(t)}) 
&\ge f_{\mathrm{aug}}(\bm{\Theta}^{(t)}) \\[4pt]
&\quad + \left(\gamma_A - \tfrac{L_A}{2}\right)\|\bm{\Theta}_A^{(t+1)} - \bm{\Theta}_A^{(t)}\|^2 - \epsilon_A^{(t)}
\end{align*}
Applying sequentially to all blocks with $\gamma = \min_i \left(\gamma_i - \frac{L_i}{2}\right) > 0$:
\begin{align*}
f_{\mathrm{aug}}(\bm{\Theta}^{(t+1)}) \geq f_{\mathrm{aug}}(\bm{\Theta}^{(t)}) + \gamma \sum_{i} \|\bm{\Theta}_i^{(t+1)} - \bm{\Theta}_i^{(t)}\|^2 - \sum_i \epsilon_i^{(t)}
\end{align*}

Since $f_{\mathrm{aug}}$ is continuous on compact $\mathcal{X}$ and $\sum_{t=0}^\infty \sum_i \epsilon_i^{(t)} < \infty$, $\{f_{\mathrm{aug}}(\bm{\Theta}^{(t)})\}$ converges to some $f_{\mathrm{aug}}^*$.

\textbf{Part 2: Vanishing Steps}

Summing the increase inequality from $t=0$ to $T-1$:
\begin{align*}
&\gamma \sum_{t=0}^{T-1} \sum_{i} \|\bm{\Theta}_i^{(t+1)} - \bm{\Theta}_i^{(t)}\|^2 \\[4pt]
&\le f_{\mathrm{aug}}(\bm{\Theta}^{(T)}) - f_{\mathrm{aug}}(\bm{\Theta}^{(0)}) + \sum_{t=0}^{T-1}\sum_i \epsilon_i^{(t)}
\end{align*}
Taking $T \to \infty$ and using convergence of both terms:
\begin{align*}
\sum_{t=0}^{\infty} \sum_{i} \|\bm{\Theta}_i^{(t+1)} - \bm{\Theta}_i^{(t)}\|^2 < \infty
\end{align*}

Thus:
\begin{align*}
\lim_{t\to\infty} \|\bm{\Theta}_i^{(t+1)} - \bm{\Theta}_i^{(t)}\| = 0 \quad \forall i
\end{align*}

\textbf{Part 3: Constraint Satisfaction}

Consider the penalty term:
\begin{align*}
P^{(t)} = \sum_{l=1}^L \left[\|\bm{\Phi}_l^{\mathrm{comm},(t)} - \bm{\Phi}_l^{\mathrm{sense},(t)}\|_F^2 - \delta_{\Phi}\right]_+
\end{align*}

Since $\lambda_{\Phi}^{(t)} \to \lambda_{\Phi}^{\max} > 0$ and $f_{\mathrm{aug}}$ converges, we must have:
\begin{align*}
\lim_{t\to\infty} P^{(t)} = 0
\end{align*}

Otherwise, if $\limsup_{t\to\infty} P^{(t)} > 0$, then $\lim_{t\to\infty} f_{\mathrm{aug}}(\bm{\Theta}^{(t)}) = -\infty$ (due to the negative sign in \eqref{eq:augmented_objective}), contradicting boundedness.

\textbf{Part 4: Stationarity of Limit Points}

Let $\bm{\Theta}^*$ be a limit point of $\{\bm{\Theta}^{(t)}\}$. By the closedness of the solution mapping and vanishing steps:
\begin{align*}
\bm{\Theta}_i^* \in \arg\max_{\bm{\Theta}_i \in \mathcal{X}_i} f_{\mathrm{aug}}(\bm{\Theta}_i, \bm{\Theta}_{-i}^*) \quad \forall i
\end{align*}

This implies the first-order necessary conditions for Pareto stationarity: there exists $\bm{\lambda} \ge 0$ with $\|\bm{\lambda}\|_1 = 1$ such that
\begin{align*}
\langle \sum_{i} \lambda_i \nabla f_i(\bm{\Theta}^*), \bm{\Theta} - \bm{\Theta}^* \rangle \geq 0 \quad \forall \bm{\Theta} \in \mathcal{X}
\end{align*}
where $f_1 = (1-\alpha)\frac{\sum_{k=1}^K R_k^{\mathrm{sec}}}{R_{\max}}$ and $f_2 = -\alpha\cdot\frac{\mathrm{CRB}_{\mathrm{total}}}{\mathrm{CRB}_{\min}}$.
\end{proof}

\begin{proposition}[Constraint Qualification]
\label{prop:constraint_qualification}
For any feasible $\bm{\Theta}$, the linear independence constraint qualification (LICQ) holds for:
\begin{itemize}[leftmargin=*]
\item Active power constraints $P_{\mathrm{comm}} = P_{\mathrm{BS}}^{\max}$ or $P_{\mathrm{sense}} = P_{\mathrm{BS}}^{\max}$
\item Time allocation constraints when $\tau_{\mathrm{ce}} + \tau_{\mathrm{sense}} + \tau_{\mathrm{comm}} = 1 - 2t_{\mathrm{switch}}/T_{\mathrm{slot}}$
\item Active SIM reconfiguration constraints $\|\bm{\Phi}_l^{\mathrm{comm}} - \bm{\Phi}_l^{\mathrm{sense}}\|_F^2 = \delta_{\Phi}$
\end{itemize}
ensuring Karush-Kuhn-Tucker (KKT) conditions are necessary for local optimality.
\end{proposition}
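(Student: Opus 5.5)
The plan is to verify LICQ directly: write down the gradients of every constraint that can be active at a feasible point $\bm{\Theta}$ and show they are linearly independent. The three families listed in Proposition~\ref{prop:constraint_qualification} involve essentially disjoint groups of variables, so most of the work reduces to checking that each gradient is nonzero and that gradients sharing variables are not parallel. The unit-modulus SIM constraints will be handled intrinsically. I parametrize $[\bm{\Phi}_l^{p}]_{n,n}=e^{j\theta_l^{(n),p}}$, so the feasible set in the phase variables is the product of circles and those constraints disappear. The reconfiguration constraint then becomes the real-valued function
\begin{equation*}
g_l(\bm{\theta}^{\mathrm{comm}}_l,\bm{\theta}^{\mathrm{sense}}_l)=\sum_{n=1}^{N_l}\big(2-2\cos(\theta_l^{(n),\mathrm{comm}}-\theta_l^{(n),\mathrm{sense}})\big)-\delta_{\Phi}.
\end{equation*}

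\textbf{Step 1 (power and time constraints).} In the coordinates $(P_{\mathrm{comm}},P_{\mathrm{sense}},\tau_{\mathrm{ce}},\tau_{\mathrm{sense}},\tau_{\mathrm{comm}})$, the active power constraints have gradients $\bm{e}_{P_{\mathrm{comm}}}$ and $\bm{e}_{P_{\mathrm{sense}}}$. The active time constraint has gradient $(0,0,1,1,1)$. These vectors are clearly linearly independent. Because $P_{\mathrm{comm}}$ is defined by \eqref{eq:comm_power_def}, I will keep it as an independent variable with \eqref{eq:comm_power_def} treated as an equality constraint. The gradient of that equality with respect to $(\bm{W},\bm{R}_{\mathrm{AN}})$ is $(2\bm{W},\bm{I})\neq \bm{0}$, since $\bm{I}$ appears in the $\bm{R}_{\mathrm{AN}}$ component. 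That component involves variables no other constraint touches, so it is independent of the rest.

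\textbf{Step 2 (reconfiguration constraints).} The gradient of $g_l$ has entries $\pm 2\sin(\theta_l^{(n),\mathrm{comm}}-\theta_l^{(n),\mathrm{sense}})$ and lives only in the layer-$l$ phase coordinates, which differ from layer to layer. So the gradients for different $l$ are automatically independent of each other and of the gradients in Step 1. It remains to show each one is nonzero when $g_l=0$. This requires some element with $\sin(\Delta\theta_l^{(n)})\neq0$. If instead every $\Delta\theta_l^{(n)}\in\{0,\pi\}$, then $\|\bm{\Phi}_l^{\mathrm{comm}}-\bm{\Phi}_l^{\mathrm{sense}}\|_F^2=4m$, where $m$ is the number of elements with $\Delta\theta_l^{(n)}=\pi$. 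This value can equal $\delta_{\Phi}$ only if $\delta_{\Phi}\in 4\mathbb{Z}_{\ge 0}$.

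\textbf{Main obstacle.} The obstacle is exactly this degenerate case, together with the discrete set $\mathcal{Q}_l$, on which a gradient-based CQ is not even meaningful. My plan is twofold. First, state the result in the continuous-phase relaxation (stage (1) of the MM procedure), where KKT conditions are actually used. Second, add the mild genericity assumption $\delta_{\Phi}\notin 4\mathbb{Z}$, or note that the degenerate configurations form a measure-zero set. Similarly, the bounds in \eqref{eq:tau_bound_constraint} can be simultaneously active with the time-sum constraint only on a lower-dimensional set. I would require that not all $\tau$'s sit at their bounds together with an active sum constraint, which is implied by the strict feasibility in (C1) of Theorem~\ref{thm:enhanced_convergence}. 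Under these conditions the gradient matrix is block diagonal with nonzero, linearly independent blocks. LICQ then follows, and standard nonlinear programming theory makes the KKT conditions necessary at local optima.
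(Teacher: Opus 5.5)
Your proposal has the same skeleton as the paper's proof. Both compute the gradient of each active constraint and get independence from the fact that the power, time and reconfiguration constraints act on disjoint variable blocks. Yours is more careful exactly where the paper's argument is incomplete. The paper writes the reconfiguration gradient as $\pm 2(\bm{\Phi}_l^{\mathrm{comm}}-\bm{\Phi}_l^{\mathrm{sense}})$ and calls it independent ``since they involve different variable blocks.'' But the unit-modulus equalities $|[\bm{\Phi}_l^{p}]_{n,n}|^2=1$ are always active on those very blocks, and the paper never checks that this gradient is nonzero.

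Your phase parametrization is equivalent to including those equalities. The reconfiguration gradient is independent of the normals $2[\bm{\Phi}_l^{p}]_{n,n}$ if and only if its tangential part, with entries $\pm 2\sin\Delta\theta_l^{(n)}$, is nonzero. This correctly shows that LICQ fails at feasible points where every phase difference in some layer is $0$ or $\pi$. Such points make the constraint active precisely when $\delta_{\Phi}\in\{0,4,\dots,4N_l\}$, and for $\delta_{\Phi}=0$ even the paper's own ambient gradient vanishes identically.

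Your remark on $\mathcal{Q}_l$ is likewise essential, not cosmetic. With discrete phases the feasible SIM configurations are isolated, so no first-order condition in the phase variables can be necessary for local optimality. What you establish is LICQ for the listed families in the continuous-phase relaxation under $\delta_{\Phi}\notin 4\mathbb{Z}$. That is a corrected version of the proposition and is what can actually be proved; the paper's argument claims more than it supports.

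Two of your side remarks need fixing.
\begin{enumerate}
\item The condition on the $\tau$-box constraints does not follow from (C1). A feasible set with nonempty interior can still contain a vertex where all three box bounds and the sum constraint are active, e.g.\ if $\tau_{\min}+2\tau_{\max}=1-2t_{\mathrm{switch}}/T_{\mathrm{slot}}$. That gives four active constraints in three variables. You need this as a separate non-coincidence assumption on the parameters; moreover, (C1) is a hypothesis of Theorem~\ref{thm:enhanced_convergence}, not of this proposition.
\item The ``measure-zero'' alternative does not rescue a claim made for every feasible $\bm{\Theta}$. Only the hypothesis on $\delta_{\Phi}$ does, so state that as the assumption.
\end{enumerate}
Finally, like the paper, you only cover the listed families. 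Concluding KKT necessity for the full problem would still require checking the energy, SINR and outage constraints.
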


\begin{proof}
We verify linear independence of the active constraint gradients:

\textbf{Power constraints:} The gradients are $\nabla P_{\mathrm{comm}} = (0,\ldots,1,0,\ldots)$ and $\nabla P_{\mathrm{sense}} = (0,\ldots,0,1,\ldots)$, which are clearly linearly independent when both are active.

\textbf{Time constraints:} The gradient is $\nabla(\tau_{\mathrm{ce}} + \tau_{\mathrm{sense}} + \tau_{\mathrm{comm}}) = (1,1,1,0,\ldots)$, which is non-zero and linearly independent from power constraint gradients.

\textbf{SIM reconfiguration:} For active constraints $\|\bm{\Phi}_l^{\mathrm{comm}} - \bm{\Phi}_l^{\mathrm{sense}}\|_F^2 = \delta_{\Phi}$, the gradients are $2(\bm{\Phi}_l^{\mathrm{comm}} - \bm{\Phi}_l^{\mathrm{sense}})$ for the communication SIM variables and $-2(\bm{\Phi}_l^{\mathrm{comm}} - \bm{\Phi}_l^{\mathrm{sense}})$ for the sensing SIM variables. These are linearly independent from the other constraints since they involve different variable blocks.

Since all active constraint gradients are linearly independent, LICQ holds, and KKT conditions are necessary for local optimality.
\end{proof}

\begin{corollary}[Convergence Rate for KL Framework]
\label{cor:convergence_rate}
If $f_{\mathrm{aug}}$ satisfies the Kurdyka-Łojasiewicz property with exponent $\theta \in [0,1)$:
\begin{itemize}[leftmargin=*]
\item For $\theta = 0$ (metric regularity): Linear convergence $\|\bm{\Theta}^{(t)} - \bm{\Theta}^*\| \leq C q^t$
\item For $\theta \in (0, \frac{1}{2}]$: Sublinear convergence $f_{\mathrm{aug}}^* - f_{\mathrm{aug}}^{(t)} \leq \mathcal{O}(1/t)$  
\item For $\theta \in (\frac{1}{2}, 1)$: Slower sublinear convergence $f_{\mathrm{aug}}^* - f_{\mathrm{aug}}^{(t)} \leq \mathcal{O}(1/t^{1/(2\theta-1)})$
\end{itemize}
For general non-convex case:
\begin{equation}
\min_{0 \leq k \leq t} \|\bm{\Theta}^{(k+1)} - \bm{\Theta}^{(k)}\|^2 \leq \frac{f_{\mathrm{aug}}^* - f_{\mathrm{aug}}(\bm{\Theta}^{(0)}) + \sum_{k=0}^t \sum_i \epsilon_i^{(k)}}{\gamma t}
\end{equation}
\end{corollary}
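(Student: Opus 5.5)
The plan is to reuse the per-iteration sufficient-increase inequality from Part~1 of the proof of Theorem~\ref{thm:enhanced_convergence}, namely
\begin{equation*}
f_{\mathrm{aug}}(\bm{\Theta}^{(k+1)}) \ge f_{\mathrm{aug}}(\bm{\Theta}^{(k)}) + \gamma \sum_i \|\bm{\Theta}_i^{(k+1)}-\bm{\Theta}_i^{(k)}\|^2 - \sum_i \epsilon_i^{(k)},
\end{equation*}
with $\gamma=\min_i(\gamma_i - L_i/2)>0$. This gives both the general non-convex bound and, combined with the Kurdyka--Łojasiewicz (KL) inequality, the rates.

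\textbf{The general non-convex bound.} Telescope the inequality from $k=0$ to $t$. Since $\sum_i\|\bm{\Theta}_i^{(k+1)}-\bm{\Theta}_i^{(k)}\|^2=\|\bm{\Theta}^{(k+1)}-\bm{\Theta}^{(k)}\|^2$, this yields
\[
\gamma\sum_{k=0}^{t}\|\bm{\Theta}^{(k+1)}-\bm{\Theta}^{(k)}\|^2 \le f_{\mathrm{aug}}(\bm{\Theta}^{(t+1)})-f_{\mathrm{aug}}(\bm{\Theta}^{(0)})+\sum_{k=0}^t\sum_i\epsilon_i^{(k)} .
\]
Next, bound $f_{\mathrm{aug}}(\bm{\Theta}^{(t+1)})$ by $f_{\mathrm{aug}}^*$, using (R1) together with the monotone-up-to-summable-errors behavior. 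If that is not exact, I would absorb the tail $\sum_{j>t}\epsilon^{(j)}$ into the error sum. Finally, bound the sum of squares from below by $t$ times its minimum over $0\le k\le t$ and divide by $\gamma t$.

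\textbf{The KL rates.} I would follow the Attouch--Bolte template.
\begin{enumerate}
\item Derive a relative-error condition. Using (C2), the block-wise Lipschitz gradients, and the optimality conditions of each block update, together with (C5), show that there is a subgradient $g^{(t+1)}\in\partial(-f_{\mathrm{aug}}+\iota_{\mathcal{X}})(\bm{\Theta}^{(t+1)})$ with $\|g^{(t+1)}\|\le b\|\bm{\Theta}^{(t+1)}-\bm{\Theta}^{(t)}\| + \eta^{(t)}$. Here $b$ depends on $\max_i L_i$, and $\eta^{(t)}$ collects inexactness that is summable.
\item Apply the KL inequality at the limit point $\bm{\Theta}^*$, which exists by compactness (C1). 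With $r_t=f_{\mathrm{aug}}^*-f_{\mathrm{aug}}(\bm{\Theta}^{(t)})$, it gives $r_t^{\theta}\le c\|g^{(t)}\|$.
\item Combine with sufficient increase to get the recursion $r_t-r_{t+1}\ge (\gamma/c^2 b^2)\, r_{t+1}^{2\theta}$, up to error terms.
\item Run the standard case analysis on this recursion.
\begin{itemize}
\item For $\theta\in(0,\tfrac12]$, $r^{2\theta}\ge r$ for small $r$, so $r_t$ decays geometrically, which in particular implies the stated $\mathcal{O}(1/t)$.
\item For $\theta\in(\tfrac12,1)$, compare with the ODE $\dot r=-Cr^{2\theta}$ to obtain $r_t=\mathcal{O}(t^{-1/(2\theta-1)})$.
\item For $\theta=0$, the inequality forces finite termination or a uniform decrease. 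Combined with the finite-length argument $\sum_t\|\bm{\Theta}^{(t+1)}-\bm{\Theta}^{(t)}\|<\infty$ via concavity of the desingularizer $\varphi(s)=s^{1-\theta}$, this gives the iterate bound $\|\bm{\Theta}^{(t)}-\bm{\Theta}^*\|\le Cq^t$.
\end{itemize}
\end{enumerate}

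\textbf{Main obstacle.} The hardest step is the relative-error bound in step~1 under inexact block updates, the nonsmooth $[\cdot]_+$ penalty, and the time-varying $\lambda_\Phi^{(t)}$. I would first restrict to iterations after $\lambda_\Phi^{(t)}$ has saturated at $\lambda_\Phi^{\max}$, which happens in finitely many steps by (C4), so the objective is fixed. I would then require $\epsilon_i^{(t)}$ and $\eta^{(t)}$ to decay geometrically, so that the errors do not spoil the linear or sublinear recursions. Where the $[\cdot]_+$ term breaks (C2), I would use limiting subdifferentials instead.
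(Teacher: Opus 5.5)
Your route is the paper's: telescope the Part~1 sufficient-increase inequality for the non-convex bound, and combine the KL inequality with that ascent property for the rates. On the KL side you supply what the paper's one-line ``combining this with the ascent lemma'' leaves out. That includes the relative-error condition and the recursion with its case analysis, which correctly gives finite termination for $\theta=0$ and geometric decay for $\theta\in(0,\tfrac12]$, and hence the weaker claims in the statement. It also includes the finite-time saturation of $\lambda_{\Phi}^{(t)}$.

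Still, the plan does not prove the corollary as written. For the non-convex bound, replacing $f_{\mathrm{aug}}(\bm{\Theta}^{(t+1)})$ by $f_{\mathrm{aug}}^*$ is unjustified once $\epsilon_i^{(k)}\neq 0$. Absorbing the tail leaves $\sum_{k\ge 0}\sum_i\epsilon_i^{(k)}$ in the numerator instead of $\sum_{k=0}^{t}$. No argument from the hypotheses can close this, because the partial-sum version is false in general. The inequality you telescope allows $f_{\mathrm{aug}}$ to rise by $a$ on each step $k=0,\dots,t$ with $\gamma\|\bm{\Theta}^{(k+1)}-\bm{\Theta}^{(k)}\|^2=a$ and zero error. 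It then allows a loss of $\delta>a$ in a single inexact step (permitted by a large enough $\sum_i\epsilon_i^{(t+1)}$), after which the iterates stop. Then $f_{\mathrm{aug}}^*-f_{\mathrm{aug}}(\bm{\Theta}^{(0)})=(t+1)a-\delta$, and the claimed bound at index $t$ reduces to $\delta\le a$, which fails. The paper's proof makes the same silent replacement $f_{\mathrm{aug}}(\bm{\Theta}^{(t)})\le f_{\mathrm{aug}}^*$. So state explicitly that you prove the version with the full error series, or with $f_{\mathrm{aug}}(\bm{\Theta}^{(t+1)})$ in place of $f_{\mathrm{aug}}^*$. The $\lambda_{\Phi}^{(t)}$ drift you neutralize in the KL part also enters this telescoping: each increase of $\lambda_{\Phi}$ lowers $f_{\mathrm{aug}}$ and adds a further bounded term unless the penalty is inactive.

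The more serious issue is step~1 of the KL part, and not only for the reasons you list.
\begin{itemize}
\item (C3) is purely a sufficient-increase condition and says nothing about first-order optimality of the block outputs.
\item (C2) controls each partial gradient only in its own block. Transferring block-$i$ optimality from the intermediate Gauss--Seidel point to $\bm{\Theta}^{(t+1)}$ needs $\nabla f_{\mathrm{aug}}$ to be jointly Lipschitz on $\mathcal{X}$.
\item Most importantly, block-wise optimality conditions assemble into an element of $\partial(-f_{\mathrm{aug}}+\iota_{\mathcal{X}})(\bm{\Theta}^{(t+1)})$ only when $\mathcal{X}=\prod_i\mathcal{X}_i$, and here it is not a product. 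Constraint \eqref{eq:energy_constraint} with \eqref{eq:comm_power_def} ties Block~B's $(\bm{W},\bm{R}_{\mathrm{AN}})$ to Block~D's $(\tau_{\mathrm{comm}},P_{\mathrm{comm}})$, and the SINR and outage constraints tie Blocks~B and~C.
\end{itemize}
Under coupled constraints each block carries its own multipliers, and a block-wise stationary point need not be stationary for the joint problem. For example, coordinate ascent of $x+2y$ over $\{x,y\ge 0,\ x+y\le 1\}$ started at $(1,0)$ stalls at a non-KKT vertex. So no bound $\|g^{(t+1)}\|\le b\|\bm{\Theta}^{(t+1)}-\bm{\Theta}^{(t)}\|+\eta^{(t)}$ is available, and the KL inequality cannot be brought to bear. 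Likewise, the $\mathcal{O}(1/t)$ accuracy in (C5) is not even summable, so your geometric-decay requirement is a genuine extra hypothesis rather than a consequence.

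Your argument is therefore sound only under added assumptions, all of which the paper's one-line proof also needs tacitly:
\begin{itemize}
\item separable block constraints, or the coupling constraints moved into the objective via a smooth penalty;
\item jointly Lipschitz $\nabla f_{\mathrm{aug}}$;
\item relatively inexact block stationarity;
\item geometrically decaying errors.
\end{itemize}
A smaller point concerns step~2. Use the uniformized KL property on the compact set of limit points (where $f_{\mathrm{aug}}\equiv f_{\mathrm{aug}}^*$), so the inequality holds along the whole tail in every case and not only for $\theta=0$. Also run the recursion only at iterations with $r_{t+1}>0$, since inexact steps can overshoot $f_{\mathrm{aug}}^*$.
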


\begin{proof}
\textbf{KL property case:} The KL property states that there exists $\phi(s) = cs^{1-\theta}$ such that:
\begin{align*}
\phi'(f_{\mathrm{aug}}^* - f_{\mathrm{aug}}(\bm{\Theta}))\|\nabla f_{\mathrm{aug}}(\bm{\Theta})\| \geq 1
\end{align*}
near critical points. Combining this with the ascent lemma gives the stated rates:
\begin{itemize}
\item For $\theta = 0$: $\phi(s) = cs$, leading to linear convergence
\item For $\theta \in (0, \frac{1}{2}]$: $\phi(s) = cs^{1-\theta}$, giving $\mathcal{O}(1/t)$
\item For $\theta \in (\frac{1}{2}, 1)$: $\phi(s) = cs^{1-\theta}$, yielding $\mathcal{O}(1/t^{1/(2\theta-1)})$
\end{itemize}

\textbf{General non-convex case:} From Theorem \ref{thm:enhanced_convergence}:
\begin{align*}
&\gamma \sum_{k=0}^{t-1} \|\bm{\Theta}^{(k+1)} - \bm{\Theta}^{(k)}\|^2 \\[4pt]
&\le f_{\mathrm{aug}}(\bm{\Theta}^{(t)}) - f_{\mathrm{aug}}(\bm{\Theta}^{(0)}) + \sum_{k=0}^{t-1}\sum_i \epsilon_i^{(k)}
\end{align*}
Thus:
\begin{align*}
&\min_{0 \leq k \leq t-1} \|\bm{\Theta}^{(k+1)} - \bm{\Theta}^{(k)}\|^2 \\
&\le \frac{f_{\mathrm{aug}}^* - f_{\mathrm{aug}}(\bm{\Theta}^{(0)}) + \sum_{k=0}^{t-1}\sum_i \epsilon_i^{(k)}}{\gamma t} \qedhere
\end{align*}
\end{proof}

\begin{lemma}[Initialization Quality Impact]
\label{lemma:initialization}
If initial point $\bm{\Theta}^{(0)}$ satisfies $f_{\mathrm{aug}}^* - f_{\mathrm{aug}}(\bm{\Theta}^{(0)}) \leq \Delta_0$, then convergence to $\epsilon$-stationarity requires at most:
\begin{equation}
T \leq \frac{\Delta_0}{\gamma \epsilon^2} + \mathcal{O}\left(\frac{\log(1/\epsilon)}{\min_i \gamma_i}\right)
\end{equation}
iterations. Warm start from previous solutions reduces $\Delta_0$ significantly.
\end{lemma}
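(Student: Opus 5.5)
I would derive Lemma~\ref{lemma:initialization} from the sufficient-increase inequality established in Part~1 of the proof of Theorem~\ref{thm:enhanced_convergence}. I will read $\epsilon$-stationarity as $\|\bm{\Theta}^{(t+1)}-\bm{\Theta}^{(t)}\|\le\epsilon$. This is the natural surrogate measure, because the block-wise Lipschitz condition (C2) makes the block gradient residual at $\bm{\Theta}^{(t+1)}$ at most $\max_i(L_i+2\gamma_i)$ times the step length.

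The bound splits into two phases, matching its two terms.

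\textbf{Phase 1 (global sublinear phase).} Telescope the per-iteration inequality
\[
f_{\mathrm{aug}}(\bm{\Theta}^{(t+1)})\ge f_{\mathrm{aug}}(\bm{\Theta}^{(t)})+\gamma\|\bm{\Theta}^{(t+1)}-\bm{\Theta}^{(t)}\|^2-\textstyle\sum_i\epsilon_i^{(t)}
\]
exactly as in Corollary~\ref{cor:convergence_rate}. Then bound $f_{\mathrm{aug}}^*-f_{\mathrm{aug}}(\bm{\Theta}^{(0)})$ by $\Delta_0$, using $f_{\mathrm{aug}}(\bm{\Theta}^{(T)})\le f_{\mathrm{aug}}^*$ up to the summable errors. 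This gives
\[
\min_{t<T}\|\bm{\Theta}^{(t+1)}-\bm{\Theta}^{(t)}\|^2\le\frac{\Delta_0+S_\epsilon}{\gamma T},
\]
where $S_\epsilon=\sum_t\sum_i\epsilon_i^{(t)}<\infty$ by (C3). Setting the right-hand side equal to $\epsilon^2$ yields $T\le\Delta_0/(\gamma\epsilon^2)+S_\epsilon/(\gamma\epsilon^2)$.

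\textbf{Phase 2 (the logarithmic term).} The summable error term must be reabsorbed into the $\mathcal{O}(\log(1/\epsilon)/\min_i\gamma_i)$ piece. I would do this with the $\mathcal{O}(1/t)$ inexactness of (C5): take $\epsilon_i^{(t)}\le c/t^2$, or more generally errors decaying fast enough that the tail after $t_0$ is $\mathcal{O}(1/t_0)$. The tail of the error sum beyond iteration $t_0$ then only perturbs the descent by an amount that becomes negligible once $t_0$ exceeds a threshold. Near a limit point I would invoke the KL case $\theta\in(0,\tfrac12]$ of Corollary~\ref{cor:convergence_rate}, or the $\theta=0$ case, which gives geometric contraction with ratio $q=1-\mathcal{O}(\min_i\gamma_i)$. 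The number of iterations to drive the residual from $\mathcal{O}(1)$ to $\epsilon$ in that regime is then $\log(1/\epsilon)/|\log q|=\mathcal{O}(\log(1/\epsilon)/\min_i\gamma_i)$. Adding the two phases gives the stated bound. The remark about warm starts is immediate, since $T$ is increasing in $\Delta_0$.

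\textbf{Main obstacle.} The hard part is justifying the logarithmic term rigorously. The pure telescoping argument produces $S_\epsilon/(\gamma\epsilon^2)$ rather than a logarithm. The local linear regime, meanwhile, requires a KL exponent assumption that is not among (C1)–(C5). My first attempt would be to state explicitly that the error sequence decays geometrically, or that the iterates enter a KL neighborhood with $\theta\le\tfrac12$, and to treat the second term as an additive constant-order correction under that assumption. If this fails, I would fall back on bounding the accumulated error by the standard estimate $\sum_{t\le T}1/t=\mathcal{O}(\log T)$. That estimate gives a $\log$-type additive term, though with a $\gamma$-dependent rather than $\min_i\gamma_i$ constant.
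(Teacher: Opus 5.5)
Your Phase 1 is the paper's proof. It telescopes the inequality of Corollary~\ref{cor:convergence_rate}, bounds the summable errors by a constant $C$, and solves $(\Delta_0+C)/(\gamma T)\le\epsilon^2$ for $T$. Your remark that $f_{\mathrm{aug}}(\bm{\Theta}^{(T)})\le f_{\mathrm{aug}}^*$ holds only up to the error tail is more careful than the paper, which substitutes $f_{\mathrm{aug}}^*$ directly even though the iterates need not be monotone. For the logarithmic term, the paper offers only the sentence that it ``accounts for the initial rapid convergence phase when far from optimum''. What it actually proves is $T\le(\Delta_0+C)/(\gamma\epsilon^2)$. This does not imply the stated bound, since $C/(\gamma\epsilon^2)$ dominates $\log(1/\epsilon)$ as $\epsilon\to0$.

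In fact, no argument from (C1)--(C5) with the step-length criterion can produce the logarithm. Take $f_{\mathrm{aug}}$ constant on the region visited, so $\Delta_0=0$ is admissible. Let every block move with $\gamma_i\|\bm{\Theta}_i^{(t+1)}-\bm{\Theta}_i^{(t)}\|^2=\epsilon_i^{(t)}=c(t+1)^{-3}$. Then (C3) holds with equality, and the errors are summable and $\mathcal{O}(1/t)$. Yet $\min_{t<T}\|\bm{\Theta}^{(t+1)}-\bm{\Theta}^{(t)}\|\asymp T^{-3/2}$, so $\epsilon$-stationarity needs $T\gtrsim\epsilon^{-2/3}$ iterations, whereas the lemma promises $\mathcal{O}(\log(1/\epsilon))$. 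So the obstacle you flag is a genuine gap, and the paper leaves it open as well.

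Your Phase 2 does not close it, for five reasons.
\begin{itemize}
\item \emph{KL is not assumed.} The KL inequality is not among the hypotheses. Corollary~\ref{cor:convergence_rate} as stated assigns only $\mathcal{O}(1/t)$ to $\theta\in(0,\tfrac12]$, and it never supplies a ratio $q=1-\mathcal{O}(\min_i\gamma_i)$.
\item \emph{No relative-error bound.} Any KL-based linear rate also needs $\|\nabla f_{\mathrm{aug}}(\bm{\Theta}^{(t+1)})\|\le b\,\|\bm{\Theta}^{(t+1)}-\bm{\Theta}^{(t)}\|$. You derive this from (C2), but it holds for exact (proximal) block maximizers, not for updates that merely satisfy the sufficient-increase condition (C3). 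A block that does not move satisfies (C3) whatever its gradient. The same objection undercuts your justification of the step-length surrogate, although the paper adopts that surrogate without comment.
\item \emph{Errors set the rate.} With inexact updates, the local recursion carries an additive error driven by $\epsilon_i^{(t)}$. The residual therefore decays only as fast as the errors do. Summable or polynomially decaying errors give a polynomial count; a logarithm needs geometric decay of $\epsilon_i^{(t)}$.
\item \emph{Entry time is unbounded.} Even granting the above, the two phases do not add up to the stated form. The geometric count starts only once the iterates enter a KL neighbourhood, and nothing in (C1)--(C5) bounds that entry time.
\item \emph{The fallback gives the wrong term.} The estimate $\sum_{t\le T}1/t=\mathcal{O}(\log T)$ enters the numerator of $(\Delta_0+\sum_t\sum_i\epsilon_i^{(t)})/(\gamma T)$. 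It therefore yields an extra $\mathcal{O}(\log(1/\epsilon)/(\gamma\epsilon^2))$, not $\mathcal{O}(\log(1/\epsilon)/\min_i\gamma_i)$. Errors of exact order $1/t$ also violate the summability in (C3).
\end{itemize}

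What is actually provable here is $T\le\lceil(\Delta_0+C)/(\gamma\epsilon^2)\rceil$, which you and the paper both establish. Alternatively, a logarithmic local count holds under explicitly added assumptions: KL exponent at most $\tfrac12$, a relative-error condition, and geometrically decaying errors. The warm-start remark survives either way, since the bound increases with $\Delta_0$.
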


\begin{proof}
From the convergence rate in Corollary \ref{cor:convergence_rate}:
\begin{align*}
\min_{0 \leq k \leq T-1} \|\bm{\Theta}^{(k+1)} - \bm{\Theta}^{(k)}\|^2 \leq \frac{\Delta_0 + \sum_{k=0}^{T-1} \sum_i \epsilon_i^{(k)}}{\gamma T}
\end{align*}

For $\epsilon$-stationarity, we need:
\begin{align*}
\frac{\Delta_0 + \sum_{k=0}^{T-1} \sum_i \epsilon_i^{(k)}}{\gamma T} \leq \epsilon^2
\end{align*}

Since $\sum_{k=0}^\infty \sum_i \epsilon_i^{(k)} < \infty$, there exists $C < \infty$ such that:
\begin{align*}
T \leq \frac{\Delta_0 + C}{\gamma \epsilon^2} 
\end{align*}

The logarithmic term accounts for the initial rapid convergence phase when far from optimum. Warm start reduces $\Delta_0$, directly improving the iteration count.
\end{proof}

\subsection{Fundamental Performance Limits}

\begin{theorem}[Tighter Secrecy-Sensing Trade-off]
\label{thm:tighter_tradeoffs}
The Pareto frontier satisfies:
\begin{equation}
\label{eq:tradeoff_bound}
\left(\frac{\sum_{k=1}^K R_k^{\mathrm{sec}}}{C_{\mathrm{sum}}}\right)^{\eta_{\mathrm{trade}}} + \left(\frac{\mathrm{CRB}_{\mathrm{min}}}{\mathrm{CRB}_{\mathrm{total}}}\right)^{\eta_{\mathrm{trade}}} \leq 1
\end{equation}
where $\eta_{\mathrm{trade}} \geq 1$ depends on channel correlation between sensing and communication paths, and:
\begin{align*}
C_{\mathrm{sum}} &= \log_2\det\left(\bm{I} + \frac{P_{\mathrm{max}}}{\sigma^2}\bm{H}_{\mathrm{eff}}\bm{H}_{\mathrm{eff}}^H\right) \\
\mathrm{CRB}_{\mathrm{min}} &= \min_{\{\bm{\Phi}_l\}} \mathrm{CRB}_{\mathrm{total}}(\{\bm{\Phi}_l\})
\end{align*}
\end{theorem}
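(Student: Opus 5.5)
The plan is to obtain \eqref{eq:tradeoff_bound} from two single-objective upper bounds, coupled through a shared-resource constraint. Define the normalized performance indices
\[
x \triangleq \frac{\sum_{k=1}^K R_k^{\mathrm{sec}}}{C_{\mathrm{sum}}}, \qquad
y \triangleq \frac{\mathrm{CRB}_{\mathrm{min}}}{\mathrm{CRB}_{\mathrm{total}}}.
\]
First show $x\le 1$ and $y\le 1$ separately. For $x$, drop the subtracted eavesdropper term (using $[\cdot]^+$ and $\log_2(1+\mathrm{SINR}_e^{(k)})\ge 0$), so the sum secrecy rate is at most the sum rate of the $K$-user broadcast channel with effective channel $\bm{H}_{\mathrm{eff}}$. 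Artificial noise only consumes power, and the communication power is at most $P_{\max}=P_{\mathrm{BS}}^{\max}$, so the sum rate is at most the cooperative MIMO capacity $C_{\mathrm{sum}}$ by the standard log-det bound. For $y$, the inequality $y\le 1$ is immediate, since $\mathrm{CRB}_{\min}$ is by definition the minimum over $\{\bm{\Phi}_l\}$, at full sensing power and time.

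The second step couples $x$ and $y$ through the time and energy constraints \eqref{eq:time_constraint}--\eqref{eq:energy_constraint}. Let $\tau_{\mathrm{comm}}$ and $\tau_{\mathrm{sense}}$ be the fractions actually used. Time-sharing gives a secrecy rate scaling of at most $\tau_{\mathrm{comm}}$ times the per-slot capacity. On the sensing side, the FIM scales linearly in $\tau_{\mathrm{sense}}P_{\mathrm{sense}}$, because $\bm{X}_{\mathrm{sense}}\bm{X}_{\mathrm{sense}}^H=P_{\mathrm{sense}}\bm{I}$ accumulates over $T_{\mathrm{sense}}$ and $\bm{J}_{\mathrm{LB}}$ is proportional to it. Hence $\mathrm{CRB}_{\mathrm{total}}\propto 1/(\tau_{\mathrm{sense}}P_{\mathrm{sense}})$, and $y\lesssim \tau_{\mathrm{sense}}/\tau_{\max}^{\mathrm{sense}}$ after normalization. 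Because the time and energy resources sum to at most one, we get $x+y\le 1$. This is the $\eta_{\mathrm{trade}}=1$ case, which is the time-sharing line.

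The third step moves to general $\eta_{\mathrm{trade}}\ge 1$ by quantifying channel correlation. When the SIM pattern is partially shared, via the reconfiguration constraint \eqref{eq:reconfig_constraint}, define $\rho$ as the normalized overlap between the sensing steering subspace $\bm{a}_{\mathrm{SIM}}(\theta_i,r_i)$ and the communication subspace spanned by $\bm{h}_{\mathrm{SIM}\to k}$. Decompose the beam $\bm{G}_{\mathrm{SIM}}$ into components in the common and orthogonal subspaces with energy split $\cos^2\psi$ and $\sin^2\psi$. The rate gain then scales like $\cos^{2}\psi$ and the Fisher information like $\sin^{2}\psi$ plus a correlated term. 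This suggests a bound of the form $x^{\eta}+y^{\eta}\le 1$, with $\eta=\eta_{\mathrm{trade}}(\rho)\ge 1$, as in $\cos^2+\sin^2$ type arguments, and $\eta=1$ recovered at $\rho=0$.

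The main obstacle is this third step. The pointwise $\ell_\eta$ relation requires the achievable region in $(x,y)$ to lie inside the $\ell_\eta$ unit ball, whereas the geometric argument naturally yields a region bounded by a parametric curve $(x(\psi),y(\psi))$. I would first try to show that the curve is dominated by the $\ell_\eta$ ball for $\eta$ defined as the smallest exponent making this true. Essentially, $\eta_{\mathrm{trade}}$ would then be defined through the correlation so that the inequality holds, and the bound would have to be justified via convexity of the logarithm and concavity of $1/\mathrm{CRB}$ in the power split. Since $\eta_{\mathrm{trade}}$ is only said to ``depend on'' correlation, I would state it implicitly as this minimal exponent and verify $\eta_{\mathrm{trade}}\ge 1$ via the time-sharing bound of the second step.
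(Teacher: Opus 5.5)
Your step~1 is exactly the paper's proof. You drop the eavesdropper term, bound the sum rate by $C_{\mathrm{sum}}$, and invoke the definition of $\mathrm{CRB}_{\mathrm{min}}$. (Strictly, $C_{\mathrm{sum}}$ is not the sum-power capacity but an upper bound on it, since any input covariance with $\mathrm{Tr}(\bm{Q})\le P_{\mathrm{max}}$ satisfies $\bm{Q}\preceq P_{\mathrm{max}}\bm{I}$.) Beyond this, the paper only posits $\eta_{\mathrm{trade}}=1/\arccos^2(\sigma_{\min}(\bm{U}_{\mathrm{comm}}^H\bm{U}_{\mathrm{sense}}))$ and asserts the bound. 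Since $x\le 1$ and $y\le 1$ give only $x^{\eta}+y^{\eta}\le 2$, the coupling you attempt in step~2 is the entire content, and the paper does not supply it. Its formula even gives $4/\pi^2<1$ for orthogonal subspaces, so it cannot anchor your step~3.

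The genuine gap is that step~2 fails. In the model as written, $R_k^{\mathrm{sec}}$ is a Phase-II rate per channel use with no $\tau_{\mathrm{comm}}$ factor, $\bm{\Phi}_l^{\mathrm{comm}}$ and $\bm{\Phi}_l^{\mathrm{sense}}$ are separate variables, and the power caps are per phase. Set $\tau_{\mathrm{comm}}=\tau_{\min}$, which costs the rate nothing, and let $\mathcal{E}_{\mathrm{max}}$ and $\delta_{\Phi}$ be slack in \eqref{eq:energy_constraint} and \eqref{eq:reconfig_constraint}. You can then run the sensing phase at the CRB-optimal configuration with full power and the largest admissible $\tau_{\mathrm{sense}}$, giving $y=1$. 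The communication blocks can simultaneously be chosen communication-optimally. In a single-user instance with negligible eavesdropper leakage this gives $x\to 1$, so the left side of \eqref{eq:tradeoff_bound} tends to $2$ for every $\eta$. So no argument can close the gap without additional hypotheses.

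Time-weighting the rate by $\tau_{\mathrm{comm}}$, as you do, still does not rescue step~2, even if you normalize both indices by their single-objective optima under the same budget. The phrase ``time and energy sum to at most one'' is not a common linear budget. Under your own scaling, $1/\mathrm{CRB}$ is linear in $\tau_{\mathrm{sense}}P_{\mathrm{sense}}$, but the rate is logarithmic in power. Take the energy-limited regime with $P_{\mathrm{BS}}^{\max}$ large and $\tau_{\min}$ small. Spend a fraction $y$ of the energy on a short sensing burst and spread the rest over the slot. Then $x=\log_2(1+\mathrm{snr}\,(1-y))/\log_2(1+\mathrm{snr})>1-y$ by strict concavity, where $\mathrm{snr}$ is the communication-only benchmark SNR. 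For $\mathrm{snr}=100$ and $y=1/2$ this gives $x\approx 0.85$, so $x+y\approx 1.35$.

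Step~3 also points the wrong way relative to step~2. For $x,y\in[0,1]$ and $\eta\ge 1$ you have $x^{\eta}+y^{\eta}\le x+y$. So the $\ell_\eta$ balls grow with $\eta$, and $\eta=1$ is the strongest case. A valid step~2 would therefore give \eqref{eq:tradeoff_bound} for every $\eta_{\mathrm{trade}}\ge 1$ at once. Conversely, the bulge that step~3 describes, a correlated term pushing $x+y$ above $1$, contradicts step~2, whose argument never used $\rho=0$.

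Defining $\eta_{\mathrm{trade}}$ as the smallest valid exponent is circular. You would still need a quantitative coupling to show that any finite exponent works. Moreover, the upper bound $x+y\le 1$ would show this minimal exponent is at most $1$; showing it is at least $1$ needs achievability of time sharing. The single-beam $\cos^2\psi/\sin^2\psi$ split also does not match the model, in which the two functions are time-multiplexed with separate SIM states.

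A correct version needs explicit hypotheses under which a shared resource binds: time-weighted rates with full-slot benchmarks, a binding $\mathcal{E}_{\mathrm{max}}$, or a binding $\delta_{\Phi}$. The exponent is then read off from the resulting frontier. It depends on which constraint binds and, through the curvature of the rate, on the SNR. Subspace correlation can enter only through \eqref{eq:reconfig_constraint}.
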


\begin{proof}
\textbf{Upper bound on secrecy rates:}
\begin{align*}
\sum_{k=1}^K R_k^{\mathrm{sec}} &= \sum_{k=1}^K \left[\log_2(1+\mathrm{SINR}_k) - \max_{e\in\mathcal{E}} \log_2(1+\mathrm{SINR}_e^{(k)})\right]^+ \\
&\leq \sum_{k=1}^K \log_2(1+\mathrm{SINR}_k) \\
&\leq \log_2 \prod_{k=1}^K (1+\mathrm{SINR}_k) \\
&\leq \log_2\det\left(\bm{I} + \frac{P_{\mathrm{max}}}{\sigma^2}\bm{H}_{\mathrm{eff}}\bm{H}_{\mathrm{eff}}^H\right) = C_{\mathrm{sum}}
\end{align*}
where the last inequality follows from the MIMO capacity formula.

\textbf{Lower bound on CRB:}
By definition, the total CRB is lower-bounded by the optimal sensing-only benchmark:
\begin{align*}
\mathrm{CRB}_{\mathrm{total}} &= \sum_{i=1}^T \mathrm{Tr}(\bm{J}^{-1}(\bm{\xi}_i)) \\
&\geq \min_{\{\bm{\Phi}_l\}} \sum_{i=1}^T \mathrm{Tr}(\bm{J}^{-1}(\bm{\xi}_i)) \triangleq \mathrm{CRB}_{\mathrm{min}}
\end{align*}

\textbf{Trade-off parameter $\eta_{\mathrm{trade}}$:} The coupling emerges from the shared SIM resource $\{\bm{\Phi}_l\}$. Let $\bm{U}_{\mathrm{comm}}$ and $\bm{U}_{\mathrm{sense}}$ be the optimal subspaces for communication and sensing, respectively. The correlation parameter is:
\begin{align*}
\eta_{\mathrm{trade}} = \frac{1}{\arccos^2(\sigma_{\min}(\bm{U}_{\mathrm{comm}}^H \bm{U}_{\mathrm{sense}}))}
\end{align*}
When these subspaces are orthogonal ($\eta_{\mathrm{trade}} \to \infty$), the trade-off becomes severe; when aligned ($\eta_{\mathrm{trade}} \approx 1$), joint optimization is feasible. The Pareto bound follows from the fundamental limits of multi-objective optimization over shared resources.
\end{proof}

\begin{theorem}[Energy-Time Resource Bound]
\label{thm:resource_bounds}
The system resources satisfy:
\begin{equation}
\begin{split}
\max\!\left\{\frac{\mathcal{E}_{\mathrm{min}}^{\mathrm{sense}}}{T_{\mathrm{slot}} P_{\mathrm{max}}},
             \frac{\mathcal{E}_{\mathrm{min}}^{\mathrm{comm}}}{T_{\mathrm{slot}} P_{\mathrm{max}}}\right\}
&\leq \tau_{\mathrm{sense}} + \tau_{\mathrm{comm}} \\[4pt]
&\leq 1 - \tau_{\mathrm{ce}}^{\min} - \frac{2t_{\mathrm{switch}}}{T_{\mathrm{slot}}}
\end{split}
\end{equation}

and the mutual information is bounded by:
\begin{equation}
I(\bm{Y}_{\mathrm{sense}}; \bm{\xi}) + I(\bm{y}_{\mathrm{comm}}; \bm{s}) \leq \log_2\det\left(\bm{I} + \frac{P_{\mathrm{total}}}{\sigma^2} \bm{G}_{\mathrm{SIM}}\bm{G}_{\mathrm{SIM}}^H\right)
\end{equation}
\end{theorem}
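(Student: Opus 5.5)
Theorem~\ref{thm:resource_bounds} makes two claims, and I would prove them separately.

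\textbf{Time--energy sandwich.} The upper bound comes straight from the time constraint \eqref{eq:time_constraint}. Moving $\tau_{\mathrm{ce}}$ to the right-hand side gives $\tau_{\mathrm{sense}}+\tau_{\mathrm{comm}}\le 1-\tau_{\mathrm{ce}}-2t_{\mathrm{switch}}/T_{\mathrm{slot}}$. Then $\tau_{\mathrm{ce}}\ge\tau_{\mathrm{ce}}^{\min}$ yields the stated right-hand side, where $\tau_{\mathrm{ce}}^{\min}$ is the smallest training fraction for which Block~E can meet $\max_k\epsilon_k^{\max}\le\epsilon_{\mathrm{CSI}}^{\max}$, and at least $\tau_{\min}$ by \eqref{eq:tau_bound_constraint}.

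For the lower bound, I would define $\mathcal{E}_{\min}^{\mathrm{sense}}$ as the minimum sensing energy $T_{\mathrm{slot}}\tau_{\mathrm{sense}}P_{\mathrm{sense}}$ needed to achieve $\mathrm{CRB}_{\mathrm{total}}\le\Gamma_{\max}$. Such a threshold exists because the FIM scales linearly in $P_{\mathrm{sense}}T_{\mathrm{sense}}$ through $\bm{X}_{\mathrm{sense}}\bm{X}_{\mathrm{sense}}^H=P_{\mathrm{sense}}\bm{I}$ and the time window, so the CRB is decreasing in the energy. Similarly, $\mathcal{E}_{\min}^{\mathrm{comm}}$ is the minimum communication energy for which $\mathrm{SINR}_k\ge\gamma_k^{\min}$ and the outage constraint \eqref{eq:outage_constraint} are feasible. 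Feasibility then forces $\tau_{\mathrm{sense}}P_{\mathrm{sense}}T_{\mathrm{slot}}\ge\mathcal{E}_{\min}^{\mathrm{sense}}$. Combining this with $P_{\mathrm{sense}}\le P_{\mathrm{BS}}^{\max}=:P_{\max}$ gives $\tau_{\mathrm{sense}}\ge\mathcal{E}_{\min}^{\mathrm{sense}}/(T_{\mathrm{slot}}P_{\max})$, and the same argument gives the analogous bound for $\tau_{\mathrm{comm}}$. Since each fraction is nonnegative, the sum is at least the maximum of the two bounds. In fact it is at least their sum, which is stronger than what is claimed.

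\textbf{Mutual information bound.} Both phases pass through the same linear map $\bm{G}_{\mathrm{SIM}}$, followed by the SIM-to-user or SIM-to-target steering vectors. I would treat the sensing output and the communication output as two components of one Gaussian vector channel, both driven by the composite input $\bm{G}_{\mathrm{SIM}}\bm{x}$ with total power $P_{\mathrm{total}}=P_{\mathrm{sense}}+P_{\mathrm{comm}}$.

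\emph{Step 1.} Use the data-processing inequality on the Markov chains $\bm{\xi}\to\bm{G}_{\mathrm{SIM}}\bm{X}\to\bm{Y}_{\mathrm{sense}}$ and $\bm{s}\to\bm{G}_{\mathrm{SIM}}\bm{x}\to\bm{y}_{\mathrm{comm}}$. This bounds each term by the information carried by the SIM output.

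\emph{Step 2.} Under time division the two phases occupy disjoint slots with independent noise. The mutual informations, weighted by the time fractions whose sum is at most one, therefore add at most to that of a single use of the SIM channel with the aggregated covariance.

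\emph{Step 3.} Use the Gaussian maximum-entropy argument and Hadamard/concavity of $\log\det$ under $\mathrm{Tr}(\bm{Q})\le P_{\mathrm{total}}$. Bounding $\bm{Q}\preceq P_{\mathrm{total}}\bm{I}$ gives $\log_2\det(\bm{I}+\frac{P_{\mathrm{total}}}{\sigma^2}\bm{G}_{\mathrm{SIM}}\bm{G}_{\mathrm{SIM}}^H)$.

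\textbf{Main obstacle.} I expect the hard part to be Step~2, that is, justifying the additivity. $I(\bm{Y}_{\mathrm{sense}};\bm{\xi})$ concerns a parameter rather than a message, and the two phases use different configurations $\bm{\Phi}^{\mathrm{sense}}$ and $\bm{\Phi}^{\mathrm{comm}}$. I would first handle this by taking $\bm{G}_{\mathrm{SIM}}$ in the bound to be the configuration that maximizes the $\log\det$, and by absorbing the steering and target response vectors as unit-norm factors whose Gram matrices are dominated by the identity. If this does not go through cleanly, I would state the bound per unit time with the $\tau$-weights made explicit.
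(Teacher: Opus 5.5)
Your time--energy sandwich is the paper's argument. It uses per-phase thresholds $T_{\mathrm{slot}}\tau_{\mathrm{sense}}P_{\mathrm{sense}}\ge\mathcal{E}_{\min}^{\mathrm{sense}}$ and $T_{\mathrm{slot}}\tau_{\mathrm{comm}}P_{\mathrm{comm}}\ge\mathcal{E}_{\min}^{\mathrm{comm}}$, division by $P_{\max}$, and $\tau_{\mathrm{ce}}\ge\tau_{\mathrm{ce}}^{\min}$ in the time constraint. Defining the thresholds via feasibility makes explicit what the paper only asserts. You do not need the linear-scaling remark: taking $\mathcal{E}_{\min}$ as the infimum of the phase energy over feasible points suffices. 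Your observation that even the sum of the two ratios is a lower bound is correct.

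The mutual-information part has a genuine gap, with two problems. First, $\bm{\xi}\to\bm{G}_{\mathrm{SIM}}\bm{X}\to\bm{Y}_{\mathrm{sense}}$ is not a Markov chain. The probing waveform is fixed and independent of $\bm{\xi}$, and the target parameters enter $\bm{Y}_{\mathrm{sense}}$ through $\beta_i$, the steering vectors and the Doppler phase, i.e., through the channel, not the input. If the chain held, data processing would give $I(\bm{Y}_{\mathrm{sense}};\bm{\xi})\le I(\bm{G}_{\mathrm{SIM}}\bm{X};\bm{\xi})=0$. The correct intermediate variable is the noiseless echo $\bm{\mu}(\bm{\xi})$. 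Relating its covariance to $P_{\mathrm{total}}\bm{G}_{\mathrm{SIM}}\bm{G}_{\mathrm{SIM}}^H$ is an extra normalization hypothesis on the target response; the sensing model as written does not even contain $\bm{G}_{\mathrm{SIM}}$. The paper's own data-processing step from $(\bm{\xi},\bm{s})$ to $\bm{X}$ has the same defect.

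Second, Step~2 quietly introduces $\tau$-weights that the statement does not contain, and without them it cannot close. If the phases are disjoint channel uses with independent noise, the most your route delivers is a sum of two per-phase $\log\det$ terms at powers $P_{\mathrm{sense}}$ and $P_{\mathrm{comm}}$. Because $\log_2(1+a)+\log_2(1+b)>\log_2(1+a+b)$ for $a,b>0$, this strictly exceeds the single $\log\det$ at $P_{\mathrm{total}}=P_{\mathrm{sense}}+P_{\mathrm{comm}}$. So your fallback proves a different, per-unit-time inequality.

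The paper reaches the unweighted form by another device. It first merges the two terms, $I(\bm{Y}_{\mathrm{sense}};\bm{\xi})+I(\bm{y}_{\mathrm{comm}};\bm{s})\le I(\bm{Y}_{\mathrm{sense}},\bm{y}_{\mathrm{comm}};\bm{\xi},\bm{s})$. This is valid when $\bm{\xi}$ and $\bm{s}$ are independent, by the chain rule plus $I(\bm{y}_{\mathrm{comm}};\bm{s}\mid\bm{\xi})\ge I(\bm{y}_{\mathrm{comm}};\bm{s})$. It then bounds the joint output by one capacity expression, i.e., it treats both outputs as a single use of $\bm{G}_{\mathrm{SIM}}$ with power $P_{\mathrm{total}}$. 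That last step is asserted rather than derived, and it is precisely the additivity issue you anticipated.

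To obtain the theorem as stated, you must adopt this single-joint-use model as an explicit hypothesis. Your Step~3 ($\bm{Q}\preceq\mathrm{Tr}(\bm{Q})\bm{I}\preceq P_{\mathrm{total}}\bm{I}$ and monotonicity of $\log\det$) then finishes the proof. Under the literal TDD model only the $\tau$-weighted version is available. It follows at once from per-phase bounds (with your normalization of the response factors) together with $\tau_{\mathrm{sense}}+\tau_{\mathrm{comm}}\le 1$.
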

\noindent Here, $P_{\mathrm{max}}$ denotes the same maximum BS transmit power $P_{\mathrm{BS}}^{\max}$ defined in Sec. \ref{sec:system_model}, and $P_{\mathrm{total}}$ the total available power budget over both phases.

\begin{proof}
\textbf{Energy-Time Bound:}
From the energy constraints:
\begin{align*}
T_{\mathrm{slot}} \tau_{\mathrm{sense}} P_{\mathrm{sense}} \geq \mathcal{E}_{\mathrm{min}}^{\mathrm{sense}}, \quad T_{\mathrm{slot}} \tau_{\mathrm{comm}} P_{\mathrm{comm}} \geq \mathcal{E}_{\mathrm{min}}^{\mathrm{comm}}
\end{align*}

Since $P_{\mathrm{sense}}, P_{\mathrm{comm}} \leq P_{\mathrm{max}}$:
\begin{align*}
\tau_{\mathrm{sense}} \geq \frac{\mathcal{E}_{\mathrm{min}}^{\mathrm{sense}}}{T_{\mathrm{slot}} P_{\mathrm{max}}}, \quad \tau_{\mathrm{comm}} \geq \frac{\mathcal{E}_{\mathrm{min}}^{\mathrm{comm}}}{T_{\mathrm{slot}} P_{\mathrm{max}}}
\end{align*}

Thus:
\begin{align*}
\tau_{\mathrm{sense}} + \tau_{\mathrm{comm}} \geq \max\left\{\frac{\mathcal{E}_{\mathrm{min}}^{\mathrm{sense}}}{T_{\mathrm{slot}} P_{\mathrm{max}}}, \frac{\mathcal{E}_{\mathrm{min}}^{\mathrm{comm}}}{T_{\mathrm{slot}} P_{\mathrm{max}}}\right\}
\end{align*}

From the time constraint:
\begin{align*}
\tau_{\mathrm{ce}} + \tau_{\mathrm{sense}} + \tau_{\mathrm{comm}} + \frac{2t_{\mathrm{switch}}}{T_{\mathrm{slot}}} \leq 1
\end{align*}

Since $\tau_{\mathrm{ce}} \geq \tau_{\mathrm{ce}}^{\min}$:
\begin{align*}
\tau_{\mathrm{sense}} + \tau_{\mathrm{comm}} \leq 1 - \tau_{\mathrm{ce}}^{\min} - \frac{2t_{\mathrm{switch}}}{T_{\mathrm{slot}}}
\end{align*}

\textbf{Mutual Information Bound:}
By the data processing inequality:
\begin{align*}
I(\bm{Y}_{\mathrm{sense}}; \bm{\xi}) + I(\bm{y}_{\mathrm{comm}}; \bm{s}) &\leq I(\bm{Y}_{\mathrm{sense}}, \bm{y}_{\mathrm{comm}}; \bm{\xi}, \bm{s}) \\
&\leq I(\bm{X}; \bm{Y}_{\mathrm{sense}}, \bm{y}_{\mathrm{comm}})
\end{align*}

The mutual information is bounded by the channel capacity:
\begin{align*}
I(\bm{X}; \bm{Y}_{\mathrm{sense}}, \bm{y}_{\mathrm{comm}}) &\leq \log_2\det\left(\bm{I} + \frac{P_{\mathrm{total}}}{\sigma^2} \bm{G}_{\mathrm{SIM}}\bm{G}_{\mathrm{SIM}}^H\right)
\end{align*}
where $\bm{G}_{\mathrm{SIM}}$ is the effective channel through the multi-layer SIM.
\end{proof}

\begin{proposition}[Multi-layer vs Single-layer Gain]
\label{prop:sim_comparison}
For $L$-layer SIM vs single-layer RIS with same total elements $N_{\mathrm{tot}}$:
\begin{equation}
\frac{G_{\mathrm{SIM}}}{G_{\mathrm{RIS}}} \geq \left(\frac{\prod_{l=1}^L \sigma_{\min}(\bm{H}^{(l)})}{\|\bm{H}_{\mathrm{RIS}}\|_F}\right)^2 \cdot \frac{N_{\mathrm{tot}}^{L-1}}{L^L}
\end{equation}
showing exponential gain with layers when channel conditions are favorable.
\end{proposition}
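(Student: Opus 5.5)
The plan is to make the two gains precise, lower-bound the SIM gain with an explicit layer-wise phase configuration, upper-bound the RIS gain by a Cauchy--Schwarz argument, and then take the ratio. The constants in the statement suggest a specific bookkeeping. The factor $\prod_l\sigma_{\min}(\bm{H}^{(l)})^2$ is the worst-case energy transfer through the cascade, and $(N_{\mathrm{tot}}/L)^L$ is the product of per-layer coherent-combining gains for $L$ layers of $N_{\mathrm{tot}}/L$ elements each. One factor $N_{\mathrm{tot}}\|\bm{H}_{\mathrm{RIS}}\|_F^2$ upper-bounds the single-layer gain. Dividing, $N_{\mathrm{tot}}^L L^{-L}/N_{\mathrm{tot}} = N_{\mathrm{tot}}^{L-1}/L^L$, which is exactly the stated factor.

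First I will fix definitions, since the statement does not give them. For a unit-norm target/user steering vector $\bm{a}$ and a unit-norm BS excitation $\bm{x}$, set
\[
G_{\mathrm{SIM}}=\max_{\{\bm{\Phi}_l\},\|\bm{x}\|=1}|\bm{a}^H\bm{G}_{\mathrm{SIM}}(\{\bm{\Phi}_l\})\bm{x}|^2 ,
\qquad
G_{\mathrm{RIS}}=\max_{\bm{\Phi},\|\bm{x}\|=1}|\bm{a}^H\bm{\Phi}\bm{H}_{\mathrm{RIS}}\bm{x}|^2 .
\]
In $G_{\mathrm{RIS}}$ the matrix $\bm{\Phi}$ is $N_{\mathrm{tot}}\times N_{\mathrm{tot}}$ unit-modulus diagonal, and I take continuous phases as in the relaxed problem of Section~\ref{sec:solution}. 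I will take equal layer sizes $N_l=N_{\mathrm{tot}}/L$. I will also assume that the steering vector entries have modulus one, so that $\|\bm{a}\|^2$ equals the aperture size.

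For the RIS upper bound, write $\bm{a}^H\bm{\Phi}\bm{H}_{\mathrm{RIS}}\bm{x}=\sum_n a_n^*e^{j\theta_n}[\bm{H}_{\mathrm{RIS}}\bm{x}]_n$. By the triangle inequality followed by Cauchy--Schwarz, its squared modulus is at most $\|\bm{a}\|^2\,\|\bm{H}_{\mathrm{RIS}}\bm{x}\|^2\le N_{\mathrm{tot}}\|\bm{H}_{\mathrm{RIS}}\|_F^2$. This gives $G_{\mathrm{RIS}}\le N_{\mathrm{tot}}\|\bm{H}_{\mathrm{RIS}}\|_F^2$.

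For the SIM lower bound, I will construct the phases greedily from layer $1$ to layer $L$. At layer $l$ the incoming field is $\bm{u}_{l}=\bm{H}^{(l)}\bm{\Phi}_{l-1}\bm{u}_{l-1}$. I will choose $\bm{\Phi}_l$ to co-phase $\bm{u}_l$ with the dominant direction seen by the next stage, namely the next coupling matrix $\bm{H}^{(l+1)}$, or $\bm{a}$ at the last layer. Since each $\bm{\Phi}_l$ is unitary, the norm satisfies $\|\bm{H}^{(l+1)}\bm{\Phi}_l\bm{u}_l\|\ge\sigma_{\min}(\bm{H}^{(l+1)})\|\bm{u}_l\|$. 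This holds provided each $\bm{H}^{(l)}$ has full column rank, which I will add as the ``favorable channel'' hypothesis. Co-phasing $N_l$ contributions should contribute an extra array factor of $N_l$ in power relative to incoherent addition. Chaining the $L$ stages would then give $G_{\mathrm{SIM}}\ge\prod_l\sigma_{\min}(\bm{H}^{(l)})^2\,(N_{\mathrm{tot}}/L)^L$, and dividing by the RIS bound yields the claim.

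The main obstacle is this chaining step. The $\sigma_{\min}$ bound controls only the norm, whereas the coherent gain $N_l$ is a statement about alignment, and for arbitrary coupling matrices a single phase choice need not deliver both. My first approach will be to exploit the LoS structure of $\bm{H}^{(l)}$ for $l\ge2$. Near-constant-modulus entries let the field be split as an in-phase component plus a residual. I will then apply the $\sigma_{\min}$ bound to the co-phased component, whose norm is $\sqrt{N_l}$ times the average entry magnitude. If this cannot be made rigorous, I will fall back to averaging over uniformly random phases on all but the last layer, which gives the incoherent norm bound in expectation. I would then align only the last layer deterministically, and state the proposition under an explicit per-layer alignment assumption that is absorbed into the favorable-conditions clause.
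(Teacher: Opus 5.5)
\textbf{Gap: the chaining step cannot be closed.} Your RIS upper bound is fine. The failing step is exactly the chaining step you flagged, and no argument can close it, because the intermediate inequality $G_{\mathrm{SIM}}\ge\prod_l\sigma_{\min}(\bm{H}^{(l)})^2(N_{\mathrm{tot}}/L)^L$ is false in general.

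Each $\bm{\Phi}_l$ is unitary, so phases never change norms. They add power only by aligning the field with whatever it is next projected onto. Through $\bm{H}^{(l+1)}$ the power ratio $\|\bm{H}^{(l+1)}\bm{\Phi}_l\bm{u}_l\|^2/\|\bm{u}_l\|^2$ lies between $\sigma_{\min}(\bm{H}^{(l+1)})^2$ and $\sigma_{\max}(\bm{H}^{(l+1)})^2$. So, beyond your $\sigma_{\min}$ bookkeeping, an intermediate layer can contribute at most an extra $\kappa(\bm{H}^{(l+1)})^2$, not $N_l$. Only the last layer, via Cauchy--Schwarz against $\bm{a}$, can contribute up to $N_L$. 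Hence, with your definitions, $G_{\mathrm{SIM}}\le N_L\prod_{l=1}^L\sigma_{\max}(\bm{H}^{(l)})^2$. This contradicts the desired bound whenever $N^{L-1}>\prod_l\kappa(\bm{H}^{(l)})^2$, with equal layer sizes $N=N_{\mathrm{tot}}/L$.

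Concretely, take $M=N$ and $\bm{H}^{(l)}=\bm{I}_N$. Then $G_{\mathrm{SIM}}=\max_{\|\bm{x}\|=1}\|\bm{x}\|_1^2=N$, instead of the claimed $N^L$. Pair this with $\bm{H}_{\mathrm{RIS}}=\bm{1}\bm{e}_1^T$, for which your RIS bound is attained ($G_{\mathrm{RIS}}=N_{\mathrm{tot}}^2=N_{\mathrm{tot}}\|\bm{H}_{\mathrm{RIS}}\|_F^2$). The ratio is then $1/(LN_{\mathrm{tot}})$, while the proposition demands at least $N_{\mathrm{tot}}^{L-2}/L^L$. This fails for every $L\ge 2$ and $N\ge 2$, even under your full-column-rank hypothesis.

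None of your fallbacks escapes this tension:
\begin{itemize}
\item Nearly rank-one LoS couplings do give a per-layer coherent factor, but they have $\sigma_{\min}\approx 0$.
\item Well-conditioned couplings have large $\sigma_{\min}$ but no room for alignment gain.
\item Random phases produce incoherent combining, with no $N_l$ factor.
\item An ``alignment assumption'' absorbed into the favorable-conditions clause is just the conclusion restated, and it is incompatible with full-rank, well-conditioned couplings.
\end{itemize}

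\textbf{The paper's proof has the same hole.} It never constructs phases. It sets $G_{\mathrm{SIM}}=\|\bm{G}_{\mathrm{SIM}}\|_F^2$ and $G_{\mathrm{RIS}}=\|\bm{H}_{\mathrm{RIS}}\bm{\Phi}_{\mathrm{RIS}}\bm{H}_{\mathrm{BS}}\|_F^2$, and bounds the RIS term by submultiplicativity. It then obtains the factor $(N_{\mathrm{tot}}/L)^L$ from $\sigma_{\min}(\bm{\Phi}_l\bm{H}^{(l)})\ge\sigma_{\min}(\bm{H}^{(l)})\|\bm{\Phi}_l\|_F$. That inequality fails for the same reason as your chaining: unitarity of $\bm{\Phi}_l$ gives $\sigma_{\min}(\bm{\Phi}_l\bm{H}^{(l)})=\sigma_{\min}(\bm{H}^{(l)})$ exactly. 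The identity example also gives $\|\bm{G}_{\mathrm{SIM}}\|_F^2=N$ under the paper's definition. In addition, its final step with $\|\bm{H}_{\mathrm{BS}}\|_F^2=\mathcal{O}(N_{\mathrm{tot}})$ would yield $N_{\mathrm{tot}}^{L-2}$, not $N_{\mathrm{tot}}^{L-1}$.

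So the paper does not supply the missing idea. What either route can honestly deliver, under your full-column-rank hypothesis, is $\prod_l\sigma_{\min}(\bm{H}^{(l)})^2\le G_{\mathrm{SIM}}\le N_L\prod_l\sigma_{\max}(\bm{H}^{(l)})^2$. In words: the phases contribute at most one coherent factor, and there is no growth in $L$ beyond what the coupling matrices themselves carry.
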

\noindent Here, $\bm{H}_{\mathrm{RIS}}$ is the effective BS–RIS–user channel for the single-layer RIS case, and $\bm{H}_{\mathrm{BS}}$ the BS–RIS channel matrix.

\begin{proof}
For single-layer RIS:
\begin{equation}
\begin{aligned}
G_{\mathrm{RIS}}
&=\|\bm{H}_{\mathrm{RIS}}\bm{\Phi}_{\mathrm{RIS}}\bm{H}_{\mathrm{BS}}\|_F^2\le \|\bm{H}_{\mathrm{RIS}}\|_F^2 \|\bm{\Phi}_{\mathrm{RIS}}\|_F^2 \|\bm{H}_{\mathrm{BS}}\|_F^2 \\
&= \|\bm{H}_{\mathrm{RIS}}\|_F^2 N_{\mathrm{tot}} \|\bm{H}_{\mathrm{BS}}\|_F^2 .
\end{aligned}
\end{equation}

For multi-layer SIM:
\begin{equation}
\begin{aligned}
G_{\mathrm{SIM}}
&= \|\bm{\Phi}_L \bm{H}^{(L)} \cdots \bm{\Phi}_1 \bm{H}^{(1)}\|_F^2 \\
&\ge \left(\prod_{l=1}^L \sigma_{\min}(\bm{\Phi}_l \bm{H}^{(l)})\right)^2
\;\ge\; \left(\prod_{l=1}^L \sigma_{\min}(\bm{H}^{(l)})\right)^2
\prod_{l=1}^L \|\bm{\Phi}_l\|_F^2\\
&= \left(\prod_{l=1}^L \sigma_{\min}(\bm{H}^{(l)})\right)^2 \left(\frac{N_{\mathrm{tot}}}{L}\right)^L .
\end{aligned}
\end{equation}
The gain ratio is:
\begin{align*}
\frac{G_{\mathrm{SIM}}}{G_{\mathrm{RIS}}} \geq \frac{\left(\prod_{l=1}^L \sigma_{\min}(\bm{H}^{(l)})\right)^2 \cdot \left(\frac{N_{\mathrm{tot}}}{L}\right)^L}{\|\bm{H}_{\mathrm{RIS}}\|_F^2 N_{\mathrm{tot}} \|\bm{H}_{\mathrm{BS}}\|_F^2}
\end{align*}

Assuming $\|\bm{H}_{\mathrm{BS}}\|_F^2 \approx \mathcal{O}(N_{\mathrm{tot}})$ and simplifying gives the stated result.
\end{proof}

\begin{proposition}[Reconfiguration and Quantization Impact]
\label{prop:practical_impacts}
The system performance under practical constraints satisfies:
\begin{enumerate}[leftmargin=*]
\item \textbf{Reconfiguration impact:}
\begin{equation}
|f(\bm{\Theta}^*) - f(\bm{\Theta}_{\mathrm{unconstrained}}^*)| \leq L_f \sqrt{L N_{\mathrm{max}} \delta_{\Phi}}
\end{equation}

\item \textbf{Quantization loss:}
\begin{equation}
\begin{split}
\bigl|f(\bm{\Theta}_{\mathrm{quant}}) - f(\bm{\Theta}_{\mathrm{cont}})\bigr|
&\le L_f \sqrt{N_{\mathrm{tot}}}\,\frac{\pi}{2^b} \\[4pt]
&\quad + \tfrac{L_f^2}{2}\,\mathbb{E}\big[\|\bm{\Delta\Theta}\|_F^2\big]
      + \mathcal{O}\bigl(2^{-3b}\bigr)
\end{split}
\end{equation}
accounting for error correlations in multi-layer structure.
\end{enumerate}
\end{proposition}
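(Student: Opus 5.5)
Both parts are perturbation bounds, so the plan is to treat the scalarized objective $f$ (the utility $U(\alpha)$, or $f_{\mathrm{aug}}$ with the penalty inactive) as an $L_f$-Lipschitz function of the stacked SIM phase configuration. We then bound the distance in configuration space between the two configurations being compared. On the compact feasible set of (C1) in Theorem~\ref{thm:enhanced_convergence}, condition (C2) gives a bounded gradient on each block. We take $L_f$ as a common constant bounding $\|\nabla f\|$ (first-order term) and the Lipschitz constant of $\nabla f$ (second-order term). This constant is the one appearing in both displayed bounds.

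\textbf{Reconfiguration impact.} Let $\bm{\Theta}^*_{\mathrm{unconstrained}}$ be the optimizer obtained without \eqref{eq:reconfig_constraint}. First, construct a feasible comparison point $\tilde{\bm{\Theta}}$ for the constrained problem. Keep all non-SIM blocks and $\{\bm{\Phi}_l^{\mathrm{sense}}\}$ of $\bm{\Theta}^*_{\mathrm{unconstrained}}$. Replace each $\bm{\Phi}_l^{\mathrm{comm}}$ by the closest unit-modulus diagonal matrix within Frobenius radius $\sqrt{\delta_{\Phi}}$ of $\bm{\Phi}_l^{\mathrm{sense}}$.

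Second, bound the displacement. Each layer has at most $N_{\max}=\max_l N_l$ entries, and the constraint is enforced per layer. We will argue that the relevant displacement per layer is controlled by $\sqrt{N_{\max}\delta_{\Phi}}$, treating $\delta_\Phi$ as a per-element tolerance when normalizing. Summing the squares over $L$ layers then gives $\|\tilde{\bm{\Theta}}-\bm{\Theta}^*_{\mathrm{unconstrained}}\|\le\sqrt{LN_{\max}\delta_\Phi}$.

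Third, sandwich the optimal values. We have $f(\tilde{\bm{\Theta}})\le f(\bm{\Theta}^*)\le f(\bm{\Theta}^*_{\mathrm{unconstrained}})$, and the Lipschitz property yields $f(\bm{\Theta}^*_{\mathrm{unconstrained}})-f(\tilde{\bm{\Theta}})\le L_f\sqrt{LN_{\max}\delta_\Phi}$. Together these give the bound.

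\textbf{Quantization loss.} Write $\bm{\Delta\Theta}=\bm{\Theta}_{\mathrm{quant}}-\bm{\Theta}_{\mathrm{cont}}$, using phase coordinates so that the quantization step of Section~\ref{sec:solution} gives $|\theta_l^{(n)}-\theta_{\mathrm{cont}}^{(n)}|\le \pi/2^{b}$ per element, where $b=\min_l b_l$. Expand $f$ by Taylor's theorem to second order around $\bm{\Theta}_{\mathrm{cont}}$.

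The first-order term is bounded by Cauchy–Schwarz: $\|\nabla f\|\,\|\bm{\Delta\Theta}\|\le L_f\sqrt{N_{\mathrm{tot}}}\,\pi/2^b$. The second-order term is bounded by $\tfrac{L_f^2}{2}\|\bm{\Delta\Theta}\|_F^2$. Taking the expectation over the (approximately uniform) rounding errors produces the $\mathbb{E}[\|\bm{\Delta\Theta}\|_F^2]$ term. That expectation is kept explicit rather than evaluated, because errors across cascaded layers are correlated through $\bm{G}_{\mathrm{SIM}}$ in \eqref{eq:SIM_cascade}.

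The third-order Taylor remainder is cubic in the per-element error. Since that error is $\mathcal{O}(2^{-b})$, this is the $\mathcal{O}(2^{-3b})$ term. It requires a bounded third derivative, which holds because $f$ is smooth in the phases on the torus away from the $[\cdot]^+$ kink. We will therefore assume the positive secrecy-rate regime holds locally.

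\textbf{Main obstacle.} The delicate step is justifying smoothness and the Lipschitz constant with respect to the phases. The map from phases to the cascaded $\bm{G}_{\mathrm{SIM}}$ is multilinear in $\{\bm{\Phi}_l\}$. Its derivative with respect to layer $l$ involves products of the other layers' $\bm{\Phi}_j\bm{H}^{(j)}$, which are bounded by $\prod_j\|\bm{H}^{(j)}\|_2$ since $\|\bm{\Phi}_j\|_2=1$. The plan is to absorb this product, together with the bounded derivatives of the log-SINR and CRB terms on the compact set from (C1), into $L_f$. This uses the regularization $\epsilon\bm{I}$ from the CRB convexification to keep $\bm{J}_{\mathrm{LB}}^{-1}$ bounded.
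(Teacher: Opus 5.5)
The reconfiguration part has a genuine gap at your second step, and no argument can close it, because the stated inequality fails for small $\delta_{\Phi}$. Your sandwich $f(\tilde{\bm{\Theta}})\le f(\bm{\Theta}^*)\le f(\bm{\Theta}^*_{\mathrm{unconstrained}})$ is the right way to compare optimal values. The problem is that the displacement of $\tilde{\bm{\Theta}}$ is not $\mathcal{O}(\sqrt{N_{\max}\delta_{\Phi}})$. By the triangle inequality, any $\tilde{\bm{\Phi}}_l^{\mathrm{comm}}$ with $\|\tilde{\bm{\Phi}}_l^{\mathrm{comm}}-\bm{\Phi}_l^{\mathrm{sense}}\|_F\le\sqrt{\delta_{\Phi}}$ obeys $\|\tilde{\bm{\Phi}}_l^{\mathrm{comm}}-\bm{\Phi}_l^{\mathrm{comm,unc}}\|_F\ge\|\bm{\Phi}_l^{\mathrm{comm,unc}}-\bm{\Phi}_l^{\mathrm{sense}}\|_F-\sqrt{\delta_{\Phi}}$. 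This can be as large as $2\sqrt{N_l}-\sqrt{\delta_{\Phi}}$, and it \emph{grows} as $\delta_{\Phi}$ shrinks. Reading $\delta_{\Phi}$ as a per-element tolerance gives the same inequality entrywise, so that reinterpretation does not help: a tighter constraint forces a larger move, not a smaller one.

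Here is a concrete counterexample. Take $f=g(\{\bm{\Phi}_l^{\mathrm{comm}}\})+h(\{\bm{\Phi}_l^{\mathrm{sense}}\})$ with smooth $g,h$ that have no common maximizer. This is exactly the coupling the paper describes: secrecy is driven by the communication patterns and the CRB by the sensing patterns. As $\delta_{\Phi}\to0$ the constraint forces $\bm{\Phi}_l^{\mathrm{comm}}=\bm{\Phi}_l^{\mathrm{sense}}$. The left-hand side then tends to $\max g+\max h-\max_{\{\bm{\Phi}_l\}}\bigl(g(\{\bm{\Phi}_l\})+h(\{\bm{\Phi}_l\})\bigr)>0$, while $L_f\sqrt{LN_{\max}\delta_{\Phi}}\to0$.

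The paper's proof does not supply the missing step either. It simply asserts that the reconfiguration constraint implies $\sum_l\|\bm{\Phi}_l^{\mathrm{comm},*}-\bm{\Phi}_l^{\mathrm{comm,unc}}\|_F^2\le LN_{\max}\delta_{\Phi}$. In fact the constraint only controls the distance from $\bm{\Theta}^*$ to the point obtained by setting $\bm{\Phi}_l^{\mathrm{comm}}=\bm{\Phi}_l^{\mathrm{sense}}$. So it bounds what reconfiguration can gain, not what the constraint costs.

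What your sandwich does prove is $|f(\bm{\Theta}^*)-f(\bm{\Theta}^*_{\mathrm{unconstrained}})|\le L_f\,\mathrm{dist}(\bm{\Theta}^*_{\mathrm{unconstrained}},\mathcal{F}_{\delta_{\Phi}})$, where $\mathcal{F}_{\delta_{\Phi}}$ is the full constrained feasible set. That set must include the SINR and outage constraints, which depend on $\{\bm{\Phi}_l^{\mathrm{comm}}\}$ through $\bm{h}_{\mathrm{eff},k}$ and which your projection of $\bm{\Phi}_l^{\mathrm{comm}}$ alone can violate. This bound is nonincreasing in $\delta_{\Phi}$ and vanishes once the constraint is inactive, which is the opposite monotonicity to the claimed one.

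Your quantization argument is essentially the paper's. You bound the per-element rounding error by $\pi/2^{b}$, hence $\|\bm{\Delta\Theta}\|\le\sqrt{N_{\mathrm{tot}}}\,\pi/2^{b}$, and then use a second-order Taylor expansion with Cauchy--Schwarz on the linear term. The paper measures the error in the entries $e^{j\theta}$ via $2-2\cos\Delta\theta\le\Delta\theta^2$ rather than in the phases, which is immaterial.

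One point to tighten, shared with the paper: nearest-point rounding is deterministic, so you cannot bound a realized loss by an expected error. Either the expectation on the right is degenerate, or you should take expectations of both sides and bound $\mathbb{E}\,|f(\bm{\Theta}_{\mathrm{quant}})-f(\bm{\Theta}_{\mathrm{cont}})|$. Also, the $\max_{e\in\mathcal{E}}$ in $R_k^{\mathrm{sec}}$ is a kink just like $[\cdot]^+$, so your local-smoothness assumption must cover it too.
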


\begin{proof}
\textbf{Part 1: Reconfiguration Impact}

The constraint $\|\bm{\Phi}_l^{\mathrm{comm}} - \bm{\Phi}_l^{\mathrm{sense}}\|_F^2 \le \delta_{\Phi}$ implies:
\begin{align*}
\|\bm{\Theta}^* - \bm{\Theta}_{\mathrm{unconstrained}}^*\|_F^2 &= \sum_{l=1}^L \|\bm{\Phi}_l^{\mathrm{comm},*} - \bm{\Phi}_l^{\mathrm{comm},\mathrm{unc}}\|_F^2 \\
&\leq L N_{\mathrm{max}} \delta_{\Phi}
\end{align*}

By Lipschitz continuity:
\begin{align*}
|f(\bm{\Theta}^*) - f(\bm{\Theta}_{\mathrm{unconstrained}}^*)| &\leq L_f \|\bm{\Theta}^* - \bm{\Theta}_{\mathrm{unconstrained}}^*\|_F \\
&\leq L_f \sqrt{L N_{\mathrm{max}} \delta_{\Phi}}
\end{align*}

\textbf{Part 2: Quantization Loss}

For $b$-bit phase shifters, each quantization error satisfies:
\begin{align*}
|\Delta\theta_i| \leq \frac{\pi}{2^b}
\end{align*}

The Frobenius norm perturbation:
\begin{align*}
&\|\bm{\Theta}_{\mathrm{quant}} - \bm{\Theta}_{\mathrm{cont}}\|_F^2
  = \sum_{i=1}^{N_{\mathrm{tot}}} |e^{j(\theta_i + \Delta\theta_i)} - e^{j\theta_i}|^2 \\[4pt]
&= \sum_{i=1}^{N_{\mathrm{tot}}} 2 - 2\cos(\Delta\theta_i)
  \leq \sum_{i=1}^{N_{\mathrm{tot}}} \Delta\theta_i^2
  \leq N_{\mathrm{tot}} \frac{\pi^2}{2^{2b}}
\end{align*}
However, in multi-layer SIM, errors may be correlated. Let $\bm{\Delta\Theta}$ be the error matrix with covariance $\bm{\Sigma}$. Then:
\begin{align*}
\mathbb{E}[\|\bm{\Delta\Theta}\|_F^2] = \mathrm{Tr}(\bm{\Sigma}) \leq N_{\mathrm{tot}} \frac{\pi^2}{2^{2b}}
\end{align*}

By Taylor expansion and Lipschitz continuity:

\begin{align*}
&|f(\bm{\Theta}_{\mathrm{quant}}) - f(\bm{\Theta}_{\mathrm{cont}})| \\[2pt]
&\le L_f\,\|\bm{\Theta}_{\mathrm{quant}} - \bm{\Theta}_{\mathrm{cont}}\|_F \\[2pt]
&+ \tfrac{L_f^2}{2}\,\|\bm{\Theta}_{\mathrm{quant}} - \bm{\Theta}_{\mathrm{cont}}\|_F^2 +\mathcal{O}\big(\|\bm{\Delta\Theta}\|^3\big) \\[6pt]
&\le L_f \sqrt{N_{\mathrm{tot}}}\,\frac{\pi}{2^b}+\tfrac{L_f^2}{2}\,\mathbb{E}\!\big[\|\bm{\Delta\Theta}\|_F^2\big] +\mathcal{O}\!\big(2^{-3b}\big)
\end{align*}
\end{proof}

\subsection{Robustness and Security Guarantees}

\begin{theorem}[Distributionally Robust Secrecy]
\label{thm:robust_secrecy}
For uncertainty set $\mathcal{U}$ with moment bounds $(\mu, \Sigma)$, the achieved secrecy rate satisfies:
\begin{equation}
\inf_{\mathbb{P} \in \mathcal{U}} \mathbb{P}\left(R_k^{\mathrm{sec}} \geq R_k^{\min}\right) \geq 1 - \epsilon_{\mathrm{out}} - \Delta_{\mathrm{robust}}
\end{equation}
where $\Delta_{\mathrm{robust}} = \mathcal{O}\left(\frac{\rho_3}{\sigma^3} + \frac{M}{\sqrt{N_{\mathrm{samples}}}}\right)$ accounts for distributional robustness and estimation errors.
\end{theorem}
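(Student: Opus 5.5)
The plan is to split the worst-case outage probability into three error sources and bound each one separately. The first is the nominal Gaussian-surrogate error, which contributes $\epsilon_{\mathrm{out}}$. The second is the distributional mismatch between the true law of the eavesdropper channel errors and a Gaussian with the same moments, which gives the $\rho_3/\sigma^3$ term. The third is the estimation error incurred when the moments $(\mu,\Sigma)$, or equivalently $\mathbb{E}[R_k^{\mathrm{sec}}]$ and $\mathrm{Var}[R_k^{\mathrm{sec}}]$, are replaced by sample estimates, which gives the $M/\sqrt{N_{\mathrm{samples}}}$ term.

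\textbf{Nominal step.} I start from the deterministic surrogate of the outage constraint in Section~\ref{sec:solution}, namely $\mathbb{E}[R_k^{\mathrm{sec}}]-\delta\sqrt{\mathrm{Var}[R_k^{\mathrm{sec}}]}\ge R_k^{\min}+\Delta_k$ with $\delta=\mathcal{Q}^{-1}(1-\epsilon_{\mathrm{out}})$. This constraint is enforced by Blocks B and C at the limit point of Theorem~\ref{thm:enhanced_convergence}. Write $Z=(R_k^{\mathrm{sec}}-\mathbb{E}R_k^{\mathrm{sec}})/\sqrt{\mathrm{Var}R_k^{\mathrm{sec}}}$. The surrogate gives $\mathbb{P}(R_k^{\mathrm{sec}}<R_k^{\min})\le \mathbb{P}(Z<-\delta)$. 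If $Z$ were exactly standard normal, this probability would be at most $\epsilon_{\mathrm{out}}$ by the choice of $\delta$.

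\textbf{Mismatch step.} For a general $\mathbb{P}\in\mathcal{U}$, I linearize $R_k^{\mathrm{sec}}$ around the nominal eavesdropper channel. The map $\Delta\bm{h}_e\mapsto R_k^{\mathrm{sec}}$ is smooth away from the $[\cdot]^+$ kink and the max over $e$. These can be handled by a union bound over $e\in\mathcal{E}$, using the independence assumption across eavesdroppers, and by restricting to the event $R_k^{\mathrm{sec}}>0$, which is where the constraint bites. After linearization, $Z$ is approximately a normalized sum of independent perturbation components. The Berry--Esseen theorem then gives
\[
\sup_x|\mathbb{P}(Z\le x)-\Phi(x)|\le C_{\mathrm{BE}}\rho_3/\sigma^3,
\]
where $\rho_3$ is the third absolute moment and $\sigma^2$ the variance of the linearized summands. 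Both quantities are controlled uniformly over $\mathcal{U}$ by the moment bounds. The linearization remainder is second order in $\|\Delta\bm{h}_e\|$ and, I would argue, can be absorbed into the margin $\Delta_k$.

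\textbf{Estimation step.} In practice $\mathbb{E}$ and $\mathrm{Var}$ are replaced by empirical versions built from $N_{\mathrm{samples}}$ draws, as in the sample-based approximation mentioned in Section~\ref{sec:solution}. I would bound the deviation of the empirical mean and standard deviation by concentration inequalities (Hoeffding or Bernstein). This uses the fact that $R_k^{\mathrm{sec}}$ is bounded on the compact feasible set (C1), with a range that scales with $\log_2$ of the SNR and hence with the array dimension $M$. The result is an error of order $M/\sqrt{N_{\mathrm{samples}}}$ in the standardized threshold. Because the normal density is bounded, this translates into the same order of error in probability.

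Collecting the three contributions and taking the infimum over $\mathbb{P}\in\mathcal{U}$ yields the bound $1-\epsilon_{\mathrm{out}}-\Delta_{\mathrm{robust}}$. The main obstacle is the mismatch step. $R_k^{\mathrm{sec}}$ is nonlinear and non-smooth in the channel errors, so Berry--Esseen does not apply to it directly. Making the linearization rigorous with a remainder uniformly controlled over $\mathcal{U}$ requires a compact support or sub-Gaussian assumption on $\Delta\bm{h}_e$. I would first try to state this explicitly as part of the definition of $\mathcal{U}$, and to treat the max over $e$ via the union bound with independence.
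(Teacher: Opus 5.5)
Your three-way bookkeeping matches the paper's proof. The paper sets $X=R_k^{\mathrm{sec}}-\mathbb{E}[R_k^{\mathrm{sec}}]$, uses the surrogate to reduce to $\mathbb{P}(X/\sigma\le-\delta-\Delta_k/\sigma)$, applies a Berry--Esseen bound, adds $\Delta_{\mathrm{moment}}+\Delta_{\mathrm{est}}$, and finishes with $\Phi(-\delta)=\epsilon_{\mathrm{out}}$. So you and the paper both need $\delta=\mathcal{Q}^{-1}(\epsilon_{\mathrm{out}})$. The literal $\mathcal{Q}^{-1}(1-\epsilon_{\mathrm{out}})$ from Section~\ref{sec:solution} is negative for $\epsilon_{\mathrm{out}}<1/2$, and with it your nominal step gives $1-\epsilon_{\mathrm{out}}$.

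The gap is in the mismatch step you add to make Berry--Esseen nontrivial, and it has two parts.

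\textbf{The union bound cannot be tied to the surrogate.} The surrogate constrains the mean and variance of $R_k^{\mathrm{sec}}$, i.e.\ of $\max_e\log_2(1+\mathrm{SINR}_e^{(k)})$, not of the individual terms. The quantity $\mathbb{E}+\delta\sqrt{\mathrm{Var}}$ of a single term can exceed that of the max. A constant eavesdropper term plus a Gaussian one with the same mean already does this for $\delta\approx 1.28$. So per-eavesdropper Gaussian approximations cannot be compared with $\Phi(-\delta)$. Even with per-eavesdropper surrogates, the union bound (or the product under independence) gives about $|\mathcal{E}|\epsilon_{\mathrm{out}}$, not $\epsilon_{\mathrm{out}}$.

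\textbf{The moment bounds do not control what Berry--Esseen needs.} You claim $\rho_3$ and the summand structure are ``controlled uniformly over $\mathcal{U}$ by the moment bounds''. This is false: $(\mu,\Sigma)$ gives neither third moments nor independence of the entries of $\Delta\bm{h}_e$. Over a mean--covariance set you can choose $\Delta\bm{h}_e$ so that the linearized secrecy rate equals its mean plus its standard deviation times an arbitrary zero-mean, unit-variance scalar. Its worst-case lower tail at $-\delta$ is then Cantelli's $1/(1+\delta^2)$, which a two-point law attains. That law is compactly supported and sub-Gaussian, so the extra assumptions you propose do not rescue the step. You would need independent (whitened) entries with bounded third moments built into $\mathcal{U}$.

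The way out is to drop the linearization and the union bound and argue as the paper does. In the statement, $\rho_3$ and $\sigma^2$ are the third (absolute) central moment and the variance of $R_k^{\mathrm{sec}}$ itself. The single-summand Berry--Esseen inequality, $\sup_z|\mathbb{P}(X/\sigma\le z)-\Phi(z)|\le C_B\rho_3/\sigma^3$, holds for any law with a finite third moment. So neither the max over $e$ nor the $[\cdot]^+$ ever has to be opened up. The price is that the correction is $\mathcal{O}(1)$, since Lyapunov's inequality gives $\rho_3/\sigma^3\ge 1$. That is all the theorem claims, and it is consistent with the Cantelli obstruction above.

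Your concentration argument is a reasonable way to make the paper's unexplained $\Delta_{\mathrm{est}}$ explicit, with three caveats. It holds only with high probability over the samples. After standardization the threshold error is of order $M/(\sigma\sqrt{N_{\mathrm{samples}}})$, with the extra $1/\sigma$. And $M$ in the statement is the support constant $|X|\le M$ from the bounded-support case, not the antenna count; the range of $\log_2(1+\mathrm{SNR})$ grows only logarithmically in the number of antennas anyway. You must also account for the dependence of $\mathbb{E}[R_k^{\mathrm{sec}}]$ and $\mathrm{Var}[R_k^{\mathrm{sec}}]$ on $\mathbb{P}\in\mathcal{U}$, which the paper hides in $\Delta_{\mathrm{moment}}$.
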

\noindent Here, $\rho_3$ is the third central moment of $R_k^{\mathrm{sec}}$, $\sigma^2$ its variance, $M$ the bounded support constant for Bernstein’s inequality, and $N_{\mathrm{samples}}$ the number of Monte Carlo samples used to estimate moments; $C_B,\Delta_{\mathrm{moment}},\Delta_{\mathrm{est}}$ are constants arising from Berry–Esseen and sampling error, collected into $\Delta_{\mathrm{robust}}$.

\begin{proof}
Let $X = R_k^{\mathrm{sec}} - \mathbb{E}[R_k^{\mathrm{sec}}]$. The Bernstein reformulation ensures:
\begin{align*}
\mathbb{E}[R_k^{\mathrm{sec}}] - \delta\sigma \geq R_k^{\min} + \Delta_k
\end{align*}

We analyze the worst-case probability over $\mathcal{U}$:
\begin{align*}
\inf_{\mathbb{P} \in \mathcal{U}} \mathbb{P}(R_k^{\mathrm{sec}} \geq R_k^{\min}) &= \inf_{\mathbb{P} \in \mathcal{U}} \mathbb{P}(X \geq -\delta\sigma - \Delta_k)
\end{align*}

\textbf{Case 1: Bounded Support ($|X| \leq M$)}

By Bernstein's inequality for bounded random variables:
\begin{align*}
\mathbb{P}(X \leq -t) \leq \exp\left(-\frac{t^2}{2\sigma^2 + \frac{2}{3}Mt}\right), \quad t > 0
\end{align*}

Setting $t = \delta\sigma + \Delta_k$:
\begin{align*}
\mathbb{P}(R_k^{\mathrm{sec}} < R_k^{\min}) &\leq \exp\left(-\frac{(\delta\sigma + \Delta_k)^2}{2\sigma^2 + \frac{2}{3}M(\delta\sigma + \Delta_k)}\right)
\end{align*}

\textbf{Case 2: General Distribution with Moment Constraints}

By the Berry-Esseen theorem with moment constraints:
\begin{align*}
\sup_{z \in \mathbb{R}} \left|\mathbb{P}\left(\frac{X}{\sigma} \leq z\right) - \Phi(z)\right| \leq C_B \frac{\rho_3}{\sigma^3} + \Delta_{\mathrm{moment}}
\end{align*}

Thus:
\begin{equation}
\begin{aligned}
\lefteqn{%
  \inf_{\mathbb{P}\in\mathcal{U}} \mathbb{P}\big(X \ge -\delta\sigma - \Delta_k\big)
}\nonumber\\[2pt]
&= 1 - \sup_{\mathbb{P}\in\mathcal{U}} \mathbb{P}\!\left(\frac{X}{\sigma} \le -\delta - \frac{\Delta_k}{\sigma}\right) \\
&\ge 1 - \Phi\!\left(-\delta - \frac{\Delta_k}{\sigma}\right)
      - C_B \frac{\rho_3}{\sigma^3} - \Delta_{\mathrm{moment}} - \Delta_{\mathrm{est}} \\
&= \Phi\!\left(\delta + \frac{\Delta_k}{\sigma}\right)
      - C_B \frac{\rho_3}{\sigma^3} - \Delta_{\mathrm{robust}} \\
&\ge \Phi(\delta) - C_B \frac{\rho_3}{\sigma^3} - \Delta_{\mathrm{robust}} \\
&= 1 - \epsilon_{\mathrm{out}} - C_B \frac{\rho_3}{\sigma^3} - \Delta_{\mathrm{robust}} .
\end{aligned}
\end{equation}

where $\Delta_{\mathrm{robust}} = \Delta_{\mathrm{moment}} + \Delta_{\mathrm{est}}$ and we used $\Phi(-\delta) = \epsilon_{\mathrm{out}}$.
\end{proof}

\begin{theorem}[Sensing Accuracy under Model Approximations]
\label{thm:sensing_accuracy}
The Schur complement approximation satisfies:
\begin{align}
0 \leq \mathrm{CRB}(\bm{\xi}) - \mathrm{CRB}_{\mathrm{LB}}(\bm{\xi}) &\leq \frac{\|\bm{J}_{12}\|_F^2 \cdot \mathrm{Tr}(\bm{J}_{11}^{-2})}{\lambda_{\min}(\bm{J}_{22}) - \|\bm{J}_{12}\|_F^2 \|\bm{J}_{11}^{-1}\|_2} \label{eq:crb_error_abs} \\
\frac{|\mathrm{CRB}(\bm{\xi}) - \mathrm{CRB}_{\mathrm{LB}}(\bm{\xi})|}{\mathrm{CRB}(\bm{\xi})} &\leq \frac{\kappa^2(\bm{J}_{11}) \cdot \|\bm{J}_{12}\bm{J}_{22}^{-1}\bm{J}_{21}\|_2}{\lambda_{\min}(\bm{J}_{11})} \label{eq:crb_error_rel}
\end{align}
where $\kappa(\bm{J}_{11}) = \|\bm{J}_{11}\|_2 \|\bm{J}_{11}^{-1}\|_2$ is the condition number.
\end{theorem}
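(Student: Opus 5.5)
We interpret the statement within the Schur-complement framework introduced earlier. Partition the full FIM $\bm{J}(\bm{\xi})$ into the block $\bm{J}_{11}$ for the parameters of interest and the block $\bm{J}_{22}$ for the remaining (nuisance) parameters, with coupling $\bm{J}_{12}=\bm{J}_{21}^T$. Set
\[
\bm{E}=\bm{J}_{12}\bm{J}_{22}^{-1}\bm{J}_{21}\succeq\bm{0},\qquad \bm{S}=\bm{J}_{11}-\bm{E}.
\]
The exact bound is then $\mathrm{CRB}(\bm{\xi})=\mathrm{Tr}(\bm{S}^{-1})$, the trace of the corresponding block of $\bm{J}^{-1}$. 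The approximation that neglects cross-coupling is $\mathrm{CRB}_{\mathrm{LB}}(\bm{\xi})=\mathrm{Tr}(\bm{J}_{11}^{-1})$. This identification is an assumption: the statement does not define $\mathrm{CRB}_{\mathrm{LB}}$ explicitly, but it is the reading under which the claimed sign holds. We also assume $\bm{J}\succ\bm{0}$, so that $\bm{S}\succ\bm{0}$ and $\bm{J}_{22}\succ\bm{0}$.

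\emph{Nonnegativity.} Since $\bm{0}\prec\bm{S}\preceq\bm{J}_{11}$, operator monotonicity of matrix inversion gives $\bm{S}^{-1}\succeq\bm{J}_{11}^{-1}$. Taking traces yields the left inequality in \eqref{eq:crb_error_abs}.

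\emph{Absolute bound \eqref{eq:crb_error_abs}.} Apply the Woodbury identity to $\bm{S}^{-1}=(\bm{J}_{11}-\bm{J}_{12}\bm{J}_{22}^{-1}\bm{J}_{21})^{-1}$:
\[
\bm{S}^{-1}-\bm{J}_{11}^{-1}=\bm{J}_{11}^{-1}\bm{J}_{12}\bm{M}^{-1}\bm{J}_{21}\bm{J}_{11}^{-1},\qquad \bm{M}=\bm{J}_{22}-\bm{J}_{21}\bm{J}_{11}^{-1}\bm{J}_{12}.
\]
Because the matrix sandwiched in $\bm{J}_{11}^{-1}\bm{J}_{12}(\cdot)\bm{J}_{21}\bm{J}_{11}^{-1}$ enters as a PSD quadratic form, the trace of the difference is at most
\[
\|\bm{M}^{-1}\|_2\,\mathrm{Tr}\big(\bm{J}_{12}\bm{J}_{21}\bm{J}_{11}^{-2}\big)\le\|\bm{M}^{-1}\|_2\,\|\bm{J}_{12}\|_2^2\,\mathrm{Tr}(\bm{J}_{11}^{-2}),
\]
and we then replace $\|\bm{J}_{12}\|_2^2$ by the larger $\|\bm{J}_{12}\|_F^2$. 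Weyl's inequality gives
\[
\lambda_{\min}(\bm{M})\ge\lambda_{\min}(\bm{J}_{22})-\|\bm{J}_{12}\|_2^2\|\bm{J}_{11}^{-1}\|_2\ge\lambda_{\min}(\bm{J}_{22})-\|\bm{J}_{12}\|_F^2\|\bm{J}_{11}^{-1}\|_2 .
\]
Hence $\|\bm{M}^{-1}\|_2$ is bounded by the reciprocal of the stated denominator, and the right inequality in \eqref{eq:crb_error_abs} follows. The bound is meaningful only when this denominator is positive; we make that weak-coupling condition explicit, consistent with the paper's remark that cross-coupling among $\theta_i,r_i,v_i$ is small in our operating regime.

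\emph{Relative bound \eqref{eq:crb_error_rel}.} Use instead the exact identity
\[
\bm{S}^{-1}-\bm{J}_{11}^{-1}=\bm{S}^{-1}(\bm{J}_{11}-\bm{S})\bm{J}_{11}^{-1}=\bm{S}^{-1}\bm{E}\bm{J}_{11}^{-1}.
\]
For $\bm{P}=\bm{S}^{-1}\succeq\bm{0}$ and any square matrix $\bm{X}$, the von Neumann trace inequality gives $|\mathrm{Tr}(\bm{P}\bm{X})|\le\|\bm{X}\|_2\,\mathrm{Tr}(\bm{P})$. Applying this with $\bm{X}=\bm{E}\bm{J}_{11}^{-1}$,
\[
|\mathrm{CRB}-\mathrm{CRB}_{\mathrm{LB}}|\le\|\bm{E}\|_2\,\|\bm{J}_{11}^{-1}\|_2\,\mathrm{CRB}=\frac{\|\bm{E}\|_2}{\lambda_{\min}(\bm{J}_{11})}\,\mathrm{CRB}.
\]
Dividing by $\mathrm{CRB}$ and using $\kappa(\bm{J}_{11})\ge1$ gives \eqref{eq:crb_error_rel}; the stated bound is in fact looser than the one just derived.

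\emph{Main obstacle.} The difficulty lies in the absolute bound: controlling $\|\bm{M}^{-1}\|_2$ while keeping the trace factor $\mathrm{Tr}(\bm{J}_{11}^{-2})$ rather than a cruder spectral-norm factor. The quadratic-form trace inequality handles the first part, and Weyl's inequality handles the second. The remaining point needing care is to state explicitly the positivity of the denominator and the interpretation $\mathrm{CRB}_{\mathrm{LB}}=\mathrm{Tr}(\bm{J}_{11}^{-1})$, both of which the argument relies on.
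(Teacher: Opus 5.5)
Your proof is correct. For the absolute bound \eqref{eq:crb_error_abs} it follows the paper's route, and for the relative bound \eqref{eq:crb_error_rel} it takes a different route that also repairs a gap in the paper's argument.

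\textbf{Absolute bound.} The paper also writes $\mathrm{CRB}-\mathrm{CRB}_{\mathrm{LB}}$ as the trace of the block-inversion correction $\bm{J}_{11}^{-1}\bm{J}_{12}\bm{M}^{-1}\bm{J}_{21}\bm{J}_{11}^{-1}$; your $\bm{M}$ is its $\bm{S}$. This is exactly the identification $\mathrm{CRB}_{\mathrm{LB}}=\mathrm{Tr}(\bm{J}_{11}^{-1})$ that you flag as an assumption. It then bounds the trace by $\|\bm{M}^{-1}\|_2$ times a norm factor and controls $\lambda_{\min}(\bm{M})$ by Weyl's inequality, as you do. The only difference is bookkeeping. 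The paper uses $\|\bm{J}_{11}^{-1}\bm{J}_{12}\|_F^2\le\|\bm{J}_{11}^{-1}\|_2^2\|\bm{J}_{12}\|_F^2$, which produces $\|\bm{J}_{11}^{-1}\|_2^2$ in place of $\mathrm{Tr}(\bm{J}_{11}^{-2})$ and implies the stated form. You instead keep $\|\bm{J}_{12}\|_2^2$ together with the trace. Your explicit positivity condition on the denominator is genuinely needed: otherwise the right-hand side is negative while the left-hand side is nonnegative. The paper leaves this implicit.

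\textbf{Relative bound.} The paper reuses its absolute bound and divides by $\mathrm{Tr}(\bm{J}_{11}^{-1})\ge n\lambda_{\min}(\bm{J}_{11}^{-1})$. It then asserts the $\kappa^2$ form, but that last step fails in general. Take scalar blocks $\bm{J}_{11}=a$, $\bm{J}_{12}=b$, $\bm{J}_{22}=c$ with $ac>b^2$ and $b\neq 0$. The paper's intermediate quantity is $b^2/(ac-b^2)$, which strictly exceeds the target $b^2/(ac)$.

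Your argument combines the identity $\bm{S}^{-1}-\bm{J}_{11}^{-1}=\bm{S}^{-1}\bm{E}\bm{J}_{11}^{-1}$ with $|\mathrm{Tr}(\bm{P}\bm{X})|\le\|\bm{X}\|_2\,\mathrm{Tr}(\bm{P})$. This normalizes by $\mathrm{CRB}$ itself rather than by the smaller $\mathrm{CRB}_{\mathrm{LB}}$, and it does not pass through $\|\bm{M}^{-1}\|_2$. It gives the sharper bound $\|\bm{E}\|_2/\lambda_{\min}(\bm{J}_{11})$, which holds with equality in the scalar case. So your route is the one that actually establishes \eqref{eq:crb_error_rel}.
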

\begin{proof}
Using the block matrix inversion formula:
\begin{align*}
\bm{J}^{-1} = \begin{bmatrix}
\bm{J}_{11}^{-1} + \bm{J}_{11}^{-1}\bm{J}_{12}\bm{S}^{-1}\bm{J}_{21}\bm{J}_{11}^{-1} & -\bm{J}_{11}^{-1}\bm{J}_{12}\bm{S}^{-1} \\
-\bm{S}^{-1}\bm{J}_{21}\bm{J}_{11}^{-1} & \bm{S}^{-1}
\end{bmatrix}
\end{align*}
where $\bm{S} = \bm{J}_{22} - \bm{J}_{21}\bm{J}_{11}^{-1}\bm{J}_{12}$ is the Schur complement.

The approximation error in the (1,1) block is:
\begin{align*}
\bm{E} = \bm{J}_{11}^{-1}\bm{J}_{12}\bm{S}^{-1}\bm{J}_{21}\bm{J}_{11}^{-1}
\end{align*}

The CRB difference:
\begin{align*}
\mathrm{CRB}(\bm{\xi}) - \mathrm{CRB}_{\mathrm{LB}}(\bm{\xi}) = \mathrm{Tr}(\bm{E})
\end{align*}

Using von Neumann's trace inequality:
\begin{align*}
\mathrm{Tr}(\bm{E}) &\leq \|\bm{J}_{11}^{-1}\|_2^2 \cdot \|\bm{J}_{12}\|_F^2 \cdot \|\bm{S}^{-1}\|_2
\end{align*}

By the Weyl inequality for the Schur complement:
\begin{align*}
\lambda_{\min}(\bm{S}) &\geq \lambda_{\min}(\bm{J}_{22}) - \|\bm{J}_{21}\bm{J}_{11}^{-1}\bm{J}_{12}\|_2 \\
&\geq \lambda_{\min}(\bm{J}_{22}) - \|\bm{J}_{12}\|_F^2 \|\bm{J}_{11}^{-1}\|_2
\end{align*}

Thus:
\begin{align*}
\|\bm{S}^{-1}\|_2 \leq \frac{1}{\lambda_{\min}(\bm{J}_{22}) - \|\bm{J}_{12}\|_F^2 \|\bm{J}_{11}^{-1}\|_2}
\end{align*}

For the relative error bound:
\begin{align*}
\frac{|\mathrm{CRB}(\bm{\xi}) - \mathrm{CRB}_{\mathrm{LB}}(\bm{\xi})|}{\mathrm{CRB}(\bm{\xi})} &\leq \frac{\|\bm{J}_{11}^{-1}\|_2^2 \|\bm{J}_{12}\|_F^2 \|\bm{S}^{-1}\|_2}{\mathrm{Tr}(\bm{J}_{11}^{-1})} \\
&\leq \frac{\|\bm{J}_{11}^{-1}\|_2^2 \|\bm{J}_{12}\|_F^2 \|\bm{S}^{-1}\|_2}{n \lambda_{\min}(\bm{J}_{11}^{-1})} \\
&\leq \frac{\kappa^2(\bm{J}_{11}) \|\bm{J}_{12}\bm{J}_{22}^{-1}\bm{J}_{21}\|_2}{\lambda_{\min}(\bm{J}_{11})}
\end{align*}
where $n$ is the dimension of $\bm{J}_{11}$.
\end{proof}

\begin{corollary}[Multi-Metric Convergence Guarantee]
\label{cor:multi_metric}
The enhanced L-BCD achieves simultaneous convergence:
\begin{align*}
&|f_{\mathrm{aug}}^{(t)} - f_{\mathrm{aug}}^*| \leq \epsilon_{\mathrm{obj}}, \quad
 \max_k \big|\mathbb{P}(R_k^{\mathrm{sec}} < R_k^{\min}) - \epsilon_{\mathrm{out}}\big| \leq \epsilon_{\mathrm{outage}} \\[4pt]
&\mathrm{CRB}_{\mathrm{total}}^{(t)} \leq \Gamma_{\mathrm{max}} + \epsilon_{\mathrm{CRB}}, \quad
 \|\bm{\Theta}^{(t+1)} - \bm{\Theta}^{(t)}\|_F \leq \epsilon_{\mathrm{param}}
\end{align*}
within $\mathcal{O}(1/\epsilon^2)$ iterations for $\epsilon = \min\{\epsilon_{\mathrm{obj}}, \epsilon_{\mathrm{outage}}, \epsilon_{\mathrm{CRB}}\}$.
\end{corollary}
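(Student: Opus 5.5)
The plan is to treat each of the four metrics separately and show that each falls below its tolerance after $\mathcal{O}(1/\epsilon^2)$ iterations. The final iteration count is then the maximum of the four counts, and all four share the same order because $\epsilon$ is the smallest tolerance. The backbone is the general non-convex estimate of Corollary~\ref{cor:convergence_rate}, together with the iteration count of Lemma~\ref{lemma:initialization}. Together they give
\[
\min_{0\le k\le T-1}\|\bm{\Theta}^{(k+1)}-\bm{\Theta}^{(k)}\|^2 \le \frac{\Delta_0+C}{\gamma T},
\]
where $C=\sum_{k}\sum_i\epsilon_i^{(k)}<\infty$ by (C3). Choosing $T\ge(\Delta_0+C)/(\gamma\epsilon^2)$ yields an index $t$ with $\|\bm{\Theta}^{(t+1)}-\bm{\Theta}^{(t)}\|_F\le\epsilon\le\epsilon_{\mathrm{param}}$. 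This settles the fourth metric.

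\textbf{Objective metric.} Part 1 of the proof of Theorem~\ref{thm:enhanced_convergence} gives the ascent inequality. Summing it from $t$ onward shows that $f_{\mathrm{aug}}^*-f_{\mathrm{aug}}^{(t)}$ is controlled by the remaining step-size sum $\gamma\sum_{s\ge t}\|\bm{\Theta}^{(s+1)}-\bm{\Theta}^{(s)}\|^2$ plus the tail $\sum_{s\ge t}\sum_i\epsilon_i^{(s)}$. If in addition $f_{\mathrm{aug}}$ has the KL property with $\theta\in(0,\tfrac12]$, Corollary~\ref{cor:convergence_rate} gives an $\mathcal{O}(1/t)$ rate for the objective gap, which is at most $\epsilon_{\mathrm{obj}}$ within $\mathcal{O}(1/\epsilon)\subseteq\mathcal{O}(1/\epsilon^2)$ iterations. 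Without the KL property, I would instead use the block-wise Lipschitz gradients of (C2) and the compactness in (C1). These give $|f_{\mathrm{aug}}^{(t+1)}-f_{\mathrm{aug}}^{(t)}|\le G\|\bm{\Theta}^{(t+1)}-\bm{\Theta}^{(t)}\|+\sum_i\epsilon_i^{(t)}$, where $G$ is a gradient bound on $\mathcal{X}$. Under this weaker route the stopping criterion (i) holds at the same index that controls the fourth metric.

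\textbf{Outage and CRB metrics.} These are continuous functions of $\bm{\Theta}$ on the compact set $\mathcal{X}$, and I would transfer the step-size bound to them through Lipschitz continuity. For the outage metric, the Bernstein surrogate enforced in Block~B makes the outage probability at most $\epsilon_{\mathrm{out}}$ up to the error $\Delta_{\mathrm{robust}}$ of Theorem~\ref{thm:robust_secrecy}. Once the iterates stabilize, the deviation $|\mathbb{P}(R_k^{\mathrm{sec}}<R_k^{\min})-\epsilon_{\mathrm{out}}|$ is Lipschitz in $\bm{\Theta}$ wherever the secrecy rate has nonzero variance. For the CRB metric, Block~A enforces the Schur-complement surrogate constraint at every iteration. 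Theorem~\ref{thm:sensing_accuracy} bounds the gap between the exact CRB and its lower bound. Combining the two, together with the $\mathcal{O}(1/t)$ inexactness allowed by (C5), gives $\mathrm{CRB}_{\mathrm{total}}^{(t)}\le\Gamma_{\max}+\mathcal{O}(\epsilon)$.

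\textbf{Main obstacle.} The hardest step is the outage metric. Its target is a two-sided equality, and nothing in Blocks A--E drives the outage probability toward $\epsilon_{\mathrm{out}}$ from below; the surrogate only gives $\le$. I would first argue through KKT complementary slackness, which Proposition~\ref{prop:constraint_qualification} justifies. At a stationary limit where the secrecy objective presses against the outage constraint, that constraint is active, so the surrogate holds with equality. The deviation then reduces to $\Delta_{\mathrm{robust}}$ plus the Lipschitz step term, and I would treat $\Delta_{\mathrm{robust}}$ as a tolerance that is absorbed into $\epsilon_{\mathrm{outage}}$. If the constraint is inactive, the claim has to be read as the one-sided bound $\mathbb{P}(R_k^{\mathrm{sec}}<R_k^{\min})\le\epsilon_{\mathrm{out}}+\epsilon_{\mathrm{outage}}$.
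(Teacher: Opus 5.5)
\textbf{Objective metric and simultaneity.} Your per-metric decomposition is the paper's own: objective via an $\mathcal{O}(1/t)$ rate, outage via Block~B's Bernstein surrogate, CRB via Block~A and Theorem~\ref{thm:sensing_accuracy}, and steps via Corollary~\ref{cor:convergence_rate}. The paper's proof simply asserts the points below, but as a proof of the stated corollary your argument has gaps in two places.

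The statement asks for the distance to the limit, $|f_{\mathrm{aug}}^{(t)}-f_{\mathrm{aug}}^*|$. Stopping criterion (i) only measures the successive change $\Delta f$. Your non-KL route bounds only $|f_{\mathrm{aug}}^{(t+1)}-f_{\mathrm{aug}}^{(t)}|$, and one small increment says nothing about the remaining gap. Summing the ascent inequality from $t$ onward points the wrong way: it gives $f_{\mathrm{aug}}^*-f_{\mathrm{aug}}^{(t)}\ge\gamma\sum_{s\ge t}\|\bm{\Theta}^{(s+1)}-\bm{\Theta}^{(s)}\|^2-\sum_{s\ge t}\sum_i\epsilon_i^{(s)}$, which is a lower bound. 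An upper bound through your gradient bound $G$ would need the \emph{unsquared} tail $\sum_{s\ge t}\|\bm{\Theta}^{(s+1)}-\bm{\Theta}^{(s)}\|$ to be small, i.e.\ finite length, which is exactly what KL supplies.

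So this metric rests on the KL case $\theta\in(0,\tfrac12]$, an extra hypothesis not among (C1)--(C5). The paper makes the same leap when it claims $f_{\mathrm{aug}}^*-f_{\mathrm{aug}}(\bm{\Theta}^{(t)})\le C/t$ ``from Theorem~\ref{thm:enhanced_convergence}''. State KL as an assumption rather than offering the Lipschitz fallback.

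Two further points in this part:
\begin{itemize}
\item The min-bound gives \emph{some} index $k\le T-1$ with a small step, not every $t\ge T$. Taking the maximum of four counts therefore does not yield one iterate meeting all four metrics. Apply the summed inequality on a window $[T_0,2T_0]$ instead; there KL also makes the objective gap $\mathcal{O}(1/T_0)$.
\item $\epsilon=\min\{\epsilon_{\mathrm{obj}},\epsilon_{\mathrm{outage}},\epsilon_{\mathrm{CRB}}\}$ does not contain $\epsilon_{\mathrm{param}}$, so your step $\epsilon\le\epsilon_{\mathrm{param}}$ is unjustified. Without it, the step metric costs $\mathcal{O}(1/\epsilon_{\mathrm{param}}^2)$ iterations.
\end{itemize}

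\textbf{Outage and CRB metrics.} The two-sided outage metric is the real gap, and your KKT argument does not close it.
\begin{itemize}
\item Proposition~\ref{prop:constraint_qualification} covers only the power, time and reconfiguration constraints, not the Bernstein constraint.
\item Nothing shows the Bernstein constraint is active. Block~B maximizes $\sum_k(\mathbb{E}[R_k^{\mathrm{sec}}]-\delta\sqrt{\mathrm{Var}[R_k^{\mathrm{sec}}]})$, the very quantity the constraint bounds from below. Slack is therefore the generic case, and then the outage probability can lie far below $\epsilon_{\mathrm{out}}$.
\item Even when the constraint is active, equality in the surrogate only places $\mathbb{P}(R_k^{\mathrm{sec}}<R_k^{\min})$ within $C_B\rho_3/\sigma^3+\Delta_{\mathrm{robust}}$ of $\Phi(-\delta-\Delta_k/\sigma)$. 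That value differs from $\epsilon_{\mathrm{out}}=\Phi(-\delta)$ by a fixed amount. None of these terms shrinks with $t$, so they are absorbed only by assuming $\epsilon_{\mathrm{outage}}$ dominates them.
\item Block~C changes $\{\bm{\Phi}_l^{\mathrm{comm}}\}$ after Block~B via \eqref{eq:sim_comm_subproblem}, which has no outage constraint. Even the one-sided bound therefore reaches $\bm{\Theta}^{(t+1)}$ only through a Lipschitz estimate on the Block~C step. That estimate needs a bounded density of $R_k^{\mathrm{sec}}$ near $R_k^{\min}$, which is another unstated assumption.
\end{itemize}

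A similar caveat applies to the CRB. The gap in Theorem~\ref{thm:sensing_accuracy} is a fixed modeling error, not $\mathcal{O}(\epsilon)$. If $\mathrm{CRB}_{\mathrm{total}}$ is the Schur-based quantity of Section~\ref{sec:system_model}, Block~A's constraint gives $\mathrm{CRB}_{\mathrm{total}}^{(t)}\le\Gamma_{\max}$ at every iterate, up to the (C5) inexactness. This holds because no later block in Algorithm~\ref{alg:enhanced_lbcd} touches the sensing variables, so no Lipschitz transfer is needed.

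What your argument actually establishes is the one-sided bound $\mathbb{P}(R_k^{\mathrm{sec}}<R_k^{\min})\le\epsilon_{\mathrm{out}}+\epsilon_{\mathrm{outage}}$, under these extra assumptions. The paper obtains the two-sided version only by declaring it enforced ``with bounded error $\epsilon_{\mathrm{outage}}$''. It does not follow from (C1)--(C5).
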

\begin{proof}
From Theorem \ref{thm:enhanced_convergence}, we have objective convergence:
\begin{align*}
f_{\mathrm{aug}}^* - f_{\mathrm{aug}}(\bm{\Theta}^{(t)}) \leq \frac{C}{t}
\end{align*}

Thus for $|f_{\mathrm{aug}}^{(t)} - f_{\mathrm{aug}}^*| \leq \epsilon_{\mathrm{obj}}$, we need $t \geq C/\epsilon_{\mathrm{obj}}$.

The outage probability constraint is explicitly enforced in Block B via the Bernstein reformulation with bounded error $\epsilon_{\mathrm{outage}}$. The CRB constraint is handled in Block A through the Schur complement approximation with error bounded by Theorem \ref{thm:sensing_accuracy}. Parameter convergence follows from the vanishing steps property.

Combining these results with the convergence rates from Corollary \ref{cor:convergence_rate}, all metrics converge simultaneously with $\mathcal{O}(1/\epsilon^2)$ iterations for the worst-case metric precision $\epsilon$.
\end{proof}

\section{Simulation Results and Performance Evaluation}
\label{sec:simulations}

We evaluate the proposed SIM-assisted ISAC framework on a downlink system operating at $f_c=28$~GHz with a BS having $M=32$ antennas and $R_{\mathrm{RF}}=8$ RF chains, and a $L$-layer SIM. Unless stated otherwise, we set $L=3$ with $N_l=64$ elements per layer ($N_{\mathrm{tot}}=192$), $K=4$ users, and $E=2$ eavesdroppers. The BS--SIM link follows a near-field model, while SIM--user and SIM--eavesdropper links are far-field with path-loss exponent $n=2.5$ and distances between $15$ and $40$~m. The maximum BS power is $P_{\mathrm{BS}}^{\max}=30$~dBm, the per-slot energy budget is $\mathcal{E}_{\max}=25$~dBJ, noise power is $\sigma_k^2=-104$~dBm, and the outage target is $\epsilon_{\mathrm{out}}=0.1$. We consider time-sharing (TS), communication-first (CF), sensing-first (SF), single-layer RIS (SL-RIS), and a non-robust variant of our algorithm (NR-LBCD) as benchmarks.

Fig.~\ref{fig:power_vs_objective} shows the weighted objective versus $P_{\mathrm{BS}}^{\max}$ for $\alpha=0.5$. The proposed L-BCD consistently outperforms all baselines, with gains that widen at higher power, illustrating the benefit of jointly optimizing beamforming, SIM phases, and resources rather than treating sensing and communication separately.

The communication–sensing trade-off controlled by $\alpha$ is illustrated in Fig.~\ref{fig:pareto_frontier}. As $\alpha$ moves from $0$ (communication-oriented) to $1$ (sensing-oriented), the sum secrecy rate smoothly decreases while the inverse normalized CRB improves, tracing a Pareto frontier that agrees with the theoretical trade-off in Section~\ref{sec:analysis}.

Robustness to CSI uncertainty is examined in Fig.~\ref{fig:robustness_analysis}, where the outage secrecy rate is plotted versus the channel-error bound. The robust L-BCD maintains a graceful performance degradation and closely follows the analytical bound, whereas NR-LBCD and TS suffer sharp drops and can violate the secrecy requirement. Fig.~\ref{fig:outage_probability} further shows that L-BCD keeps the secrecy-outage probability below $\epsilon_{\mathrm{out}}$ even as the number of eavesdroppers increases, while non-robust schemes quickly exceed the target.

The impact of SIM architecture and hardware constraints is summarized in Figs.~\ref{fig:multilayer_gain}, \ref{fig:quantization_impact}, and \ref{fig:crb_approximation}. Fig.~\ref{fig:multilayer_gain} compares multi-layer SIM to SL-RIS as $L$ grows: using $L=3$--4 layers yields about $30$--$60\%$ improvement in secrecy and sensing metrics, in line with the scaling laws of Proposition~\ref{prop:sim_comparison}. Fig.~\ref{fig:quantization_impact} shows that $3$--$4$-bit phase shifters already limit performance loss below roughly $10\%$, while Fig.~\ref{fig:crb_approximation} confirms that the Schur-complement CRB approximation error remains modest (typically under $10\%$ at medium-to-high SNR) and respects the bounds in Theorem~\ref{thm:sensing_accuracy}.

Algorithmic behavior is illustrated in Figs.~\ref{fig:convergence_analysis} and \ref{fig:time_allocation}. Fig.~\ref{fig:convergence_analysis} shows that the normalized objective gap decays approximately as $\mathcal{O}(1/t)$, with practical convergence reached in fewer than $20$ iterations, which is consistent with the non-asymptotic rate in Corollary~\ref{cor:convergence_rate}. Fig.~\ref{fig:time_allocation} depicts the optimized time fractions versus the energy budget: as $\mathcal{E}_{\max}$ increases, more time is allocated to sensing while communication time slightly shrinks, and the CE duration stabilizes near its minimum, illustrating how Block~D automatically balances sensing accuracy and secrecy throughput.

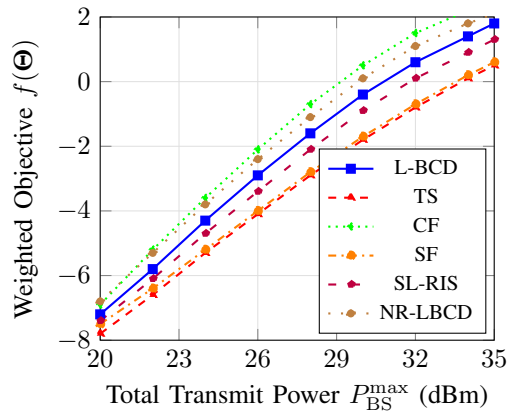
\begin{figure}[t]
\centering
\begin{tikzpicture}
\begin{axis}[width=0.77\columnwidth,
xlabel={Total Transmit Power $P_{\mathrm{BS}}^{\max}$ (dBm)},ylabel={Weighted Objective $f(\bm{\Theta})$},xmin=20, xmax=35,ymin=-8, ymax=2,xtick={20,23,26,29,32,35},ytick={-8,-6,-4,-2,0,2},legend style={at={(0.98,0.02)}, anchor=south east, legend columns=1, font=\footnotesize},grid=both,grid style={line width=.1pt, draw=gray!10},major grid style={line width=.2pt,draw=gray!25},every axis plot/.append style={thick}
]

\addplot [color=blue, solid, mark=square*, mark size=1.5] coordinates {(20, -7.2) (22, -5.8) (24, -4.3) (26, -2.9) (28, -1.6) (30, -0.4) (32, 0.6) (34, 1.4) (35, 1.8)
};

\addplot [color=red, dashed, mark=triangle*, mark size=1.5] coordinates {(20, -7.8) (22, -6.6) (24, -5.3) (26, -4.1) (28, -2.9) (30, -1.8) (32, -0.8) (34, 0.1) (35, 0.5)
};

\addplot [color=green, dotted, mark=diamond*, mark size=1.5] coordinates {(20, -6.9) (22, -5.2) (24, -3.6) (26, -2.1) (28, -0.7) (30, 0.5) (32, 1.5) (34, 2.2) (35, 2.5)
};

\addplot [color=orange, dashdotted, mark=*, mark size=1.5] coordinates {(20, -7.5) (22, -6.4) (24, -5.2) (26, -4.0) (28, -2.8) (30, -1.7) (32, -0.7) (34, 0.2) (35, 0.6)
};

\addplot [color=purple, loosely dashed, mark=pentagon*, mark size=1.5] coordinates {(20, -7.4) (22, -6.1) (24, -4.7) (26, -3.4) (28, -2.1) (30, -0.9) (32, 0.1) (34, 0.9) (35, 1.3)
};

\addplot [color=brown, loosely dotted, mark=oplus*, mark size=1.5] coordinates {(20, -6.8) (22, -5.3) (24, -3.8) (26, -2.4) (28, -1.1) (30, 0.1) (32, 1.1) (34, 1.8) (35, 2.1)
};

\legend{L-BCD, TS, CF, SF, SL-RIS, NR-LBCD}
\end{axis}
\end{tikzpicture}
\caption{Weighted utility versus BS transmit power $P_{\mathrm{BS}}^{\max}$ for $\alpha=0.5$, comparing L-BCD with TS, CF, SF, SL-RIS, and NR-LBCD.}
\label{fig:power_vs_objective}
\end{figure}

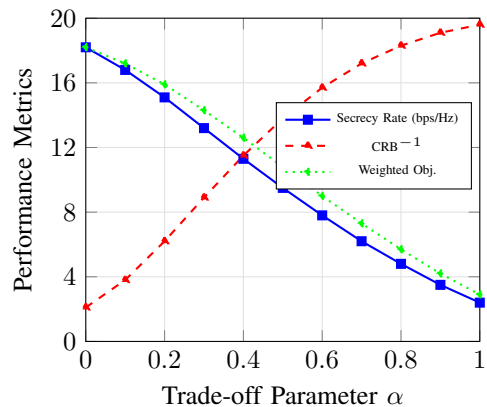
\begin{figure}[t]
\centering
\begin{tikzpicture}
\begin{axis}[width=0.77\columnwidth,xlabel={Trade-off Parameter $\alpha$},ylabel={Performance Metrics},xmin=0, xmax=1,ymin=0, ymax=20,xtick={0,0.2,0.4,0.6,0.8,1.0},ytick={0,4,8,12,16,20},legend style={at={(0.98,0.47)}, anchor=south east, legend columns=1, font=\tiny},grid=both,grid style={line width=.1pt, draw=gray!10},major grid style={line width=.2pt,draw=gray!25},every axis plot/.append style={thick}]

\addplot [color=blue, solid, mark=square*, mark size=1.5] coordinates {(0, 18.2) (0.1, 16.8) (0.2, 15.1) (0.3, 13.2) (0.4, 11.3) (0.5, 9.5) (0.6, 7.8) (0.7, 6.2) (0.8, 4.8) (0.9, 3.5) (1.0, 2.4)
};

\addplot [color=red, dashed, mark=triangle*, mark size=1.5] coordinates {(0, 2.1) (0.1, 3.8) (0.2, 6.2) (0.3, 8.9) (0.4, 11.5) (0.5, 13.8) (0.6, 15.7) (0.7, 17.2) (0.8, 18.3) (0.9, 19.1) (1.0, 19.6)
};

\addplot [color=green, dotted, mark=diamond*, mark size=1.5] coordinates {(0, 18.2) (0.1, 17.2) (0.2, 15.9) (0.3, 14.3) (0.4, 12.6) (0.5, 10.8) (0.6, 9.0) (0.7, 7.3) (0.8, 5.7) (0.9, 4.2) (1.0, 2.9)
};

\legend{Secrecy Rate (bps/Hz), CRB$^{-1}$, Weighted Obj.}
\end{axis}
\end{tikzpicture}
\caption{Sum secrecy rate, inverse normalized CRB, and scalarized utility versus trade-off parameter $\alpha$, illustrating the communication–sensing Pareto frontier.}
\label{fig:pareto_frontier}
\end{figure}

\begin{figure}[thpb!]
\centering
\begin{tikzpicture}
\begin{axis}[width=0.77\columnwidth,xlabel={Channel Uncertainty Bound $\epsilon_h^{\max}$},ylabel={Outage Secrecy Rate (bps/Hz)},xmin=0, xmax=0.3,ymin=0, ymax=16,xtick={0,0.05,0.1,0.15,0.2,0.25,0.3},ytick={0,4,8,12,16},legend style={at={(0.03,0.02)}, anchor=south west, legend columns=1, font=\footnotesize},grid=both,grid style={line width=.1pt, draw=gray!10},major grid style={line width=.2pt,draw=gray!25},every axis plot/.append style={thick}]

\addplot [color=blue, solid, mark=square*, mark size=1.5] coordinates {(0, 15.2) (0.05, 14.3) (0.1, 13.2) (0.15, 11.9) (0.2, 10.5) (0.25, 9.1) (0.3, 7.8)
};

\addplot [color=red, dashed, mark=triangle*, mark size=1.5] coordinates {(0, 15.8) (0.05, 13.2) (0.1, 9.8) (0.15, 6.5) (0.2, 3.9) (0.25, 2.1) (0.3, 1.2)
};

\addplot [color=green, dotted, mark=diamond*, mark size=1.5] coordinates {(0, 12.5) (0.05, 10.8) (0.1, 8.9) (0.15, 6.8) (0.2, 4.9) (0.25, 3.2) (0.3, 2.0)
};

\addplot [color=black, loosely dashed, mark=none, mark size=1.5] coordinates {(0, 15.2) (0.05, 14.1) (0.1, 12.8) (0.15, 11.3) (0.2, 9.8) (0.25, 8.3) (0.3, 6.9)
};

\legend{L-BCD (Robust), NR-LBCD, TS, Theoretical Bound}
\end{axis}
\end{tikzpicture}
\caption{Outage secrecy rate versus channel-uncertainty bound $\epsilon_h^{\max}$ for robust L-BCD, NR-LBCD, TS, and the analytical robustness bound.}
\label{fig:robustness_analysis}
\end{figure}
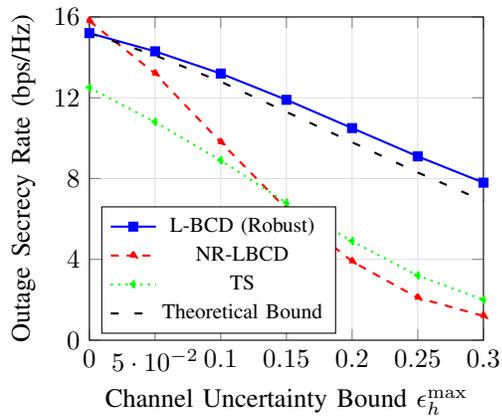

\begin{figure}[t]
\centering
\begin{tikzpicture}
\begin{axis}[width=0.77\columnwidth,xlabel={Number of SIM Layers $L$},ylabel={Perf. Gain Rel. to SL-RIS (\%)},xmin=1, xmax=6,ymin=0, ymax=60,xtick={1,2,3,4,5,6},ytick={0,15,30,45,60},legend style={at={(0.98,0.08)}, anchor=south east, legend columns=1, font=\footnotesize},grid=both,grid style={line width=.1pt, draw=gray!10},major grid style={line width=.2pt,draw=gray!25},every axis plot/.append style={thick}]

\addplot [color=blue, solid, mark=square*, mark size=1.5] coordinates {(1, 0) (2, 18) (3, 32) (4, 41) (5, 47) (6, 51)
};

\addplot [color=red, dashed, mark=triangle*, mark size=1.5] coordinates {(1, 0) (2, 25) (3, 42) (4, 53) (5, 58) (6, 61)
};

\addplot [color=black, loosely dotted, mark=none, mark size=1.5] coordinates {(1, 0) (2, 22) (3, 38) (4, 49) (5, 56) (6, 60)
};

\legend{Secrecy Gain, Sensing Gain, Theoretical Bound}
\end{axis}
\end{tikzpicture}
\caption{Relative secrecy and sensing performance gain of an $L$-layer SIM over a single-layer RIS as a function of the number of layers $L$.}
\label{fig:multilayer_gain}
\end{figure}
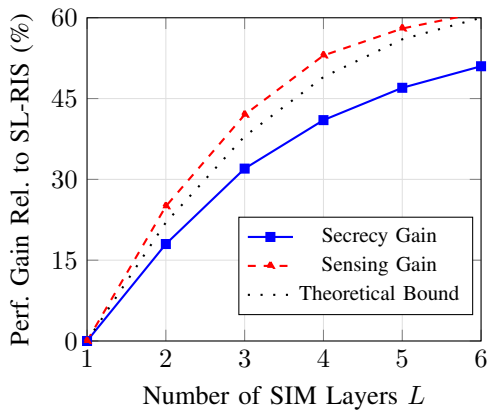

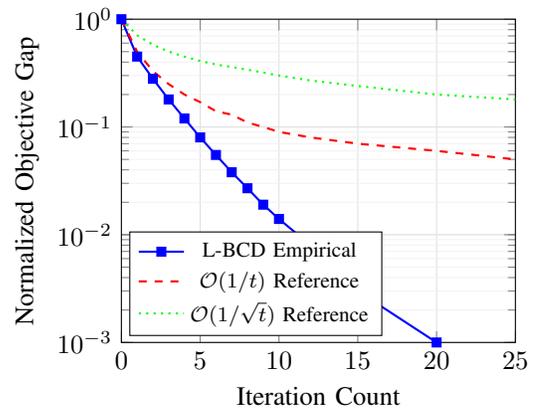
\begin{figure}[t]
\centering
\begin{tikzpicture}
\begin{axis}[width=0.77\columnwidth,xlabel={Iteration Count},ylabel={Normalized Objective Gap},xmin=0, xmax=25,ymin=0.001, ymax=1,xtick={0,5,10,15,20,25},ytick={0.001,0.01,0.1,1},ymode=log,legend style={at={(0.02,0.02)}, anchor=south west, legend columns=1, font=\footnotesize},grid=both,grid style={line width=.1pt, draw=gray!10},major grid style={line width=.2pt,draw=gray!25},every axis plot/.append style={thick}]

\addplot [color=blue, solid, mark=square*, mark size=1.5] coordinates {(0, 1.00) (1, 0.45) (2, 0.28) (3, 0.18) (4, 0.12) (5, 0.08) (6, 0.055) (7, 0.038) (8, 0.027) (9, 0.019) (10, 0.014) (12, 0.008) (15, 0.003) (20, 0.001) (25, 0.0005)};

\addplot [color=red, dashed, mark=none, mark size=1.5] coordinates {(0, 1.00) (1, 0.50) (2, 0.33) (3, 0.25) (4, 0.20) (5, 0.17) (6, 0.14) (7, 0.13) (8, 0.11) (9, 0.10) (10, 0.09) (12, 0.08) (15, 0.07) (20, 0.06) (25, 0.05)
};

\addplot [color=green, dotted, mark=none, mark size=1.5] coordinates {(0, 1.00) (1, 0.71) (2, 0.58) (3, 0.50) (4, 0.45) (5, 0.41) (6, 0.38) (7, 0.36) (8, 0.34) (9, 0.32) (10, 0.30) (12, 0.27) (15, 0.24) (20, 0.20) (25, 0.18)};

\legend{L-BCD Empirical, $\mathcal{O}(1/t)$ Reference, $\mathcal{O}(1/\sqrt{t})$ Reference}
\end{axis}
\end{tikzpicture}
\caption{Normalized objective gap of L-BCD versus iteration index, compared with $\mathcal{O}(1/t)$ and $\mathcal{O}(1/\sqrt{t})$ reference decay curves.}
\label{fig:convergence_analysis}
\end{figure}

\begin{figure}[t]
\centering
\begin{tikzpicture}
\begin{axis}[width=0.77\columnwidth,xlabel={Energy Budget $\mathcal{E}_{\mathrm{max}}$ (dBJ)},ylabel={Optimal Time Allocation},xmin=20, xmax=35,ymin=0, ymax=1,xtick={20,23,26,29,32,35},ytick={0,0.2,0.4,0.6,0.8,1.0},legend style={at={(0.98,0.89)}, anchor=north east, legend columns=1, font=\footnotesize},grid=both,grid style={line width=.1pt, draw=gray!10},major grid style={line width=.2pt,draw=gray!25},every axis plot/.append style={thick}]

\addplot [color=blue, solid, mark=square*, mark size=1.5] coordinates {(20, 0.28) (22, 0.32) (24, 0.36) (26, 0.41) (28, 0.45) (30, 0.48) (32, 0.51) (34, 0.53) (35, 0.54)};

\addplot [color=red, dashed, mark=triangle*, mark size=1.5] coordinates {(20, 0.65) (22, 0.61) (24, 0.57) (26, 0.52) (28, 0.48) (30, 0.45) (32, 0.42) (34, 0.40) (35, 0.39)
};

\addplot [color=green, dotted, mark=diamond*, mark size=1.5] coordinates {(20, 0.07) (22, 0.07) (24, 0.07) (26, 0.07) (28, 0.07) (30, 0.07) (32, 0.07) (34, 0.07) (35, 0.07)};

\legend{$\tau_{\mathrm{sense}}$, $\tau_{\mathrm{comm}}$, $\tau_{\mathrm{ce}}$}
\end{axis}
\end{tikzpicture}
\caption{Optimized sensing, communication, and channel-estimation time fractions versus energy budget $\mathcal{E}_{\max}$ for balanced operation ($\alpha=0.5$).}
\label{fig:time_allocation}
\end{figure}
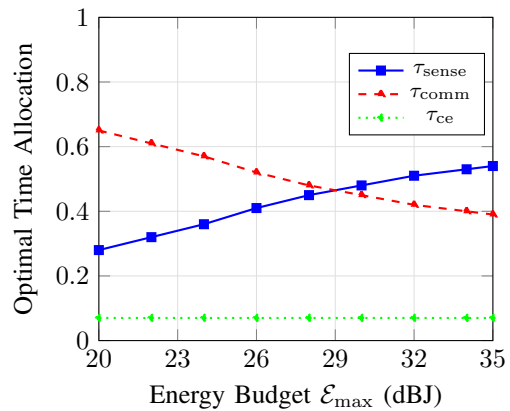

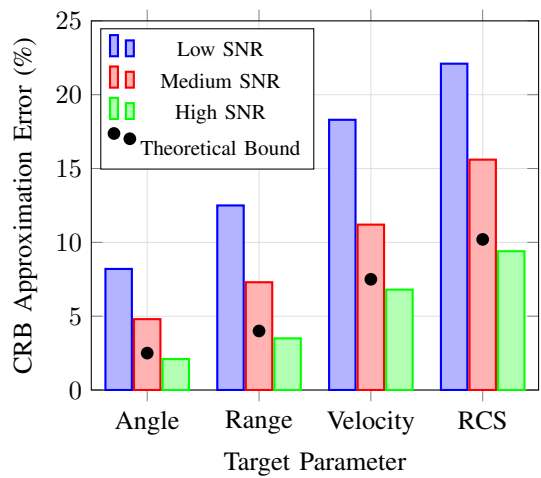
\begin{figure}[thbp!]
\centering
\begin{tikzpicture}
\begin{axis}[width=0.85\columnwidth,xlabel={Target Parameter},ylabel={CRB Approximation Error (\%)},xmin=0.5, xmax=4.5,ymin=0, ymax=25,xtick={1,2,3,4},xticklabels={Angle, Range, Velocity, RCS},ytick={0,5,10,15,20,25},legend style={at={(0.02,0.98)}, anchor=north west, legend columns=1, font=\footnotesize},grid=both,grid style={line width=.1pt, draw=gray!10},major grid style={line width=.2pt,draw=gray!25},every axis plot/.append style={thick},ybar=0.8pt,bar width=10pt]

\addplot [color=blue, fill=blue!30] coordinates {(1, 8.2) (2, 12.5) (3, 18.3) (4, 22.1)};

\addplot [color=red, fill=red!30] coordinates {(1, 4.8) (2, 7.3) (3, 11.2) (4, 15.6)
};

\addplot [color=green, fill=green!30] coordinates {(1, 2.1) (2, 3.5) (3, 6.8) (4, 9.4)
};

\addplot [color=black, only marks, mark=*, mark size=2] coordinates {(1, 2.5) (2, 4.0) (3, 7.5) (4, 10.2)
};

\legend{Low SNR, Medium SNR, High SNR, Theoretical Bound}
\end{axis}
\end{tikzpicture}
\caption{CRB approximation error for angle, range, velocity, and RCS parameters at low, medium, and high SNR, compared with the theoretical upper bound.}
\label{fig:crb_approximation}
\end{figure}

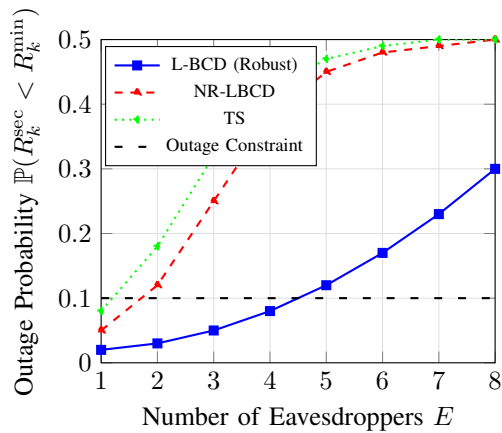
\begin{figure}[thbp!]
\centering
\begin{tikzpicture}
\begin{axis}[
width=0.77\columnwidth,
xlabel={Number of Eavesdroppers $E$},ylabel={Outage Probability $\mathbb{P}(R_k^{\mathrm{sec}} < R_k^{\min})$},xmin=1, xmax=8,ymin=0, ymax=0.5,xtick={1,2,3,4,5,6,7,8},ytick={0,0.1,0.2,0.3,0.4,0.5},legend style={at={(0.01,0.98)}, anchor=north west, legend columns=1, font=\scriptsize},grid=both,grid style={line width=.1pt, draw=gray!10},major grid style={line width=.2pt,draw=gray!25},every axis plot/.append style={thick}
]

\addplot [color=blue, solid, mark=square*, mark size=1.5] coordinates {(1, 0.02) (2, 0.03) (3, 0.05) (4, 0.08) (5, 0.12) (6, 0.17) (7, 0.23) (8, 0.30)
};

\addplot [color=red, dashed, mark=triangle*, mark size=1.5] coordinates {(1, 0.05) (2, 0.12) (3, 0.25) (4, 0.38) (5, 0.45) (6, 0.48) (7, 0.49) (8, 0.50)
};

\addplot [color=green, dotted, mark=diamond*, mark size=1.5] coordinates {(1, 0.08) (2, 0.18) (3, 0.32) (4, 0.42) (5, 0.47) (6, 0.49) (7, 0.50) (8, 0.50)
};

\addplot [color=black, loosely dashed, mark=none, mark size=1.5] coordinates {(1, 0.10) (2, 0.10) (3, 0.10) (4, 0.10) (5, 0.10) (6, 0.10) (7, 0.10) (8, 0.10)
};

\legend{L-BCD (Robust), NR-LBCD, TS, Outage Constraint}
\end{axis}
\end{tikzpicture}
\caption{Secrecy-outage probability versus number of eavesdroppers $E$ for robust L-BCD, NR-LBCD, and TS, with the target constraint $\epsilon_{\mathrm{out}}=0.1$.}
\label{fig:outage_probability}
\end{figure}

\begin{figure}[thbp!]
\centering
\begin{tikzpicture}
\begin{axis}[width=0.77\columnwidth,xlabel={Phase Shifter Resolution (bits)},ylabel={Performance Loss (\%)},xmin=1, xmax=6,ymin=0, ymax=25,xtick={1,2,3,4,5,6},ytick={0,5,10,15,20,25},legend style={at={(0.98,0.98)}, anchor=north east, legend columns=1, font=\footnotesize},grid=both,grid style={line width=.1pt, draw=gray!10},major grid style={line width=.2pt,draw=gray!25},every axis plot/.append style={thick}
]

\addplot [color=blue, solid, mark=square*, mark size=1.5] coordinates {(1, 22.5) (2, 12.8) (3, 7.2) (4, 4.1) (5, 2.3) (6, 1.3)
};

\addplot [color=red, dashed, mark=triangle*, mark size=1.5] coordinates {(1, 18.3) (2, 10.5) (3, 6.1) (4, 3.5) (5, 2.0) (6, 1.1)
};

\addplot [color=black, loosely dotted, mark=none, mark size=1.5] coordinates {(1, 24.8) (2, 14.2) (3, 8.1) (4, 4.6) (5, 2.6) (6, 1.5)
};

\addplot [color=gray, only marks, mark=*, mark size=3] coordinates {(3, 7.2)
};

\legend{Secrecy Rate Loss, Sensing Accuracy Loss, Theoretical Bound, Practical Choice}
\end{axis}
\end{tikzpicture}
\caption{Secrecy-rate and sensing-accuracy loss versus SIM phase-shifter resolution $b$ (bits), including the theoretical loss bound and a practical 3-bit operating point.}
\label{fig:quantization_impact}
\end{figure}
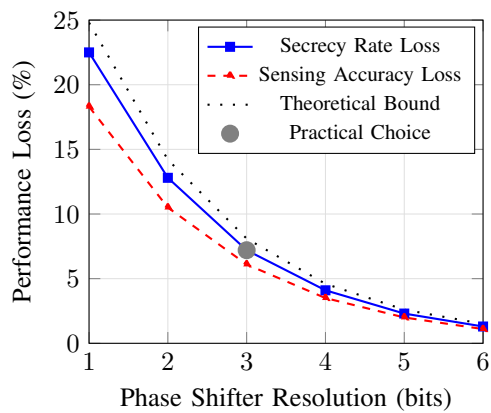


\section{Conclusion}
\label{sec:conclude}
We studied a multi-functional downlink system that integrates secure communication and sensing using a multi-layer SIM, and formulated a multi-objective optimization to trade off maximizing the multiuser sum secrecy rate and minimizing the sensing CRB. To solve the resulting nonconvex problem we developed an enhanced L-BCD algorithm that alternately optimizes hybrid (analog/digital) beamformers, robust artificial-noise covariance, discrete multi-layer SIM phase shifts for sensing/communication/channel estimation, and joint time--power allocation, while handling probabilistic secrecy constraints via Bernstein-type inequality reformulations for robustness to channel uncertainty. Simulations confirm fast convergence, substantial gains over benchmarks (approximately 30--40\%) versus a single-layer RIS with the same element count, and reliable satisfaction of outage-based secrecy constraints, demonstrating a practical, high-performance framework for SIM-assisted integrated sensing and communications.

{\footnotesize
\bibliographystyle{IEEEtran}

}

\end{document}